\newcommand\Tstrut{\rule{0pt}{0.5cm}}   
\newcommand*{\addFileDependency}[1]{
  \typeout{(#1)}
  \@addtofilelist{#1}
  \IfFileExists{#1}{}{\typeout{No file #1.}}
}
\DeclareMathOperator*{\argmin}{arg\,min}
\newtheoremstyle{newthmstyle}%
  {\topsep}
{\topsep}
{\itshape}
{0pt}
{\bfseries\color{blue}}
{. }
{0pt}
{}
\theoremstyle{newthmstyle}
\newtheorem{lemma}{{\color{blue}Lemma}}
\newtheorem{definition}{Definition}
\newtheorem{remark}{{\color{blue}Remark}}
\newtheorem{prop}{{\color{blue}Proposition}}
\newtheorem{thm}{{\color{blue}Theorem}}
\newtheorem{appprop}{{\color{blue}Proposition}}
\newtheorem{applemma}{{\color{blue}Lemma}}
\theoremstyle{definition}
\newif\ifold
\newif\ifcalc
\begin{document}
\title{\vspace{-0.0cm}\Large Specialization, Complexity, and Resilience in Supply Chains\footnote{We particularly thank Pol Antr{\`a}s, Vasco Carvalho, Matteo Escud\'e, Mathieu Parenti, and Alireza Tahbaz-Salehi for many discussions that have been instrumental to the inception of this paper. We also thank  David Berger, Jan Eeckhout, Matt Elliott, Gene Grossman, David H\'emous, Cosmin Ilut, Gregor Jarosch, Yasutaka Koike-Mori, Andrea Lanteri, Jennifer La'O, Fabrizio Leone (discussant), Hugo Lhuillier, Salvatore Lo Bello, Rocco Macchiavello, Isabelle Mejean, Paolo Martellini, Guido Menzio, Marek Pycia, Stan Rabinovich, Alejandro Sabal, Edouard Schaal, Armin Schmutzler, Mathieu Taschereau-Dumouchel, Farid Toubal, Venky Venkateswaran, and Daniel Xu for their feedback. This paper also benefited from seminars at the University of Zurich, PSE and Sciences-Po, Duke, UNC, Penn State, NYU, Columbia, Cambridge and UC Louvain as well as presentations at the ``Workshop on International Economic Networks'' in Vienna, ``Trade, Value Chains and Financial Linkages in the Global Economy'' conference at the Bank of Italy, the CEPR - Naples Trade \& Development Workshop, the Barcelona Summer Forum Workshop on ``Networks and Firm Dynamics in Macro and Trade'',  SED and the GEN Workshop. Alessandro Ferrari gratefully acknowledges financial support from the Swiss National Science Foundation (grant number 100018-215543) and from the Severo Ochoa Programme for Centres of Excellence in R\&D (Barcelona School of Economics CEX2024-001476-S). The views expressed in this paper are those of the authors and do not necessarily reflect the views of the Bank of Italy.}}
\author{\color{blue}\large{Alessandro Ferrari}\color{black}
\\ \small UPF, CREi, BSE \& CEPR \and \color{blue}\large{Lorenzo Pesaresi}\color{black}
\\ \small  Bank of Italy}
\date{\today}


\maketitle
\vspace{-30pt}
\begin{abstract} 
    We study how product specialization choices affect supply chain resilience. We propose a theory of supply chain formation in which only compatible inputs can be used in final production. Intermediate producers choose how much to specialize their goods, trading off higher value added against a smaller pool of compatible final producers. Final producers operate complex supply chains, requiring multiple complementary inputs. Specialization choices determine how quickly final producers can replace suppliers after disruptions, and thus supply chain resilience. In equilibrium, production inputs are over-specialized due to a novel \textit{network externality}. Intermediate producers fail to internalize how their specialization choices affect the likelihood that final producers source all required inputs, and therefore the lost value added from complementary inputs if production halts. As a result, supply chains are more productive in normal times but less resilient than socially desirable. We characterize the optimal transfer that restores the efficient allocation and show that non-fiscal interventions, such as compatibility standards, are generally welfare-enhancing.


\end{abstract}

\begin{raggedright} Keywords: Supply chains, specialization, product design, resilience \\
JEL Codes: D21, L14, L22, L23\\
\end{raggedright}
\newpage

\section{Introduction}
In modern economies, production is organized around long and complex supply chains. Advances in information and transportation technologies enable firms to source specialized inputs from a large number of suppliers. While supply chains bring about substantial productivity gains relative to in-house production, concerns are mounting about the frequency and economic costs of their disruptions. Often, a disruption to a single supplier creates shortages of key inputs, temporarily halting production altogether. These halts impose consumption losses, which have sparked policy interest in the notion of supply chain \textit{resilience}: the speed at which final production recovers following a disruption.\footnote{In June 2021, the Biden-Harris Administration instituted the \textit{Supply Chain Disruption Task Force} ``to provide a whole-of-government response to address near-term supply chain challenges to the economic recovery'' (see \cite{whitehouse}).}  Despite growing policy interest, the determinants of supply chain resilience are still not well understood.

In this paper, we develop a new theory of supply chain formation centered on three key elements: (i) complementarities across inputs, allowing for production halts following single-supplier disruptions, (ii) compatibility frictions, reflecting difficulties in replacing specialized inputs, and (iii) search frictions, capturing the time required for input procurement. Our framework is built around a technological constraint: inputs must be \textit{compatible} with the production processes in which they are used. Intermediate producers choose the degree of specialization of their goods. Higher specialization increases the value added of the good but reduces the pool of compatible final producers. Final producers operate \textit{complex} supply chains, defined as  production processes requiring multiple complementary inputs. Specialization
choices determine how quickly final producers can replace suppliers after disruptions, and
thus supply chain resilience. 
This framework allows us to analyze the sources of inefficiency affecting supply chain resilience and address normative questions related to the optimal organization of supply chains. 
In our economy, higher input specialization increases the value of production in normal times, but results in a higher cost of disruptions, as replacing specialized inputs takes longer.
Endogenous specialization choices thus shape the equilibrium balance between productivity and resilience.

We start by characterizing the equilibrium in a static economy where heterogeneous intermediate producers choose the degree of specialization of their goods. The specialization decision is modeled as in  \cite{Menzio2023_ProductDesign}, where
optimal specialization solves a trade-off between the value added of the good and its compatibility with prospective buyers.
The notion of compatibility is stochastic: higher specialization reduces the probability that an intermediate good is compatible with a final producer's technology. Ex-ante identical final producers must source multiple complementary inputs in the market to produce.
The markets for intermediate goods display realistic sourcing frictions. Each final producer contacts a finite number of intermediate producers, governed by the degree of search frictions. Within this set, the final producer selects the suppliers which are compatible with its production process -- if any exist -- and requests a quote from each of them.\footnote{According to a McKinsey report, "\textit{At most organizations [\dots], it takes about three months to complete a single supplier search, with a sourcing professional logging more than 40 hours of work -- and yet able to consider only a few dozen suppliers from a total population of thousands.}" \citep{McKinsey2021suppliers}. We interpret this evidence as indicative of search and compatibility frictions in intermediate markets.} Intermediate producers respond by submitting sealed bids characterized by the surplus (profits) offered to the final producer. The final producer trades with the intermediate firm offering the highest surplus, provided that it exceeds a reservation surplus.
Whenever the procurement search is unsuccessful for any of the complementary inputs, the entire production process halts. 

In this static framework, we highlight the key inefficiency arising from endogenous specialization when production is complex, i.e., multiple inputs are required to produce a final good. Intermediate producers choose their optimal specialization by trading off the private surplus they obtain from each transaction against the probability of trading. Crucially, they fail to internalize how their specialization choices affect the likelihood that final producers source all the required inputs, and therefore the unrealized value added from complementary
inputs if production halts.  

For example, consider electric vehicle (EV) production. All EVs require a battery, but not all battery types are equally compatible with the production of each model. High-end EV models are more compatible with type-A batteries than with type-B batteries, whereas type-C batteries are generally incompatible.\footnote{Concretely, type-A batteries are lithium nickel cobalt aluminium (LNCA), type-B batteries are lithium iron phosphate (LFP), and type-C batteries are nickel-metal hydride (NiMH).} Hybrid EVs are incompatible with type-A  batteries and are more compatible with type-C batteries than with type-B batteries. The higher the compatibility of the battery used, the higher the value of the EV model for final consumers. Despite their critical role, batteries are typically sourced from external suppliers. Similarly, each EV model requires other externally sourced components, such as tires, cameras, and carbon fiber.
Due to compatibility frictions, an EV producer may, in a given period, only locate battery suppliers offering an incompatible type, in which case production halts entirely. This mismatch arises from suppliers' specialization decisions: type-B batteries, though of lower value added, are compatible with most models, whereas type-A and type-C batteries are restricted to high-end applications and hybrid EVs, respectively. Our theory predicts that battery manufacturers, when choosing their degree of specialization, do not account for the lost value added from complementary inputs when car makers cannot source compatible batteries. If EV production halts, suppliers of components such as tires and cameras also fail to generate value added.
This example illustrates a \textit{network externality}. We show that this negative externality, arising from firms' specialization decisions, induces equilibrium \textit{over-specialization} relative to the constrained-efficient allocation. 
Importantly, specialization choices interact with production complexity: if EVs were composed only of batteries -- that is, if the production process were not complex -- the network externality would vanish, and the equilibrium would be constrained-efficient.

The network externality arises from the combination of three elements: (i) search and compatibility frictions, which prevent final producers from finding required inputs with certainty, (ii) endogenous specialization decisions, which determine the equilibrium level of compatibility, and (iii) complementarity among inputs, so that failure to source even a single input halts production. To the best of our knowledge, our model is the first to jointly incorporate these three elements.

Next, we consider a dynamic version of our model to account for the long-term nature of supplier relations and study supply chain resilience. Once formed, supplier relations last until the final producer -- or a supplier link -- experiences a disruption, which happens stochastically.
Long-term relations introduce an additional motive for specialization: tailoring products to the needs of existing customers (\textit{customization}). 

This dynamic framework delivers a novel insight. Different from the existing literature on supply chain fragility, the \textit{weakest link} in our model is not the input with the highest disruption probability. Rather, it is the input for which the joint probability of supplier disruption and failed replacement is greatest. Since higher disruption probability reduces the return from specialization, the weakest link may even be an input with relatively low disruption probability, but that is very difficult to replace when missing, due to high equilibrium specialization.

From a normative perspective, in addition to the network externality, endogenous specialization choices in a dynamic context introduce a \textit{search externality}  \citep{pissarides2000}: intermediate producers do not internalize the effect of their specialization on their probability of meeting a buyer, as more specialized goods raise the share of searching final producers.
This externality pushes the equilibrium towards inefficiently low specialization. 
However, when production is complex, the network externality dominates the search externality, so the equilibrium continues to exhibit over-specialization relative to the planner's allocation.

Importantly, the dynamic framework provides a precise and welfare-relevant definition of supply chain resilience: the probability that final production is restored in a given period following a disruption. This theoretical definition of resilience closely aligns with the notion of ``ability to return to normal operations over an
acceptable period of time, post-disruption'' commonly used among risk management practitioners \citep{miroudot2020resilience}. 
The model also allows us to characterize the determinants of supply chain resilience, namely the degree of search frictions, average specialization, disruption frequency, and production complexity. It follows that equilibrium over-specialization implies that supply chains are \textit{less resilient than efficient}. We view this result as evidence that policy concern for supply chain resilience is warranted, given that the latter is both welfare-relevant and inefficient in our economy.

In our model, the resilience and the average productivity of supply chains jointly determine output and welfare. As a result, endogenous specialization choices shape the equilibrium balance between productivity and resilience.
Supply chains relying on more specialized inputs are more productive in normal times, when they function smoothly.  However, when a disruption occurs, higher specialization makes it harder to restore production in a timely fashion. Due to the network externality,  the equilibrium allocation tilts excessively toward productivity. Consequently, a social planner would optimally forgo some surplus in normal times in exchange for more resilient supply chains. We show that our conclusions are robust to allowing final producers to bargain over the surplus, direct their search within each input market, hold inventories, source the same input from multiple suppliers, produce some of the inputs in-house, and write contracts in which suppliers' remuneration is not tied to production. 


We conclude by studying the normative implications of our framework. We show that a planner can decentralize the constrained-efficient allocation through a lump-sum transfer to \textit{active} final producers in both the static and dynamic economies. This transfer reduces the reservation surplus -- or, equivalently, raises the reservation price -- of final producers. Under price posting, a higher reservation price leads intermediate producers to optimally post higher prices. However, the increase in posted prices is heterogeneous across intermediate producers, depending on their position in the productivity ladder: lower productivity firms respond more strongly to changes in the reservation price. Hence, a lump-sum transfer to final producers induces a heterogeneous increase in profits per transaction for intermediate producers, as if the policy were targeted towards low-productivity firms that are farther from their efficient specialization.
As a result, intermediate producers have more \textit{skin in the (final production) game} and reduce their privately optimal specialization to the efficient level.
Since the optimal policy requires taxing households to subsidize profit-making firms, it might be politically infeasible. Therefore, we also consider a non-fiscal, regulatory policy instrument: the introduction of a \textit{standard} or minimum interoperability requirement.\footnote{This kind of policy is common in international regulation, as exemplified by the recent EU Directive on USB-C chargers.} We show that standards are welfare-improving and can be a powerful and implementable tool to address over-specialization. 

\paragraph{Related Literature}
This paper is related to several strands of literature. First, our interest in supply chains speaks to the growing literature on production network formation and its role in the economy's shock-absorption capacity. \cite{levine2012production}, \cite{baqaee2018cascading}, \cite{Elliott2022_Fragility}, \cite{acemoglu2024macroeconomics}, \cite{carvalho_bottlenecks}, and \cite{capponi2024resilience} study the \textit{fragility} of economies in which supply chains feature coordination problems due to complementarities, public goods problems, or market incompleteness. \cite{carvalho2021supply}, \cite{alessandria2023aggregate}, \cite{ferrari2022inventories}, \cite{carreras2021increasing}, and \cite{carreras_ferrari} study the dynamic response of supply chains to disruptions. \cite{arkolakis2010penetration}, \cite{eaton2022twosided},  \cite{mejean2022search}, \cite{fontaine2023frictions}, \cite{mejean2023RS}, \cite{esala2024search}, \cite{huang2022heterogeneity}, \cite{bai2024causal}, \cite{demir2024ring}, and \cite{arkolakis2025production} study the implications of search and matching frictions in firm-to-firm trade.
Closest to our work, \cite{grossman2023resilience} explores market failures in vertical supply chains where firms can invest to reduce their risk of disruptions, and the terms of trade in firm-to-firm transactions are bargained over. Existing studies typically use static models where the fragility of supply chains is only shaped by their \textit{robustness} to shocks -- the ability to maintain operations in the face of a shock.
In contrast, we propose a dynamic framework that allows us to disentangle robustness from \textit{resilience} -- the ability to restore production after a shock.

Our work is also closely related to the literature on the efficient number of varieties, whose foundations date back to \cite{spence1976product}, \cite{dixit1977monopolistic}, and \cite{mankiw1986free}. Recent advances of these traditional models include \cite{zhelobodko2012monopolistic}, \cite{parenti2017toward}, \cite{dhingra2019monopolistic}, and \cite{bajoetal}. \cite{grossman2023supply} proposes an application to supply chains. We show that many of the insights from this literature have tight analogs in our search model with endogenous specialization choices and price posting, chief among them the offsetting effects of \textit{appropriability} and \textit{business-stealing} externalities.

Both methodologically and conceptually, our work is close to the strands of literature on product design \citep{kiyotakiwright1993monetary,bar2012search,bar2023targeted,Menzio2023_ProductDesign,albrecht2023vertical}, supply chain formation \citep{oberfield2018IO,antras2020geography,boehm2020misallocation,kopytov2021endogenous,kim2023supplychain,antras2023austrian}, and specialization \citep{rauch1999networks,nunn2007relationship, barrot2016input, mejean2023RS, Ekerdt,Chen2025compatibility}. Relative to these contributions, we study the effects of specialization choices on the resilience of supply chains. 
Methodologically, the paper most closely related to ours is \cite{Menzio2023_ProductDesign}, which studies the endogenous response of product specialization to declining search frictions in the presence of heterogeneous consumer preferences, and derives conditions under which price dispersion and competition remain constant as search frictions vanish.
We apply the specialization problem of \cite{Menzio2023_ProductDesign} to a production network context, in which buyers are not consumers with heterogeneous preferences but final producers with idiosyncratic technological requirements and a complex production function.
Our key novelty is to study product design choices when goods are complements in production. This allows us to highlight inefficiencies associated with supply chain formation that are absent in final-consumer markets. Our model nests that of \cite{Menzio2023_ProductDesign} when final production is not complex, that is, when only a single input is required to produce a final good.

Finally, our model is close in spirit to the large literature on the hold-up problem and investment incentives in firm-to-firm transactions \citep[see][]{klein1978vertical,williamson1979transaction,rogerson1992,hartmoore,tirole,cantillon2004procurement}, as well as on the organization of production \citep{antras2003firms,antras2006contractual,antras2013organizing, boehm2022impact,boehm2024network}.
We study a context in which firms make sunk investments in specialization and cannot appropriate the full returns from their investment, yet our price-posting mechanism generates bilaterally efficient outcomes in frictional markets with simultaneous search. 

\section{Static Model}\label{static_sec}
In this section, we consider the sourcing problem of a final producer that needs to source its production inputs anew from the market. Focusing on the sourcing problem of a final producer allows us to single out the specific stage at which the specialization of intermediate inputs plays a role in driving the supply chain dynamics. As highlighted in the business literature, the more specialized intermediate inputs are, the more difficult it is for final producers to source a compatible input \citep{miroudot2020resilience,Sanz2023resilience}. We are interested in understanding how intermediate producers choose the specialization of their inputs and whether such specialization decisions are efficient. To this end, we propose a static model of sourcing from frictional markets when intermediate producers make endogenous product design decisions.

\subsection{Environment}
The economy is populated by a representative household, a measure $1$ of perfectly competitive final producers and a measure $m$ of intermediate producers. Final producers differ in technological requirements, whereas intermediate producers are heterogeneous in their productivity $z$. The market for the consumption good is perfectly competitive, whereas the markets for
intermediate goods are frictional.
The consumption good is the numeraire. The model is static.

\subparagraph{Representative household.}
The representative household has linear preferences over an aggregate consumption good and convex disutility from hours worked. Aggregate consumption is a linear aggregator of quality-weighted consumption goods supplied by final producers, while disutility from labor is logarithmic, scaled by a parameter $\psi>0$. 
Members of the representative household choose whether to purchase the indivisible consumption good offered by each final producer $i$ ($C_i$), with associated quality $\mathcal{Q}_i$ at price $P_i$, and how many hours $\ell$ to supply at each wage rate $w$. Formally, they solve the following utility maximization problem:
\begin{align} \label{ump}
        \max_{\substack{
C_i \in \{0,1\},\; i \in [0,1] \\
 \ell \in [0,1]}} \ \mathcal{U} &=  C + \psi \log\left(1-\ell\right)
        &\quad \text{s.t.} \ \  & \int_0^1 P_i C_i \ di = w \ell + \overline{\Pi}, \\ \nonumber
        && &C = \int_0^1\mathcal{Q}_i C_i \ di,    
\end{align}   
where $\overline{\Pi}$ represents profit rebates.

\subparagraph{Final producers.}
    Each final producer $i$ operates a Leontief production function combining $N>1$ complementary and indivisible inputs into one unit of consumption good:
\begin{align} \label{def:prod_fnct}
Y_i = \mathbbm{1}\{\min\{y_1,\dots,y_N\}=1\}.
\end{align}
Following \cite{Elliott2022_Fragility}, we refer to the number of inputs required for final production, $N$, as the \textit{complexity} of the production process. 
  We think of the $N$ inputs as defining a recipe for final production in the spirit of \cite{oberfield2018IO}. We assume that these inputs cannot be produced in-house and must be sourced from a frictional market.\footnote{We extend the model to accommodate in-house production in Section \ref{sec:discussion}. We consider the optimal choice of $N$ in an extension in Appendix \ref{app:end_N}.}
  
  Final producers are ex ante identical but have idiosyncratic technological requirements: not all intermediate input varieties are compatible with their production process. Formally, each variety $v$ of input $j$ is compatible with the production process of a final producer $i$ with probability $\phi_{v(j)} \in (0,1)$. If compatible, the input variety can contribute positive match surplus to final producer $i$, $A_{v(j)} > 0$. Otherwise, the variety cannot be used in its production process, as in \cite{kiyotakiwright1993monetary}. 
  Because $\phi_{v(j)}$ is independent of $i$, final producers are ex ante identical. However, the realized technological compatibility is stochastic: a given variety $v$ of input $j$ may end up being compatible with final producer $i$, but not with final producer $i^\prime$. 
  We interpret heterogeneity in realized compatibilities as reflecting idiosyncratic technological requirements.

Output quality $\mathcal{Q}_i$ is an additively separable function of the match surpluses associated with each input $j$ that the final producer is able to source:\footnote{In Appendix \ref{app:gen_quality} we develop the model with a general quality function.}
\begin{align} \label{def:quality_fnct}
    \mathcal{Q}_i = \sum_{j=1}^N A_j,
\end{align}
where the match surplus $A_j$ equals the value added of input $j$ in final production.
In turn, the higher the match surplus with the inputs sourced, the higher the quality of the consumption good. Notice that higher complexity of the production process increases the quality of the consumption good. This captures in reduced form the gains from organizing production around supply chains, resulting from technological complementarities across different inputs.

For example, suppose a car maker needs to source a battery and a tire to assemble an electric SUV. Yet, not all types of batteries and tires it can find in the market fit its production process equally well.  
A type-A battery would be a better fit than a type-B battery, whereas a type-C battery would be incompatible. Similarly, a large tire would fit the production process better than a small tire. 
 The quality function (\ref{def:quality_fnct}) assumes that the match surplus with different (compatible) inputs is perfectly substitutable, even though the inputs themselves are perfect complements. Keeping with our example, a producer could make a $\$50,000$-worth SUV with a type-A battery and a small tire, or with a type-B battery and a large tire. 
 
Each final producer $i$ obtains profits:
\begin{align} \label{final_profits}
   \pi_i = \left(P_i-\sum_{j=1}^N p_j\right)\mathbbm{1}\{Y_i=1\}.
\end{align}
Profits equal the difference between the price of the consumption good and the sum of the prices paid for the $N$ intermediate inputs sourced. If the final producer is unable to source all the $N$ complementary inputs, production halts, as implied by (\ref{def:prod_fnct}).

\subparagraph{Intermediate market structure.}
Intermediate good markets are characterized by search and compatibility frictions. Search frictions are represented by a finite number $n \sim \text{Poisson}(\lambda)$ of intermediate producers that each final producer is able to meet, where $n \in \mathbb{N}$ and $\mathbb{E}[n]=\lambda$. Hence, $\lambda$ represents the \textit{average number of meetings} per final producer.
Compatibility frictions are represented by an endogenous distribution of compatibility probability with mean $\bar{\phi}$. Hence, $\bar\phi$ represents the \textit{average probability that a potential supplier is compatible} with the final production process.
As a result, the number of compatible intermediate producers met by a final producer is Poisson-distributed with mean $\lambda \bar{\phi}$.
\footnote{For this result to hold exactly, we assume throughout that the variance of $\phi$ is small relative to its mean ($\bar\phi$) \citep{Menzio2023_ProductDesign}.}
Therefore, $\lambda\bar\phi$ represents the \textit{average number of compatible potential suppliers} per final producer.
It follows that the probability that a final producer finds a compatible input, or \textit{input finding probability}, is given by: 
\begin{align} \label{finding_prob}
    f=1-\exp\{-\lambda \bar{\phi}\}.
\end{align}
Once the compatible intermediate producers -- if any exist -- are selected, final producers request a quote from them. Intermediate producers submit the quote as a sealed bid characterized by a surplus $x$ offered to the final producer, as in a first-price auction with an unknown number of bidders. This represents the amount of surplus the final producer extracts from the transaction. We denote the endogenous distribution of surplus offered by $G(x)$. Final producers choose the intermediate producer offering the highest surplus.  In Appendix \ref{app:x_interpretation}, we show that this procurement protocol (sealed-bid auction) is equivalent to surplus posting by intermediate producers, as in \cite{Menzio2023_ProductDesign}.


\subparagraph{Intermediate producers.}  
Intermediate producers differ in productivity $z \sim \Gamma(z)$, where $\Gamma$ is a continuous distribution with support $[\underline{z},\bar{z}]$ and density $\gamma(z)$. 
Each producer can manufacture one variety of each input through distinct product lines.\footnote{Since intermediate producers obtain positive expected profits in equilibrium in each product line, they optimally choose to compete in all $N$ inputs.}
Intermediate producers make endogenous product design decisions by choosing the specialization $s$ of their input variety, as well as a surplus $x$ offered to final producers. Let the match surplus from the input variety with compatible final producers be an increasing and additively separable function of the degree of specialization and productivity: $A = A(s;z)$.
On the one hand, higher specialization increases the match surplus from the input variety with compatible final producers. Formally, $ A^\prime(s)>0, \ A^{\prime\prime}(s) <0$.  On the other hand, higher specialization reduces the share of compatible final producers. Formally,
$\phi=\phi(s)$, $\phi^\prime(s)<0$,  $\phi^{\prime\prime}(s)>0$. Optimal specialization trades off a higher match surplus from the input variety against a lower share of compatible final producers.

For each intermediate good, market tightness is defined as the ratio of the measure of final producers to the measure of intermediate producers, $\theta=\frac{1}{m}$. An intermediate producer meets an expected number $m_k$ of final producers who are in contact with $k=0,1,2,\dots$ other \textit{compatible} intermediate producers, where:
\begin{align*}
    %
      m_k =& \  \theta \sum_{n=k+1}^\infty n \ \underbrace{\vphantom{\frac{(n-1)!}{k! (n-k-1)!} (\bar{\phi})^k (1-\bar{\phi})^{n-k-1}}\frac{\lambda^n e^{-\lambda}}{n!}}_{\text{Prob. buyer meets n sellers}} \underbrace{\frac{(n-1)!}{k! (n-k-1)!} (\bar\phi)^k (1-\bar\phi)^{n-k-1}}_{\text{Prob. k other sellers are compatible}} = \ \theta \lambda  \ e^{-\lambda\bar{\phi}} \frac{(\lambda\bar{\phi})^k}{k!},
\end{align*}
This statistic summarizes the expected number of final customers \textit{binned} by how many other compatible intermediate competitors they are in contact with. The final equality follows from the Poisson approximation of the binomial distribution.
For example, an intermediate producer of good $j$ may expect to meet 3 final producers who each are in contact with 4 other compatible intermediate producers of good $j$ (on top of itself): this amounts to $m_4=3$. 

The expected operating profits of an intermediate producer with productivity $z$ choosing specialization $s$ and offering surplus $x$ to final producers equal:
\begin{align*}
\Pi(s,x;z,N) =& \ \left(\sum_{k=0}^\infty m_k \ \phi(s)  \ G(x)^k\right)f^{N-1}\left[A(s;z)-x\right] 
= \ \underbrace{\vphantom{\left[A(s;z)-x(z)\right]}\theta  \lambda}_{\substack{\text{\vphantom{\big|}exp \#} \\ \text{meetings}}} \underbrace{\mathcal{P}(s,x;N)}_{\substack{\text{trading prob.\big|} \\ \text{meeting}}} \underbrace{\vphantom{\mathcal{P}(s,x)}\left[A(s;z)-x\right]}_{\substack{\text{\vphantom{\big|} unit profit} \\ \equiv \  p(s,x;z)}}. 
\end{align*}
Expected operating profits are composed of three elements. First,
every intermediate producer meets an expected number $\theta\lambda$ of final producers, as governed by the extent of search frictions. Second, an intermediate producer choosing specialization $s$ and offering surplus $x$ trades with the final producers it meets with probability $\mathcal{P}(s,x;N)$. Finally, an intermediate producer with productivity $z$ and specialization $s$ offering surplus $x$ makes a unit profit of $p(s,x;z)$ on each completed transaction, where $p(s,x;z) \equiv A(s;z)-x$. The match surplus $A(s;z)$, which determines the quality of the consumption good, equals the value added created by the match between a final producer and an intermediate producer with productivity $z$ and specialization $s$. In turn, $x$ represents the surplus accruing to the final producer.

The endogenous compatibility frictions described in the previous paragraph are subsumed into the conditional trading probability $\mathcal{P}(s,x;N)$, which is itself a composite of three elements:
\begin{align*}
   \mathcal{P}(s,x;N) \equiv {\underbrace{\vphantom{\exp \left\{-\lambda \bar{\phi}\left[1-G(x(s,p))\right]\right\}} \phi(s)}_{\substack{\text{prob. compatible}}}} \ 
   {\underbrace{\exp \left\{-\lambda \bar{\phi}\left[1-G(x)\right]\right\}}_{\substack{\text{prob. best among compatible} \\ \text{contacted suppliers}}}} \ 
   { \underbrace{\vphantom{\exp \left\{-\lambda \bar{\phi}\left[1-G(x)\right]\right\}} f^{N-1}.}_{\substack{\text{prob. other complementary } \\ \text{inputs are sourced}}}}
\end{align*}
Suppose an intermediate producer with specialization $s$, offering surplus $x$ meets a final producer. First, its input variety is compatible with the production process of the final producer with probability $\phi(s)$. Second, since the number of compatible intermediate producers per final producer is Poisson-distributed, the probability that the surplus offered $x$ is the highest among the set of compatible intermediate producers contacted by the final producer is given by $\exp\{-\lambda\bar\phi(1-G(x))\}$.\footnote{This matching process is similar to the one proposed by \cite{huang2022heterogeneity}. The key difference is that, in our setting, intermediate producers can influence the share of potential buyers via endogenous product design decisions.} Finally, the last term states that trade only occurs if all other $N-1$ inputs are also successfully sourced. This happens with probability $f^{N-1}$. Hence, the trading probability of an intermediate producer of given production input depends on the input finding probabilities of all its complementary inputs.


According to this market structure, we can specify the product design problem for intermediate inputs. Let $q(s), \, q^\prime(s)>0, \, q^{\prime\prime}(s)>0$ denote the labor requirement for specialization. The intermediate producer's problem reads:
\begin{align} \label{static:pmp}
    \max_{s,x} V(s,x;z,N) =  \Pi(s,x;z,N)-wq(s).
\end{align}
The intermediate producer chooses specialization and surplus to maximize the value of the firm, given by the expected operating profits minus the cost of specialization. The key trade-off it faces is that more specialized input varieties generate more surplus conditional on trading, but a lower likelihood of trading due to higher compatibility frictions. Convex labor requirements ensure an interior solution to optimal specialization.

\paragraph{Discussion} We conclude this section with a brief discussion of some important assumptions in our model.

We model the final production function as a Leontief aggregator of indivisible inputs. This is supported by evidence on low elasticity of substitution across production inputs at the firm level \citep{barrot2016input,boehm2019propagation}. 
Moreover, we show in Appendix \ref{app:ces_prod_function} that our final production setting may arise from a more general class of models. In particular, our setting is observationally equivalent to models in which the consumption good is homogeneous and produced using a constant-elasticity-of-substitution (CES) aggregator with complementary inputs and integer constraints.\footnote{In that context, sourcing inputs of higher value added would raise output quantity for given price -- rather than output price for given quantity. We further relax the assumption of a linear quality function in Appendix \ref{app:gen_quality}, where we allow for arbitrary substitutability or complementarity in the quality function.} The key insight is that, whenever the production function features \textit{any} form of complementarity across inputs, the inability to source even a single input halts the entire production process.


 We model search as random and simultaneous within narrowly defined markets. In other words, final producers can screen only a random subset of firms producing each desired input. This is in line with evidence on supplier replacement \citep{miyauchi2024matching}. Unlike matching models of the labor market, trading probabilities are not governed by a reduced-form, market-level matching function, but by micro-founded compatibility frictions. This reflects the role of technological compatibility -- rather than imperfect information or geographical distance -- as the key friction in supply chain formation. 
 The search behavior of final producers mirrors common procurement practices, in which firms request quotes from multiple potential suppliers and select the best quality-adjusted bid. 
 In Section \ref{sec:discussion}, we show that our results are robust to allowing final producers to direct their search within input markets, and that our price-posting protocol corresponds to the efficient benchmark of a more general bargaining framework.

\subsection{Equilibrium characterization}
We now define our equilibrium concept and characterize the equilibrium allocation. 
\begin{definition}[Static equilibrium]
A static equilibrium consists of:
\begin{itemize}
    \item Purchasing decisions, $C_i=C_i(P_i) \in \{0,1\}$ for each available consumption good $i$, and labor supply decisions, $\ell=\ell(w)$, that solve the utility maximization problem of the representative household (\ref{ump}), taking the price vector $\boldsymbol{P}$ and the wage rate $w$ as given;
    \item A reservation surplus $x_{0,j}$ for each input $j=1,\dots,N$ which makes a final producer indifferent between buying the input or not, taking the expected surplus offered by intermediate producers of other inputs $\mathbb{\hat{E}}[\boldsymbol{x}_{-j}]$ as given;
    \item A selection rule for the input provider of each input $j=1,\dots,N$ which maximizes profits of the final producer (\ref{final_profits}), taking the surplus offered $x$ by each of the compatible intermediate producers contacted as given;
    \item Policy functions of specialization and surplus offered to final producers, $s(z)$ and $x(z)$ $\forall z$, which solve the profit maximization problem (\ref{static:pmp}) of intermediate producers, taking the wage rate $w$ and the distribution of surplus offered to final producers $G(x)$ as given;
    \item Distribution of surplus offered to final producers $G(x)$ which is consistent with the policy functions $s(z)$ and $x(z)$;
    \item Labor market clearing pinning down the wage rate such that labor supply equals labor requirement in specialization: $\ell(w) = Nm \bar{q}$, where $\bar{q} = \int q(s(z))\gamma(z) \ dz$;
    \item Consumption good market clearing pinning down the price vector $\boldsymbol{P}$ such that supply and demand coincide for each final producer: $C_i(P_i) = Y_i, \forall i$.
\end{itemize}
\end{definition}

We now characterize the decision of each agent in our economy. 

\paragraph{Representative household.} 
The representative household optimally purchases the indivisible consumption good offered by final producer $i$ provided that its quality weakly exceeds its price, that is, $C_i = \mathbbm{1}\{\mathcal{Q}_i \geq P_i\}$. The optimal labor supply function is $\ell(w)=1-\frac{\psi}{w}$.

\paragraph{Final producers.}
The reservation surplus $x_{0,j}$ for each input $j=1,\dots,N$ is the minimum surplus offered that a final producer is willing to accept. As such, it solves an expected break-even condition, which makes the final producer indifferent between sourcing the intermediate input in the market and not producing at all.

\begin{lemma}[Static Reservation Surplus]
\label{lem:res_surplus}
The equilibrium reservation surplus equals: 
\begin{align}
\label{eq:res_surplus}
 \pi(x_{0,j};\mathbb{\hat{E}}[\boldsymbol{x}_{-j}]) = 0 
    & \iff x_{0,j} = 
    -(N-1) \mathbb{\hat{E}}[x^\star(z)] \quad \forall j,
\end{align}
where $\mathbb{\hat{E}}[x_n]$ is the expected surplus offered by intermediate producers of input $n$ conditional on matching with final producers. 
\end{lemma}
\noindent
Since, in equilibrium, final producers make positive expected profits on each input line, the reservation surplus for each input is negative whenever production is complex, i.e., $N>1$.\footnote{The only combination of reservation surplus and expected surplus offered upon matching that is consistent with a symmetric equilibrium is such that the former is negative and the latter is positive.} If and only if production is not complex, the reservation surplus equals zero.


When multiple compatible intermediate producers are contacted, final producers optimally select the input provider offering the highest surplus $x_j$ for each input $j=1,\dots,N$. This is because the surplus directly reflects the profit margin the final producer can earn on that input, already incorporating the trade-off between its value added and its price.

\paragraph{Intermediate producers.}
To solve the intermediate producer's problem (\ref{static:pmp}), we proceed by guessing and verifying that the optimal surplus offered to final producers is increasing in productivity, i.e., $x^\prime(z)>0$. The policy functions $s^\star(z)$ and $x^\star(z)$ solve the following system of differential equations:
\begin{align} \label{eq_x}
    & {x^\star}^\prime(z)=\lambda \phi(s^\star(z)) \gamma(z) \left[A\left(s^\star(\tilde{z});\tilde{z}\right)-x^\star(z)\right] ,\\[.1cm]
    \label{eq_s}
    & \theta \lambda \Pc(z;N)\bigg[A^\prime(s^\star(z))+\frac{\phi^\prime(s^\star(z))}{\phi(s^\star(z))} \left(A (s^\star(z);z)-x^\star(z)\right)\bigg]=w q^\prime(s^\star(z)),
\end{align}
with boundary condition $x^\star(\underline{z})=x_0$. 
We assume $q(s)$ is such that equation (\ref{eq_s}) is necessary and sufficient for characterizing optimal specialization.\footnote{This amounts to making sure that the SOC always holds. The SOC reads $w q^{\prime\prime}(s^\star(z))>\theta \lambda e^{-\lambda \hat{\phi}(z,\bar{z})}f^{N-1} \left[A^{\prime\prime}(s^\star(z))\phi(s^\star(z))+\phi^{\prime\prime}(s^\star(z))[A\left(s^\star(z);z\right)-x^\star(z)]+2\phi^\prime(s^\star(z))A^\prime(s^\star(z))\right]$. With a slight abuse of notation, throughout we denote by $A^\prime(s)$ the partial derivative of $A(s;z)$ with respect to $s$. By additive separability, this derivative is independent of $z$.} 

For ease of notation, we denote as $f(z) \equiv 1-e^{-\lambda \hat{\phi}(\underline{z},z)}$ the probability that the final producer meets at least one compatible intermediate producer with productivity lower than $z$. This object is informative on the probability that an intermediate producer with productivity $z$ faces some competition from lower-productivity competitors.
 Naturally,  $f(\bar z)$ equals the input finding probability $f$. Given this definition, we can characterize the optimal surplus offered $x^\star(z)$ by solving equation \eqref{eq_x}. 
\begin{lemma}[Surplus Offered] \label{lem:offered_surplus}
The optimal surplus offered by an intermediate producer with productivity $z$ equals the expected outside option of final producers when trading with the intermediate producer:
\begin{align} \label{eq_x_solved}
x^\star(z) = \left(1-f(z)\right) x_0 + f(z)\mathbb{E}_{\max\{\tilde{z}\}|\tilde{z}\leq z}[A\left(s^\star(\tilde{z});\tilde{z}\right)],
\end{align}
where the expectation is taken with respect to the productivity distribution of the highest-surplus-offering compatible intermediate producers with lower productivity than $z$ contacted by a final producer, conditional on its presence.
\end{lemma}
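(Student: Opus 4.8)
The plan is to treat the boundary-value problem (\ref{eq_x}) as a linear first-order ODE in $x^\star$, solve it explicitly with an integrating factor, and then read the resulting integral as the expected value of the final producer's best alternative. First I would rewrite (\ref{eq_x}) in standard linear form,
\[
x^{\star\prime}(z) + \lambda\phi(s^\star(z))\gamma(z)\,x^\star(z) = \lambda\phi(s^\star(z))\gamma(z)\,A(s^\star(z);z),
\]
so that the natural integrating factor is $e^{F(z)}$ with $F(z)\equiv\lambda\int_{\underline{z}}^{z}\phi(s^\star(u))\gamma(u)\,du$. Multiplying through, the left-hand side collapses to $\tfrac{d}{dz}\!\left[x^\star(z)e^{F(z)}\right]$; integrating from $\underline{z}$ to $z$ and imposing the boundary condition $x^\star(\underline{z})=0$ removes the constant of integration and gives
\[
x^\star(z)=\int_{\underline{z}}^{z} A(s^\star(\tilde{z});\tilde{z})\,\lambda\phi(s^\star(\tilde{z}))\gamma(\tilde{z})\,e^{-(F(z)-F(\tilde{z}))}\,d\tilde{z}.
\]
Writing $\hat{\phi}(\tilde{z},z)\equiv\int_{\tilde{z}}^{z}\phi(s^\star(u))\gamma(u)\,du$, the exponent equals $-\lambda\hat{\phi}(\tilde{z},z)$, so the kernel is exactly the weighted density $\gamma^\star(\tilde{z},z)\equiv\lambda\phi(s^\star(\tilde{z}))\,e^{-\lambda\hat{\phi}(\tilde{z},z)}\,\gamma(\tilde{z})$, and $x^\star(z)=\int_{\underline{z}}^{z} A(s^\star(\tilde{z});\tilde{z})\,\gamma^\star(\tilde{z},z)\,d\tilde{z}$.

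The substantive step is to identify this integral with the claimed expectation in (\ref{eq_x_solved}). Here I would exploit the antiderivative identity $\gamma^\star(\tilde{z},z)=\partial_{\tilde{z}}\,e^{-\lambda\hat{\phi}(\tilde{z},z)}$, which supplies the probabilistic reading directly: $\lambda\phi(s^\star(\tilde{z}))\gamma(\tilde{z})\,d\tilde{z}$ is the Poisson probability of contacting a compatible supplier of type $\tilde{z}$, while $e^{-\lambda\hat{\phi}(\tilde{z},z)}$ is the probability of contacting no compatible supplier with type in $(\tilde{z},z]$. Their product is precisely the density that $\tilde{z}$ is the \emph{most productive} compatible supplier contacted among those with type at most $z$. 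Since $x^\star$ is increasing (the guess being verified, and confirmed here by noting that the right-hand side of (\ref{eq_x}) is positive whenever the unit profit $A(s^\star(z);z)-x^\star(z)$ is nonnegative), the winning supplier of type $z$ is the highest-type compatible contact, so a final producer's best alternative is exactly this running maximum over types below $z$, with input value $A$ evaluated at that maximum. This delivers the right-hand side of (\ref{eq_x_solved}).

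The one point that requires care is that $\gamma^\star(\cdot,z)$ integrates to $1-e^{-\lambda\hat{\phi}(\underline{z},z)}<1$ rather than to one, so it is not a proper density over $[\underline{z},z]$. I would resolve this by making the complementary event explicit: with probability $e^{-\lambda\hat{\phi}(\underline{z},z)}$ the final producer contacts no compatible supplier of type below $z$, in which case the outside option is zero and contributes nothing. Hence the integral already represents the full expectation over the (defective) distribution of the best outside option, which completes the identification. I expect this measure-theoretic bookkeeping—making the zero-outside-option atom explicit and confirming that $\gamma^\star(\tilde{z},z)$ is the distribution of the running \emph{maximum} rather than of a generic compatible contact—to be the main conceptual obstacle, whereas the ODE solution itself is entirely routine.
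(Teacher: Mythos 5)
Your proposal is correct and follows essentially the same route as the paper's proof: rewrite (\ref{eq_x}) as a linear first-order ODE, apply the integrating factor $e^{\lambda\int_{\underline{z}}^{z}\phi(s^\star(u))\gamma(u)du}$, impose $x^\star(\underline{z})=0$, and identify the resulting kernel with the weighted density of the most productive compatible competitor below $z$. The only addition is that you make the defective-density bookkeeping explicit in the main argument, whereas the paper handles this interpretive point separately in its auction-theoretic appendix; the mathematics is the same.
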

\noindent
The intuition behind Lemma (\ref{lem:offered_surplus}) is instructive. Intermediate producers post their surplus offered before meeting the final producer. Hence, each intermediate producer with productivity $z$ forms expectations on the competing suppliers met by its potential customer. With probability $1-f(z)$, the final producer meets no intermediate producers with productivity lower than $z$. In this scenario, the optimal surplus offered would be the reservation surplus $x_0$. 
With complement probability, the final producer meets at least one potential supplier with productivity lower than $z$.
Since the intermediate producer has the power to set the surplus offered unilaterally, the optimal choice would be to make the final producer just indifferent between accepting and rejecting the surplus offered. This amounts to offering the final producer the expected match surplus from the best compatible competitor with productivity $\tilde{z}<z$.\footnote{An intermediate producer with productivity $z$ loses any auction against the other compatible competitors with higher productivity $z^\prime>z$ contacted by a final producer. 
Hence, an intermediate producer enjoys surplus- (or price-)setting power only in the states of the world when it has the highest productivity among the set of compatible intermediate producers contacted by a final producer.} 

Hence, intermediate producers with the lowest productivity $\underline{z}$ find it optimal to offer the reservation surplus to final producers, i.e., $x(\underline{z})=x_0$. Intuitively, these intermediate producers trade with a final producer only if the latter is not in contact with any other compatible competitor. 
On the contrary, intermediate producers with the highest productivity $\bar z$ offer the expected match surplus over the unconditional productivity distribution of the best supplier contacted, i.e., $x^\star(\bar z)=(1-f) x_0+ f \mathbb{E}_{\max\{\tilde{z}\}}[A(s(\tilde{z});\tilde{z})]$. Intermediate producers with productivity strictly in between offer a convex combination of these two bounds, i.e., $ x^\star(z) \in \left(x_0,(1-f) x_0+ f \mathbb{E}_{\max\{\tilde{z}\}}[A(s(\tilde{z});\tilde{z})]\right), \, \ z \in (\underline{z},\bar{z})$. This implies that the optimal surplus offered is increasing in productivity, ${x^\star}^\prime(z)>0$, which verifies our guess.

The optimal specialization function $s^\star(z)$ is implicitly defined by equation (\ref{eq_s}). We can single out three forces shaping optimal specialization. First, higher specialization increases the match surplus conditional on trading ($A^\prime(s)>0$). Second, higher specialization reduces the trading probability, by lowering the compatibility probability ($\phi^\prime(s)<0$). Third, higher specialization increases specialization costs, by raising the labor requirement for specialization ($q^\prime(s)>0$). 
Since higher complexity reduces the trading probability $\Pc(z;N)$, optimal specialization decreases with complexity $N$. If $\lambda \bar{\phi}<1$, higher search efficiency increases the average number of meetings per intermediate producer leading to trade, $\theta\lambda\Pc(z;N)$. Hence, optimal specialization increases with search efficiency $\lambda$. Finally, notice that the value function of an intermediate producer, $V(s,x;z,N)$, is supermodular in $s^\star(z)$ and $x^\star(z)$, i.e., the cross derivative at the optimal solution is positive. Hence, optimal specialization and optimal surplus offered to final producers are strategic complements.
In other words, if an intermediate producer were forced to offer a higher surplus to final producers ($x(z) \uparrow$), it would find it optimal to specialize more ($s(z) \uparrow$), and vice versa.\footnote{This result stands in contrast with standard models in which an investment is made before the parties bargain over the surplus split (\textit{hold-up problem}). We clarify the relation with these class of models in Section \ref{sec:discussion}.}
Intuitively, offering a higher surplus $x(z)$ reduces the cost of foregone trade due to lower compatibility probability. Unlike the optimal surplus offered, optimal specialization need not be monotone in productivity (see Appendix \ref{app:spec_function} for more details on the role of the shape of the productivity distribution).



\paragraph{Market clearing and consistency conditions.} Since ${x^\star}^\prime(z)>0$, the distribution of surplus offered to final producers equals $G(x^\star(z))=\hat{\phi}(\underline{z},z)/\bar{\phi}$, where $\hat{\phi}(\underline{z},z) \equiv \int_{\underline{z}}^z \phi(s^\star(\tilde{z}))\gamma(\tilde{z}) d\tilde{z}$ and $\bar{\phi}=\hat{\phi}(\underline{z},\bar{z})$. Intuitively, the probability that a final producer meets a compatible intermediate producer with productivity lower than $z$ equals the relative mass of intermediate producers with productivity lower than $z$, weighted by the respective compatibility probability. 

The equilibrium wage rate equalizing labor supply and labor requirement for specialization is given by $w=\frac{\psi}{1-Nm\bar{q}}$. The equilibrium price equalizing demand and supply for each consumption good reflects its quality, $P_i=\mathcal{Q}_i$. As a result, aggregate output equals:
\begin{align}
\label{output_static}
Y = \underbrace{f^N}_{\text{\vphantom{\Big|} prob. active}} \ \underbrace{N\mathbb{\hat{E}}[A\left(s^\star(\tilde{z});\tilde{z}\right)],}_{\quad \text{expected match surplus \big| active}}
\end{align} 
where $\mathbb{\hat{E}}[.]$ denotes the expectation with respect to the productivity distribution of active matches (defined in Appendix \ref{app:static}).
The expression for aggregate output in (\ref{output_static}) yields two insights. First, production complexity ($N$) has ambiguous effects on output, as in \cite{levine2012production}. Greater complexity raises output quality per active match but lowers the probability that all complementary inputs are successfully sourced. Second, a similar trade-off applies to endogenous specialization. Greater specialization increases the expected surplus per active match but reduces the likelihood that a final producer is active, thereby shrinking the measure of active final producers. Previewing our findings from the dynamic model in Section
\ref{dynamic_sec}, the probability that a final producer is able to source all required complementary inputs, $f^N$, is closely related to the notion of supply chain \textit{resilience}. Intuitively, if a final producer were to lose contact with all its suppliers, $f^N$ would equal the probability of successfully sourcing all the required inputs and restarting production, while $1/f^N$ would represent the expected time required to do so.

\subsection{Efficiency}
We study the efficiency properties of equilibrium specialization decisions by comparing the equilibrium allocation with that chosen by a benevolent social planner. The social planner is subject to the same product market frictions as market participants and solves:\footnote{Notice that the problem setup implicitly assumes that the social planner makes final producers trade with the compatible input provider contacted with the highest productivity, which can be easily proven.}
\begin{align*}
   \max_{ \substack{
s_j(z),\; j = 1,\dots,N \\
\quad z \in [\underline{z},\bar{z}]
}  } \  \mathcal{W}= \ C+\psi \log(1-\ell) \qquad 
   \text{s.t.} \ & \  \ell = m \sum_{n=1}^N \int q(s_n(z))\gamma(z) \ dz, \\
   & \ C= \sum_{n=1}^N \prod_{v \neq n}f(\boldsymbol{s}_v(z)) \  \mathbb{E}_{\max\{z\}}[A(s_n(z);z)].
\end{align*}
In words, the social planner chooses the specialization $s_j(z)$ of each intermediate producer with productivity $z$ producing input $j$ to maximize the utility of the representative household. The maximization problem is subject to the labor resource constraint and the output resource constraint.
Unlike equilibrium specialization decisions, the social planner internalizes that higher specialization of an intermediate input reduces the expected surplus from other complementary inputs, as final producers are less likely to be active.  

Since all input markets are symmetric, the first-order condition for the efficient specialization function $\Sc(z)$ is given by:
\begin{align} \nonumber
    & \theta \lambda \Pc(z;N) \bigg[A^\prime(\Sc(z))+\frac{\phi^\prime(\Sc(z))}{\phi(\Sc(z))}\bigg(A(\Sc(z);z)  - f(z) \mathbb{E}_{\max\{\tilde{z}\}|\tilde{z} \leq z}[A(\Sc(\tilde{z});\tilde{z})]  \\ \label{eff_s}
    & 
    + (N-1)\left(1-f(z)\right)\mathbb{\hat{E}}[A(\Sc(\tilde{z});\tilde{z})] \bigg)\bigg] 
    = \frac{\psi}{1-N m \bar{q}} \  q^\prime(\Sc(z)). 
\end{align}
The social marginal benefit of specialization on the left-hand side of (\ref{eff_s}) consists of four components, which can be grouped into direct and external effects. \\
\textit{Direct effects.}
(i) The first term, $\theta\lambda\Pc(z;N) A^\prime(\Sc(z))>0$, captures the increase in an intermediate producer's match surplus (conditional on matching) when specialization is marginally increased, holding the trading probability fixed.
(ii) The second term, $\theta\lambda\Pc(z;N)\phi^\prime(\Sc(z))/\phi(\Sc(z))A(\Sc(z);z)<0$, reflects the corresponding reduction in expected match surplus from a lower trading probability, holding constant both the conditional match surplus and the weighted productivity distribution, $\hat\phi(z,\bar z)$.
Together, these two terms represent the direct effects of specialization on the expected surplus from trade with a final producer. Accordingly, the same forces appear in the condition for privately optimal specialization (\ref{eq_s}). \\
\textit{External effects.}
(iii) The third term, $-\theta\lambda\Pc(z;N)\phi^\prime(\Sc(z))/\phi(\Sc(z))f(z) \mathbb{E}_{\max{\tilde{z}}|\tilde{z}\leq z}[A(\Sc(\tilde{z});\tilde{z})]$ $>0$, captures the positive spillover on the expected match surplus from other intermediate producers supplying the same input, holding constant both the conditional match surplus and the compatibility probability. Intuitively, by increasing specialization, a more productive firm raises the trading probability of less productive suppliers of that input.
(iv) By contrast, the fourth term, $\theta \lambda \Pc(z;N)(N-1)\phi^\prime(\Sc(z))/\phi(\Sc(z))\left(1-f(z)\right)\mathbb{\hat{E}}[A(\Sc(\tilde{z});\tilde{z})]<0$, captures the negative spillover on the expected match surplus from intermediate producers of complementary inputs. Here, greater specialization reduces their trading probability, because final producers are less likely to find all the inputs they need.
Taken together, terms (iii) and (iv) represent the external effects of specialization on other intermediate producers. Unlike the direct effects, these externalities are not internalized in the privately optimal specialization condition (\ref{eq_s}).

The social marginal cost on the right-hand side of (\ref{eff_s}) is given by the marginal increase in hours worked induced by higher specialization, weighted by the marginal disutility of labor. The social marginal cost exhibits the same structure as the private marginal cost.


We are interested in the efficiency properties of equilibrium specialization. 
To characterize them, we focus on the marginal welfare effect of specialization. By definition, efficient specialization equalizes the marginal welfare effect of specialization to zero, i.e., $\pd{\Wc}{s(z)}\big|_{s(z)=\mathcal{S}(z)}=0$.
Hence, evaluating the marginal welfare effect of specialization at the equilibrium solution allows us to single out the sources and direction of the inefficiency induced by equilibrium specialization.
\begin{prop}[Externalities from Specialization in the Static Economy]\label{prop:externalities}
The marginal welfare effect of equilibrium specialization can be decomposed as:
\begin{align} \nonumber
    \pd{\Wc}{s(z)}\bigg|_{s(z)=s^\star(z)} \hspace{-.5cm} \propto  \quad & \underbrace{\ f(z)\mathbb{E}_{\max\{\tilde{z}\}|\tilde{z}\leq z}\left[A(s^\star(\tilde{z});\tilde{z})\right]}_{\text{business-stealing externality}} \underbrace{- \vphantom{\mathbb{E}_{\max\{\tilde{z}\}|\tilde{z}\leq z}\left[A(s^\star(\tilde{z});\tilde{z})\right]} \ x^\star(z)}_{\substack{\text{appropriability} \\ \text{externality}}} \\ \label{dwds}
    & \underbrace{- \ \left(1-f(z)\right) (N-1)\mathbb{\hat{E}}[A(s^\star(\tilde{z});\tilde{z})]}_{\text{network externality}}.
    \end{align}
\end{prop}
Equilibrium specialization induces three externalities.
First, intermediate producers do not internalize that decreasing the specialization of their variety reduces the trading probability of lower-productivity competitors producing the same input (\textit{business-stealing externality}).
Second, intermediate producers do not internalize the total match surplus foregone due to lower trading probabilities, but only the portion they are able to appropriate (\textit{appropriability externality}).
Finally, intermediate producers do not internalize that increasing the specialization of their own variety reduces the trading probability of producers of complementary inputs (\textit{network externality}). Greater specialization of a given input variety reduces the probability that final producers source all required inputs and, as a consequence, purchase inputs from complementary input providers. As such, the network externality can be interpreted as a demand externality imposed on \textit{complementary} input providers.
 
Using our running example of the EV supply chain, the decision of a battery maker to specialize in type-A batteries induces a positive business-stealing externality on the trading probability of its lower-productivity competitors producing batteries with different chemical composition; a negative appropriability externality on EV manufacturers' surplus due to a lower probability of successfully sourcing batteries; and a negative network externality on the trading probability of tire and camera producers.

To understand the direction of the inefficiency induced by equilibrium specialization, we use equations \eqref{eq:res_surplus} and \eqref{eq_x_solved} to substitute the optimal surplus offered $x^\star(z)$ into equation \eqref{dwds}. This allows us to further decompose the appropriability externality into two components -- one operating \textit{within} an input market, and the other operating \textit{across} input markets (through the reservation surplus). The latter arises from intermediate producers anticipating that they would optimally offer the reservation surplus if a final producer were not in contact with any compatible lower-productivity competitors. Consequently, the intermediate producers appropriate part of the expected surplus offered by complementary input providers to final producers, in proportion to the probability $1-f(z)$ that the final producer does not find any lower-productivity compatible competitors. The next proposition formalizes these observations.
\begin{prop}[Offsetting Externalities Within Markets]\label{prop:offsetting}
Substituting equilibrium conditions into the marginal welfare effect of equilibrium specialization yields:
   \begin{align} \label{dwds_expanded}
     \pd{\Wc}{s(z)}\bigg|_{s(z)=s^\star(z)} \hspace{-.5cm} \propto \quad & \underbrace{\ f(z)\mathbb{E}_{\max\{\tilde{z}\}|\tilde{z}\leq z}\left[A(s^\star(\tilde{z});\tilde{z})\right]}_{\text{business-stealing externality}} \underbrace{- \ f(z)\mathbb{E}_{\max\{\tilde{z}\}|\tilde{z}\leq z}\left[A(s^\star(\tilde{z});\tilde{z})\right]}_{\substack{\text{appropriability externality} \\ \text{(within market)}}} \\ \nonumber
    & \underbrace{- \ \left(1-f(z)\right) (N-1)\mathbb{\hat{E}}[A(s^\star(\tilde{z});\tilde{z})]}_{\text{network externality}} \underbrace{+ \ \left(1-f(z)\right) (N-1)\mathbb{\hat{E}}[x^\star(\tilde{z})]}_{\substack{\text{appropriability externality} \\ \text{(across markets)}}} \leq 0.
\end{align}
\end{prop}
Equation (\ref{dwds_expanded}) sheds light on the direction of the inefficiency induced by equilibrium specialization. We highlight two key insights. 
First,  the appropriability externality within each input market perfectly offsets the business-stealing externality (first row).\footnote{Readers familiar with CES monopolistic competition models will note the same coincidence of offsetting externalities. We discuss this parallel in detail in Section \ref{sec:discussion}.} This result follows from our surplus/price-posting mechanism with simultaneous search: intermediate producers optimally offer final producers the expected \textit{match} surplus from the second-best compatible supplier, when one exists. As a result, the portion of match surplus that an intermediate producer does not appropriate exactly equals the expected match surplus that would have been generated by lower-productivity competitors had they traded.

Second, the direction of the overall inefficiency depends on the gap between the network externality and the appropriability externality across markets (second row). 
When production is not complex ($N=1$), both the network externality and the appropriability externality across markets vanish, so equilibrium specialization is efficient, as in \cite{Menzio2023_ProductDesign}. This result is immediate since (i) there are no complementary inputs that can be damaged by final-production halts, and (ii) there are no complementary inputs from which final producers can extract positive surplus, so the reservation surplus is zero. 
When production is complex ($N>1$), the two opposing forces emerge. On the one hand, the network externality pushes the privately optimal specialization above its socially efficient level, in proportion to the expected match surplus from complementary inputs that is foregone when production halts, $(N-1)\mathbb{\hat{E}}[A(s^\star(\tilde{z});\tilde{z})]$. On the other hand, the appropriability externality across markets pushes the privately optimal specialization below its socially efficient level, in proportion to the expected surplus offered from complementary inputs, $(N-1)\mathbb{\hat{E}}[x^\star(\tilde{z})]$, which intermediate producers appropriate in the form of negative reservation surplus.
Since the surplus offered is lower than the match surplus, the appropriability externality across input markets always falls short of the network externality. The lower (more negative) the reservation surplus, the larger the share of surplus from complementary inputs which intermediate producers appropriate and, consequently, the lower the inefficiency from the gap between network externality and appropriability externality across markets.

 We summarize the efficiency properties of equilibrium specialization in the following theorem. All proofs are relegated to Appendix \ref{app:proofs_resilience}.
\begin{thm}[Efficiency of the Static Economy]\label{thm:over-specialization}
    The equilibrium allocation is constrained-efficient if and only if production is not complex, i.e. $N=1$. If the production process is complex, i.e. $N>1$, the equilibrium features over-specialization.
\end{thm}
Overall, the marginal welfare effect of specialization at the equilibrium solution is negative. Since final producers do not appropriate the entire surplus from matching with complementary inputs, the reservation surplus is always too high. Hence, the planner would impose a lower level of specialization and, therefore, the equilibrium features \textit{over-specialization}. 

\paragraph{Network externality.}
The network externality arises from the interplay of three elements. First, search and compatibility frictions imply that final producers may not find an input they are looking for ($f<1$). Second, endogenous product design decisions imply that the input finding probability decreases with input specialization ($f^\prime(s)<0$). Third, complementarity among production inputs ($N>1$) implies that a final producer's failure to source one input prevents the surplus from successful matches with other intermediate producers to materialize. We clarify the economic forces behind the network externality in the following 2-firm example.

Assume there are two input providers supplying the only two complementary inputs required for final production ($N=2$). Suppose further that final producers meet exactly one intermediate producer per input, which implies that the latter faces no competition and can extract the entire match surplus.\footnote{Since in this example each intermediate producer is the only producer of its input, there is no business-stealing externality on other firms with lower productivity supplying the same input. As a consequence, the fact that the intermediate producer offers $x=0$ surplus is not an inefficiency per se.} Denote as $f(s_i)$ the probability that the input variety of provider $i$ is compatible with the final producer when its specialization is $s_i$. Therefore, the probability that a final producer source both inputs is $f(s_1)f(s_2)$. Higher specialization raises match surplus $A(s_i)$ $(A^\prime>0)$, but reduces the compatibility probability ($f^\prime<0$).
Social surplus reads $\mathcal{W}=f(s_1)f(s_2)[A(s_1)+A(s_2)]$. Efficient specialization of firm $1$ solves:
$$\underbrace{f(\Sc_1)A^\prime(\Sc_1)}_{\text{social MB}}=\underbrace{-f^\prime(\Sc_1)[A(\Sc_1)+A(\Sc_2)]}_{\text{social MC}}.$$
The left-hand side represents the additional surplus generated by the higher specialization, while the right-hand side equals the expected foregone social surplus in case of no trade. 

Expected private surplus of firm $1$ is $\pi_1=f(s_1)f(s_2)A(s_1)$. Equilibrium specialization solves: 
$$\underbrace{f(s^\star_1)A^\prime(s^\star_1)}_{\text{private MB}}=\underbrace{-f^\prime(s^\star_1)A(s^\star_1)}_{\text{private MC}}.$$

Evaluating the efficiency condition at the equilibrium specialization, we note that we are left with the term $f^\prime (s^\star_1)A(s^\star_2)<0$, which captures the negative externality of firm 1 on the trading probability of firm 2. It follows that the social marginal cost is higher than the private one, and firm 1 over-specializes. By symmetry, the same holds for firm $2$, as well.

Importantly, the network externality is \textit{pure}, in the sense that it arises only in no-trade scenarios. In other words, higher specialization of intermediate producers of input $j$ harms the intermediate producers of other inputs $i \neq j$ whenever the final producers they are in contact with are unable to source input $j$.
As a consequence, there is no price mechanism that allows intermediate producers of input $j$ to internalize this externality, since it materializes precisely in states in which no trade occurs.

\subsection{Discussion}
\label{sec:discussion}
Before extending the model to dynamics, we briefly discuss how our static framework relates to existing models and the robustness of our conclusions under alternative assumptions. In particular, we consider bargaining over surplus, the connection to monopolistic competition models with CES preferences, directed search, non-contingent and two-part contracts, vertical integration, and in-house emergency production of missing inputs.

\paragraph{Bargaining.}
Our baseline model of specialization posits that intermediate producers post the price of their input variety (or, interchangeably, the surplus offered) before meeting. We now show how our baseline model would change if intermediate and final producers bargained over the surplus.

Let the structure of our model be the same up to the surplus-sharing rule.
Assume that the surplus accruing to final producers is determined by generalized Nash bargaining upon meeting between a final producer and a compatible intermediate producer. Final producers can observe the match surplus $v$ from matching with any intermediate producer they meet. Let $\xi \in [0,1]$ denote the exogenous bargaining power weight of final producers. The surplus accruing to final producers in equilibrium solves 
 $  X^\star(v) = \argmax_{X} \left(X-\omega(v;\vec{v}_k)\right)^\xi \left(v-X\right)^{1-\xi}, $ 
where $\omega(v;\vec{v}_k)$ denotes the outside option of a final producer that is in contact with $k$ compatible intermediate producers whose match surplus is lower than $v$. While negotiating with the highest-surplus compatible intermediate producer, final producers keep the option of bargaining with the other less productive compatible intermediate producers they are in contact with. Formally, $\omega(v;\vec{v}_k)= \mathbbm{1}_{\{k=0\}}X_0+\mathbbm{1}_{\{k>0\}}\max\{v_1,\dots,v_k\}$, where $X_0$ is the reservation surplus. In Appendix \ref{app:bargaining} we show that the optimal specialization function $s_B^\star(z)$ with bargaining over surplus shares is implicitly defined by:
\begin{align} \label{eq_s_barg}
    &\theta \lambda \mathcal{P}(z;N) (1-\xi) \bigg[A^\prime(s_B^\star(z))+\frac{\phi^\prime(s^\star_B(z))}{\phi(s^\star_B(z))} \left(A(s_B^\star(z);z)-\mathbb{E}[\omega(z)]\right)\bigg]=w q^\prime(s_B^\star(z)),
\end{align}
where $\mathbb{E}[\omega(z)]$ is the expected outside option of a final producer trading with the intermediate producer, which equals the surplus offered to final producers in our baseline model.

The equilibrium specialization function $s^\star_B(z)$ is generally inefficient due to the gap between network externality and appropriability externality across markets \textit{and} a well-known hold-up externality \citep{tirole}.
The hold-up externality is related to the appropriability externality in our baseline model. However, there are important differences between the two. Hold-up arises when intermediate and final producers split ex-post the surplus generated by the investment in specialization. Increasing investment in specialization marginally generates $A^\prime$ units of additional match (or \textit{total}) surplus, which is split according to $\{\xi, \,1-\xi\}$ shares. Hence, part of the \textit{marginal} surplus generated by the intermediate producer's investment accrues to the final producer. Because the total and private marginal returns on investment are different, the investment level is inefficient. 
Differently, in our model with surplus posting, the intermediate producer commits to a \textit{level} $x$ of surplus offered to the final producer. Hence, any additional surplus generated by increasing investment in specialization marginally, $A^\prime$, only accrues to the intermediate producer itself. Since marginal total and private surpluses coincide, the investment level is bilaterally efficient. The two coincide only in the limiting case of the bargaining model in which the seller has all the bargaining power, in which the hold-up problem disappears.
In this sense, our model allows for bilaterally efficient investment even in a setting in which both parties in the transaction obtain a positive surplus. This result is driven by the competitive pressure exerted by the simultaneous search, which implies that buyers have a non-zero outside option. 
%
 %
Comparing equations (\ref{eq_s_barg}) and (\ref{eq_s}) allows us to make the following remark.
\begin{remark}[General Bargaining Model]
\label{remark:bargaining}
   The equilibrium allocation of the baseline static model is the same as that of a general bargaining model where intermediate producers hold all
   bargaining power, i.e., $\xi=0$. The allocation of the bargaining model is generally inefficient. If bargaining weights varies with the productivity of the intermediate producer involved in the transaction, there exists a unique vector $\xi^\star(z;N)$ such that the market and planner solutions coincide.
\end{remark}
\noindent
Overall, when production is not complex, our baseline model can be interpreted as the efficient benchmark of a general bargaining model.

\paragraph{Monopolistic competition with CES production function.} A common framework when studying firm-to-firm relations is that of \cite{dixit1977monopolistic}, as exemplified by \cite{grossman2023supply,grossman2023resilience}.
In this class of models, firm entry gives rise to a business-stealing externality and an appropriability externality. Still, the equilibrium allocation with free entry and inelastic labor supply is constrained-efficient. The reason is that the two externalities exactly offset each other, as a consequence of the elasticity of substitution -- which governs the business-stealing externality -- and the taste for varieties -- which governs the appropriability externality -- being the same parameter. In models where the two do not coincide, the economy is generally inefficient.\footnote{See \cite{dixit1975monopolistic,spence1976product,dixit1977monopolistic,ethier1982national,mankiw1986free,benassy1996taste,zhelobodko2012monopolistic,parenti2017toward,dhingra2019monopolistic,grossman2023supply,bajoetal} for several treatments of this problem, and \cite{brakman2001monopolistic} for a review.} 
In our model, endogenous product design decisions are akin to firm entry decisions. The average input specialization determines the mass of compatible varieties available to each final producer. Our price-posting protocol with simultaneous search implies that equilibrium specialization endogenously generates the same offsetting externalities as in \cite{dixit1977monopolistic} within each input market. The key difference is that, in our model, this outcome arises from the optimal choices of intermediate producers, who make final producers indifferent between trading with them and relying on their expected outside option, without hinging on any parametric restriction.
%
\begin{remark}[Monopolistic Competition \& CES]
\label{remark:MP_CES}
    The combination of endogenous specialization and price posting in frictional markets with simultaneous search provides an endogenous mechanism such that the appropriability and business-stealing externalities from designing new product varieties exactly offset each other within markets. Models of monopolistic competition with CES final production function obtain the same result by imposing the exogenous parametric restriction that the elasticity of substitution across varieties in the final production function equals the elasticity of firm-level inverse demand.
\end{remark}
%

\paragraph{Directed search.}
Our baseline model features random search within input markets. A natural question to ask is whether the efficiency properties of our baseline equilibrium hinge upon the search protocol. To address this question, in Appendix \ref{app:directed_search} we develop an alternative version of our model where final producers can direct their search within each input market.

We find that in a static competitive search equilibrium, the allocation is the same as in our baseline model with random search. 
 Hence, whether search is random or directed within input markets is irrelevant for our results.
Directed search would restore efficiency only if all input markets were integrated. We conclude that achieving constrained efficiency requires full (both vertical and horizontal) integration of the supply chain.
%

\begin{remark}[Directed Search]\label{rem:dir_search}
A static competitive search equilibrium yields the same privately optimal specialization choice as its random search counterpart. The resulting allocation is constrained-efficient only if all input markets are integrated. 
\end{remark}


\paragraph{Contract contingency and completeness.}
So far, we have considered an economy with a specific contract form between intermediate and final producers. First, purchases of intermediate goods are contingent on successfully sourcing \textit{all} inputs, so final producers face no risk of paying suppliers when production cannot occur.\footnote{\cite{Chor2025contracting_frictions} shows that contracting frictions generate substantial welfare losses in global sourcing.} Second, only a single payment is allowed, conditional on trade. Since our inefficiency arises in no-trade states, it is natural to ask whether a second, no-trade payment could eliminate it.

In Appendix \ref{app:non-cont}, we consider deviations from these assumptions. First, we examine an alternative, non-contingent contractual arrangement in which suppliers are paid independently of whether all inputs are successfully sourced. We find that intermediate producers have stronger incentives to over-specialize when contracts are not contingent.
\begin{remark}[Non-Contingent Contracts] \label{remark:noncontingent}
   Economies with non-contingent contracts exhibit greater equilibrium over-specialization than economies with contingent contracts.  
\end{remark}

Second, we consider an alternative extension to our baseline model, whereby intermediate producers can offer a surplus $x^\circ(z) \geq 0$ in case of no trade. Specifically, the intermediate producer offers $x^\circ(z)$ to the final producer upon meeting if the latter cannot find any compatible input suppliers. We view this contractual arrangement as a partial compensation for no-trade scenarios that would make incompatibility more costly for intermediate producers. However, final producers still find it optimal to choose the highest surplus-offering compatible supplier -- irrespective of no-trade payments. As a consequence, positive no-trade payments are not sustainable in equilibrium and do not solve the network externality.
\begin{remark}[No-Trade Payments]
\label{remark:notrade_payments}
    Suppose that final producers can offer a two-part contract contingent on the compatibility realizations. Then, no-trade payments are optimally set to zero in equilibrium. Therefore, the equilibrium is the same as in the baseline economy and features over-specialization.
\end{remark}

\paragraph{Boundaries of the firm, vertical integration, and in-house production.}
At the core of our model are frictions in sourcing inputs from outside the firm, which are amplified by the complexity of production, $N$. Greater complexity increases potential value added but reduces the probability that all inputs are successfully sourced. Firms can address this trade-off in multiple ways: they may choose the complexity of their production process or adjust the boundaries of the firm for a given production process. In Appendix \ref{app:end_N}, we study the optimal choice of $N$. Here, we focus on vertical integration and in-house input production.

Suppose first that final producers can vertically merge with $M<N$ input providers, so that these inputs are no longer subject to sourcing frictions. In this case, the effective complexity of production is $N-M$.

\begin{remark}[Vertical Integration]
\label{remark:vertical_integration}
Vertical integration mitigates inefficiencies from endogenous specialization by reducing the number of key inputs subject to search frictions. If all but one input are vertically integrated, the economy is constrained-efficient.
\end{remark}

\noindent
Hence, vertical integration is socially desirable whenever the merger cost is smaller than the social benefit from mitigating the network externality.

Next, suppose final producers can manufacture any intermediate good of value added $\underline{A}$ in-house at cost $F>\underline{A}$. This option may be exercised both before and after the resolution of sourcing uncertainty: firms that fail to source all the required inputs can ex-post manufacture the missing ones. In Appendix \ref{app:in-house}, we show that the decision to use in-house production depends on the sum of surpluses from successfully sourced inputs and the number of missing inputs.  

Importantly, because final producers do not appropriate the full match surplus from production, they underuse in-house production relative to the planner's optimum. As a result, the network externality persists -- analogous to the case of non-contingent contracts. If in-house production were costless on net ($F=\underline{A}$), both sources of externalities would vanish: final production would always occur, and the planner and firms would employ in-house production at the same rate. When $F>\underline{A}$, however, the market allocation remains inefficient.

\begin{remark}[In-House Production]
\label{remark:in-house}
Costly in-house production does not eliminate the network externality and introduces an additional inefficiency in the frequency of in-house production. If in-house production is costless on net, the economy is constrained-efficient.
\end{remark}

\section{Dynamic Model}\label{dynamic_sec}

In the previous section, we have studied how endogenous specialization of intermediate inputs affects the sourcing capacity of final producers. Our finding is that the equilibrium allocation features over-specialized inputs as intermediate producers do not internalize the cascading effect of halting final production on the value added produced by complementary inputs.
In this section, we show that the resilience of supply chains is closely related to the dynamic sourcing capacity of final producers after disruption shocks.
To this end, we extend the static model to a dynamic framework with long-term relationships between intermediate and final producers and stochastic disruptions affecting final producers or supplier links. This framework provides a precise and welfare-relevant notion of supply chain resilience: the probability that final production is restored in a given period following a disruption. This measure aligns well with practitioner usage.

\subsection{Environment}
Time is discrete and infinite. There is no aggregate risk. Agents discount the future at rate $\beta \in (0,1)$. Final producers face a disruption at the end of each period with probability $\delta$, where $\delta \in (0,1)$. Upon facing a disruption, final producers lose contact with all their input providers and need to replace them in the next period.
Final producers keep the same input providers until a disruption occurs. In other words, final producers do not search the market for replacing existing suppliers, e.g., due to switching costs.
 We refer to the average time a final producer does not face any disruption,  $1/\delta$, as \textit{robustness}.\footnote{In Appendix \ref{app:end_rob} we endogenize robustness by allowing final producers to make prudential investments to lower their disruption risk.} We assume that supplier relationships are governed by constant-surplus contracts.
 Intermediate producers never face disruptions and search for new customers in the market in each period. 
 The dynamic model nests the static model as final producers lose contact with all their input providers at the end of each period ($\delta \rightarrow 1$).

The timing convention is as follows: at the end of each period, a share $\delta$ of attached final producers faces a disruption, which makes them lose all their input providers. At the start of the next period, unattached final producers search for suppliers. The mass of searching final producers $\mu$ evolves according to the following law of motion:
\begin{align} \label{mu_dyn}
    \mu_{(+1)} = \delta \left[1-\left(1-\prod_{j=1}^Nf_j\right)\mu\right]+\left(1-\prod_{j=1}^Nf_j\right)\mu,
\end{align}
where $\left(1-\prod_{j=1}^Nf_j\right)\mu$ equals the mass of inactive final producers in the current period.
Market tightness equals the ratio between the measure of searching final producers and the measure of intermediate producers:
$
    \theta = \frac{\mu}{m}.
$\footnote{Throughout, we maintain the assumption that $m$ is fixed. In Appendix \ref{sec:free_entry}, we endogenize the mass of intermediate goods producers via a free entry condition.} 

The household problem is static and identical to the previous section. 
Intermediate producers choose the specialization of their input variety and the surplus offered to the flow of new customers by solving the following recursive profit maximization problem:
\begin{align} \nonumber
    V(\mathcal{D}_{(-1)},\bar{X}_{(-1)};z,N) = \max_{s,x} \ & \mathcal{D} A(s;z) - \bar{X}_{(-1)} - \theta\lambda\Pc(s,x;N)x -w q(s) \\
    & + \beta  V(\mathcal{D},\bar{X};z,N)
    \label{pmp_dyn}\\[.3cm] \label{demand_dyn}
    \text{s.t.} \ & \mathcal{D} = (1-\delta) \mathcal{D}_{(-1)} +
 \theta\lambda\Pc(s,x;N),\\   
 & \bar{X} = (1-\delta)\left[\bar{X}_{(-1)}+\theta\lambda\Pc(s,x;N)x \right], \label{eq:xlom}
\end{align}
%
where $\mathcal{D}$ denotes the number of final producers sourcing from the intermediate producer, or \textit{demand}, and $\bar{X}$ denotes the surplus offered to the stock of existing customers. $\mathcal D$ and $\bar X$ follow the laws of motion in equations \eqref{demand_dyn} and \eqref{eq:xlom}.

Since each intermediate producer serves a measure of final producers, a share $\delta$ of existing customers faces a disruption and breaks the supplier relationship in each period. At the same time, the intermediate producer acquires a measure $\theta\lambda\Pc(s,x;N)$ of new customers, depending on the market tightness $\theta$ and its own policies ($s$ and $x$).
While we only consider constant-surplus contracts within each buyer-supplier relation, we allow intermediate producers to re-optimize the specialization of their inputs in each period. Importantly, input producers are free to offer different pricing conditions to new potential customers compared to existing ones. 

\paragraph{Customization.} Our dynamic model allows intermediate producers to re-optimize their specialization choices in every period. This introduces a new \textit{customization motive} for specialization, that is, tailoring the product to the needs of the final producers \textit{within a match}. Long-term relationships make specialization more appealing for intermediate producers: within an existing match, increasing specialization comes at no cost in terms of lower trading probability.

The main prediction of our dynamic model is that intermediate producers gradually increase the specialization and narrow the scope of their variety during the life cycle as they progressively accumulate customers. The intuition is best understood by focusing on the first-order condition for specialization out of steady state:
\begin{align} \nonumber
 & \overbrace{(1-\delta)\mathcal{D}_{(-1)}}^{\text{existing customers}}A^\prime(s^\star(z))+\overbrace{\theta\lambda\mathcal{P}(z;N)}^{\text{new customers}}\left[A^\prime (s^\star(z))+\frac{\phi^\prime(s^\star(z))}{\phi(s^\star(z))}\bigg([A(s^\star(z);z)-x^\star(z)]+\beta(1-\delta) \right. \\ \label{eq_s_dyn_out}
 & \left. [A(s^\star_{(+1)}(z);z)-x^\star(z)] \bigg) \right] = w q^\prime(s^\star(z)).
\end{align}
Optimal specialization features both a backward- and a forward-looking component. The marginal benefit of specialization on the left-hand side can be divided into two components. First, for the $(1-\delta)\mathcal{D}_{(-1)}$ existing customers, higher specialization implies higher surplus $(A^\prime(s)>0)$. For the $\theta\lambda\mathcal{P}(z)$ new customers, this benefit is traded off against the lower trading probability, governed by $\phi^\prime(s)<0$. As a firm accumulates a customer base $\Dc_{(-1)}$, the first term becomes more important, increasing the marginal benefit of specialization. 
Since intermediate producers have already overcome compatibility frictions with many customers, they opt to increase match surplus with existing customers rather than attract new ones. Hence, for the share of attached customers, specialization corresponds to customization. It follows that both the value added and the price of the inputs traded between firms increase over the life cycle.  

\subsection{Stationary equilibrium}
We now characterize the stationary equilibrium with balanced flows of our dynamic economy. The policy functions of the representative household are the same as in the static model.

Unlike in the static model, the value of final producers embeds an option value of search, which reflects in the reservation surplus (see Appendix \ref{app:dyn} for further details).
\begin{lemma}[Dynamic Reservation Surplus]
\label{lem:res_surplus_dyn}
The equilibrium reservation surplus equals:
\begin{align} \label{res_surplus_dyn}
   x_0 = -\left(\alpha(N)N-1\right)\mathbb{\hat{E}}[x^\star(z)],
\end{align}
where $\alpha(N) \equiv  1-\frac{\beta(1-\delta) f^N}{1-\beta(1-\delta)(1-f^N)} \in (0,1)$ summarizes the influence of the option value of search on the reservation surplus.
\end{lemma}
\noindent
Final producers find it optimal to choose among compatible suppliers the one offering the highest net present value of surplus streams. Under constant-surplus contracts, this coincides with the supplier offering the highest current surplus.

The optimal surplus offered to final producers has the same structure as in the static model, given by (\ref{eq_x_solved}) (see Appendix \ref{app:dyn} for the derivation). 
The stationary demand of a seller choosing specialization $s$ and surplus $x$ is
%
$    \mathcal{D}(s,x;N) = \frac{\theta\lambda\Pc(s,x;N)}{\delta}.
$ 
The optimal specialization function is implicitly defined by the following first-order condition:
\begin{align} \label{eq_s_dyn}
 \mathcal{D}(z;N)\bigg[A^\prime (s^\star(z))+\delta[1+\beta(1-\delta)]\frac{\phi^\prime(s^\star(z))}{\phi(s^\star(z))}(A(s^\star(z);z)-x^\star(z))\bigg] = w q^\prime(s^\star(z)).
\end{align}
Increasing specialization affects existing and new customers differently. For all the $\mathcal{D}(z;N)$ customers, higher specialization implies a higher match surplus conditional on trading $(A^\prime(s)>0)$. For the existing $(1-\delta)\mathcal{D}(z;N)$ customers, higher specialization does not come at the cost of lower trading probability since the match has already been established in the past. For the new $\delta \mathcal{D}(z;N)$ customers, instead, higher specialization lowers the trading probability ($\phi^\prime(s)<0)$ -- as in the static model.\footnote{Since inflow and outflow of customers at the firm-level balance each other in stationary equilibrium, $\delta \mathcal{D}(z;N)$ equals both the mass of existing customers that separate and the mass of new customers.} Nonetheless, the forward-looking firm understands that lower trading probability today lowers demand tomorrow. Hence, the reduction in trading probability is weighted by the intertemporal value of a marginal customer, $1+\beta(1-\delta)$, which captures the possibility of re-optimizing specialization in every period.

\paragraph{Resilience.}
The dynamic model allows for a precise and welfare-relevant notion of supply chain resilience, namely, the probability that a final producer is able to source all the required inputs in a given period after a disruption:
\begin{align} \label{resilience}
    \mathcal{R}(f;N) \equiv f^N = [1-\exp\{-\lambda\bar{\phi}\}]^N.
\end{align}
Given this definition, it is apparent how the inverse of supply chain resilience equals the average time it takes for a final producer to restore production after a disruption.

Our model allows us to single out three determinants of supply chain resilience.
First, higher search efficiency ($\lambda \uparrow$), e.g., through ICT adoption and AI-based procurement, increases the resilience of supply chains.\footnote{See the practicioners' reports of \citet{McKinsey2021suppliers,McKinsey2024procurement}.} Second, higher resilience can be attained by relying on less specialized inputs ($\bar{\phi} \uparrow$). This corresponds to the strategy of building resilience "through products" highlighted by \cite{miroudot2020resilience}, such that firms choose to source more standardized inputs in the market because they are easier to replace. Finally, higher complexity of the production process ($N \uparrow$) reduces supply chain resilience. This echoes the main insight from the network fragility literature \citep{Elliott2022_Fragility}. 
Crucially, supply chain resilience directly affects output and, in turn, welfare. Output in this economy can be decomposed into four terms:
\begin{align*}
    Y = \underbrace{\vphantom{\frac{1}{\delta}} \mu(f;N)}_{\text{market size}} \ \  
    \underbrace{\vphantom{\frac{1}{\delta}} \mathcal{R}(f;N)}_{\text{resilience}} \ \ \underbrace{\frac{1}{\delta}}_{\text{robustness}} \ \
    \underbrace{\vphantom{\frac{1}{\delta}} N\mathbb{\hat{E}}[A(s^\star(z);z)]}_{\text{expected match surplus| active}}.
\end{align*}
First, output depends positively on market size (for investment in specialization). Market size increases if the mass of searching final producers $\mu$ increases. 
Second, output is increasing in supply chain resilience.  Third, output increases with supply chain robustness, i.e., the average duration of a match between final and intermediate producers. Finally, output is increasing in the expected match surplus conditional on production.
Hence, our model justifies policymakers' interest in supply chain resilience as a welfare-relevant statistic.

\subsection{Efficiency}
To establish the efficiency properties of the equilibrium allocation in the dynamic economy, we consider a social planner solving the following recursive problem:
\begin{align*}
\mathcal{W}\left(\boldsymbol{\mathcal{D}}_{(-1)},\mu\right) = \max_{ \substack{
s_j(z),\; j = 1,\dots,N \\
\quad z \in [\underline{z},\bar{z}]
}  } & \ m \sum_{n=1}^N \int D_n(z)A(s_n(z);z) \ \gamma(z)dz + \psi \log(1-\ell) +\beta \mathcal{W}\left(\boldsymbol{\mathcal{D}},\mu_{(+1)}\right) \\[.4cm]
\text{s.t.} \ & D_j(z) = (1-\delta) D_{j(-1)}(z) + \frac{\mu}{m} \lambda \phi(s_j(z)) e^{-\lambda \hat{\phi}(z,\bar{z})} \prod_{v\neq j} f_v, \ \forall j, z, \\[.2cm]
& \mu_{(+1)} = \delta \left[1-\left(1-\prod_{n=1}^N f_{n}\right)\mu\right]+\left(1-\prod_{n=1}^N f_{n}\right)\mu, \\[.2cm]
& \ell = m \sum_{n=1}^N \int q(s_n(z)) \gamma(z) dz,
\end{align*}
where $\boldsymbol{\mathcal{D}}$ is the vector of productivity-specific demand for all inputs.

In words, the social planner chooses the specialization $s_j(z)$ of each intermediate producer
with productivity $z$ supplying input $j$, in order to maximize the lifetime utility of the representative household. 
The state variables are the vector of demand for all intermediate producers one period earlier, $\boldsymbol{\mathcal{D}}_{(-1)}$, and the share of searching final producers in the current period, $\mu$.
The maximization problem is subject to the laws of motion for firm-level demand and for the share of searching final producers, as well as to the labor resource constraint.

The efficient specialization of intermediate producers with productivity $z$ in steady state is given by:
\begin{align} \nonumber
      & \mathcal{D}(z;N) \bigg[A^\prime(\Sc(z))+\delta[1+\beta(1-\delta)]\frac{\phi^\prime(\Sc(z))}{\phi(\Sc(z))}\bigg(A(\Sc(z);z) - f(z)\mathbb{E}_{\max\{\tilde{z}\}|\tilde{z} \leq z}[A(\Sc(\tilde{z});\tilde{z})]  \\ \nonumber
    & 
    +(N-1)\left(1-f(z)\right)\mathbb{\hat{E}}[A(\Sc(\tilde{z});\tilde{z})]-\frac{\beta(1-\delta)}{1+\beta(1-\delta)}f^N N \left(1-f(z)\right) \mathbb{\hat{E}}[A(\Sc(\tilde{z});\tilde{z})]\bigg)\bigg] \\  \label{eff_s_dyn}
    &
    = \frac{\psi}{1-N m \bar{q}} \  q^\prime(\Sc(z)). 
\end{align}
Relative to the statically efficient specialization condition (\ref{eff_s}), an additional term arises in the dynamic setting capturing the marginal effect of specialization on the meeting probability through changes in the share of searching final producers, $\mu$.

To characterize the efficiency properties of the equilibrium of our dynamic economy, 
we follow the same steps as in the static case and focus on the marginal welfare effect of equilibrium specialization. We decompose the marginal welfare effect into five externalities. Four of them are inherited from the static model (see Proposition \ref{prop:offsetting}).
A fifth  arises only in the dynamic model: intermediate producers, by choosing their specialization, affect market tightness through changes in the probability that inputs are compatible. This effect, analogous to the standard \textit{search externality} in models of endogenous search effort \citep{pissarides2000}, is not internalized by intermediate producers. Specifically, greater specialization of intermediate producers reduces the probability that final producers successfully source their inputs. On the one hand, this lowers the trading probability of the intermediate producers themselves, which they internalize. On the other hand, because the mass of searching final producers increases, all intermediate producers benefit from a \textit{thicker} market. This effect is not internalized by individual intermediate producers and therefore constitutes a positive externality. Notably, the search externality differs from the business-stealing externality: the former affects all intermediate producers, whereas the latter only impacts lower-productivity producers within the same input line.
\begin{prop}[Externalities from Specialization in the Dynamic Economy]\label{prop:externalities_dyn}
The marginal welfare effect of equilibrium specialization can be decomposed as:
\begin{align} \nonumber
     \pd{\Wc}{s(z)}\bigg|_{s(z)=s^\star(z)} \hspace{-.5cm} \propto \quad & \underbrace{\ f(z)\mathbb{E}_{\max\{\tilde{z}\}|\tilde{z}\leq z}\left[A(s^\star(\tilde{z});\tilde{z})\right]}_{\text{business-stealing externality}} \underbrace{- \ f(z)\mathbb{E}_{\max\{\tilde{z}\}|\tilde{z}\leq z}\left[A(s^\star(\tilde{z});\tilde{z})\right]}_{\substack{\text{appropriability externality} \\ \text{(within market)}}} \\ \nonumber
    & \underbrace{- \ \left(1-f(z)\right) (N-1)\mathbb{\hat{E}}[A(s^\star(\tilde{z});\tilde{z})]}_{\text{network externality}} \underbrace{+ \ \left(1-f(z)\right) (\alpha(N)N-1)\mathbb{\hat{E}}[x^\star(\tilde{z})]}_{\substack{\text{appropriability externality} \\ \text{(across markets)}}} 
 \\ \label{dwds_dynamic}
    & \underbrace{\vphantom{\int_{\underline z}^z}+ \ \frac{\beta(1-\delta)}{1+\beta(1-\delta)}f^N N\left(1-f(z)\right)\mathbb{\hat{E}}[A(s^\star(\tilde{z});\tilde{z})]}_{\text{search externality}} <0, \quad \forall N>1.
\end{align}
\end{prop}
%
The decomposition of the marginal welfare effect of equilibrium specialization sheds light on the sources of inefficiency in the equilibrium allocation.
 As before, the business-stealing and within-market appropriability externality cancel out, thanks to the posting mechanism and simultaneous search. 
 When production is not complex ($N=1$), the network externality vanishes. Thus, the only sources of inefficiency are the appropriability externality across markets -- driven by a positive reservation surplus, $x_0 = (1-\alpha(1))\mathbb{\hat{E}}[x^\star(z)]> 0$ -- and the new search externality. The appropriability externality across markets enters equation (\ref{dwds_dynamic}) negatively, pushing the planner to lower specialization relative to its equilibrium level. In contrast, the search externality enters equation (\ref{dwds_dynamic})  positively, inducing the planner to increase specialization over its equilibrium level. Depending on the relative magnitude of these two externalities, the equilibrium with non-complex production can display either under- or over-specialization. 
 When production is complex, the sign of the marginal welfare effect of equilibrium specialization is determined by the relative magnitudes of the network externality, the appropriability externality across markets, and the search externality. We show that the network externality always dominates the sum of the other two externalities. 
 Therefore, when production is complex, the equilibrium displays over-specialization.
\begin{thm}[Efficiency of the Dynamic Economy] \label{thm:eff_dynam}
   The equilibrium allocation is constrained-efficient if and only if $(N\mathcal{M}(N)-1)\mathbb{\hat{E}}[A(s^\star(z);z)]+x_0=0$, where 
 $\mathcal{M}(N) \in\left(\frac{1}{2},1\right]$. If the production process is complex, i.e., N > 1, then the dynamic equilibrium features over-specialization and under-resilience.
\end{thm}
We conclude that the main result from the static model carries over to the dynamic setting: when the production process is complex, the equilibrium allocation features over-specialization. Moreover, the dynamic model allows us to establish a one-to-one connection between input over-specialization and the under-resilience of supply chains.

\subsection{Dynamic Model with Link Destruction}
In the dynamic model presented so far, we assumed that once final producers experience a disruption, they sever ties with all input providers and must re-establish every supplier relation anew. While this assumption provides analytical transparency, it is quite strong, as in the real world, many supply chain disruptions arise from the breakdown of single supplier relations. Since production inputs are complements, our model is well-suited for studying supply chain disruptions driven by the failure to replace single input providers. To do so, we consider a setting in which each individual supplier relation breaks down stochastically. In this case, the final producer only needs to replace the broken supplier relations to continue production. We show that the same sources of inefficiency and key results of the baseline dynamic model carry over to this more realistic setting.


The main innovation with respect to the baseline dynamic model is that $\delta \in (0,1)$ now denotes the probability that one of the supplier links between intermediate and final producers gets destroyed.
Hence, a disruption happens if the final producer is unable to replace the input providers who separate each period.\footnote{We assume that idle links are severed, so that a final producer unable to source all inputs in a given period loses contact with all its input providers.} 
Let $\rho(\boldsymbol{f})$ be the probability that an active final producer is operational next period:
\begin{align}\label{prob_operation}
    \rho(\boldsymbol{f}) =\prod_{i=1}^N [1-\delta(1-f_i)].
\end{align}
It follows that the probability of being operational conditional on separating with the input provider of good $i$ equals
$f_i \rho(\boldsymbol{f}_{-i})$.
The market tightness in each input market $i$ equals $\theta_i = \frac{\mu_i}{m}$. Let $\nu_i \in [0,1]$ denote the share of searching final producers of input $i$ that were active in the previous period. The mass of searching final producers evolves according to the following law of motion:
%
\begin{align} \nonumber
    \mu_{i (+1)} = & \ \delta \left(1-\left[\left(1-\prod_{j=1}^N f_j\right)(1-\nu_i)+\left(1-f_i \rho (\boldsymbol{f}_{-i})\right)\nu_i\right]\mu_i\right) +\Bigg[\left(1-\prod_{j=1}^N f_j\right)(1-\nu_i) \\ \label{mu_links}
     & +\left(1-f_i \rho (\boldsymbol{f}_{-i})\right)\nu_i\Bigg]\mu_i.
\end{align}
The law of motion (\ref{mu_links}) differs from its baseline counterpart (\ref{mu_dyn}) due to the different probability that searching final producers are active in the current period. This probability is a weighted average of the probabilities that searching final producers successfully source all required inputs, with weights given by the shares that were active and inactive one period earlier. 
Searching final producers who were active one period earlier are more likely to remain so, because they have to find fewer inputs than those who were inactive, i.e., $f_{i}\rho(\boldsymbol{f}_{-i})>\prod_{j=1}^N f_j$. Hence,
the composition of the pool of searching final producers matters for the probability that one of them is active. 
 %
Demand of an intermediate producer of input $i$ choosing specialization $s$ and surplus offered $x$ is
\begin{align} \label{demand_links}
    \mathcal{D}_{i}\left(\mathcal{D}_{i(-1)},s,x\right) = (1-\delta)\rho(\boldsymbol{f}_{-i}) \mathcal{D}_{i(-1)} +
 \theta\lambda \tilde{\mathcal{P}}_{i}(s,x;\boldsymbol{f}_{-i}).
\end{align}
This expression differs from its counterpart in the baseline dynamic model (\ref{demand_dyn}) in two respects. First, the stock of existing customers is discounted by the probability that attached final producers either do not face disruptions of other supplier relationships or they are able to replace them in the current period.
Second, the expression for the trading probability incorporates the differential probability that a searching final producer is active depending on its previous state. The probability that a searching final producer is active upon finding input $i$ in the market equals $\tilde{f}_{i}(\boldsymbol{f}_{-i}) \equiv \nu_{i}\rho(\boldsymbol{f}_{-i})+(1-\nu_{i}) \prod_{n\neq i}f_{n}$. Then, the trading probability of an intermediate producer of good $i$ is given by:
\begin{align} \label{trading_prob_links}  \tilde{\mathcal{P}}_i(s,x;\boldsymbol{f}_{-i}) = \phi(s) e^{-\lambda \bar \phi (1-G(x))} \tilde{f}_{i}(\boldsymbol{f}_{-i}).
\end{align}
Up to the different law of motion of demand, the profit maximization problem of intermediate producers is the same as in the baseline dynamic model, as given by (\ref{pmp_dyn}). In this dynamic environment, intermediate producers meet final producers that are heterogeneous with respect to their other input providers. Yet our surplus/price posting protocol implies that intermediate producers cannot price discriminate across potential new customers.\footnote{Hence, in this dynamic framework, price posting is no longer isomorphic to sealed-bid auctions run by final producers.}

\paragraph{Stationary equilibrium.}
As in the baseline dynamic model, the value of final producers embeds an option value of search, which reflects in the reservation surplus (see Appendix \ref{app:links} for the derivation).
\begin{lemma}[Dynamic Reservation Surplus with Link Destruction]
\label{lem:lem:res_surplus_link}
The equilibrium reservation surplus equals:
\begin{align} \label{res_surplus_links}
  x_0 = -\left(\tilde{\alpha}(N)N-1\right)\mathbb{\hat{E}}[x^\star(z)],
\end{align}
where $\tilde{\alpha}(N) \equiv 1-\frac{\beta (1-\beta) \left[f^N-\delta f \rho(N-1)\right]}{1-\beta (\rho(N)-f^N)} \in (0,1)$ summarizes the influence of the option value of search on the reservation surplus.
\end{lemma}

Since the disruption probability is the same across inputs, the stationary equilibrium is symmetric, i.e., $\mu_i = \mu, \nu_i = \nu,  \forall i$. The equilibrium specialization function 
 is implicitly defined by the following condition:
\begin{align} \nonumber
 & \mathcal{D}(z;N)\bigg[A^\prime (s^\star(z))+[1-(1-\delta)\rho(f;N-1)][1+\beta(1-\delta)\rho(f;N-1)]\frac{\phi^\prime(s^\star(z))}{\phi(s^\star(z))} \\ \label{eq_s_links}
 & (A(s^\star(z);z)-x^\star(z))\bigg] = w q^\prime(s^\star(z)).
\end{align}
Equilibrium specialization (\ref{eq_s_links}) differs from its counterpart in the baseline dynamic model (\ref{eq_s_dyn}) due to (i) the different expression for stationary demand, $\mathcal{D}$, and (ii) the multiplier on the component governing the reduction in the trading probability, as the retention probability of an existing customer compounds the probability that the supplier link is undisrupted, $1-\delta$, with the retention probability of the other inputs, $\rho(f;N-1)$.

Since disruptions happen at the level of single supplier relationships, supply chain resilience has a different expression:
\begin{align} \label{resilience_links}
    \tilde{\mathcal{R}}(f;N) \equiv 
    \nu f \rho(f;N-1) + (1-\nu) f^N.
\end{align}
Supply chain resilience is the weighted average of two terms:  (i) the probability that an active final producer is able to produce, conditional on losing one input provider, $f\rho(f;N-1) = f[1-\delta(1-f)]^{N-1}$, and (i) the probability that an inactive final producer is able to produce, $f^N$. These two terms are weighted by the stationary shares of searching final producers that were active and inactive one period earlier, respectively. This expression of supply chain resilience nests the baseline (\ref{resilience}) as the share of searching inactive final producers approaches one.\footnote{Our assumption that firms missing a single input lose contact with other suppliers is key to the tractability of the resilience function. A more general model would allow firms to lose contact with any number of input providers in $[0, N]$, in which case the resilience function would include a combinatorial term. Since that extension adds complexity without providing additional economic insight, we have chosen not to include it. The analytical characterization of that model is available upon request.} In addition to average specialization, production complexity and search frictions, supply chain resilience in the dynamic model with link destruction is determined by the frequency of supplier link disruptions. More frequent disruptions ($\delta \ \uparrow$) reduce supply chain resilience by raising the likelihood that final producers have to replace multiple input providers at the same time.

Output in this economy is given by:
\begin{align} \label{output_dyn_links}
    Y = \underbrace{\vphantom{\frac{1}{\delta}} \mu(f;N)}_{\text{market size}} \ \  
    \underbrace{\vphantom{\frac{1}{\delta}} \tilde{\mathcal{R}}(f;N)}_{\text{resilience}} \ \ \underbrace{\frac{1}{1-(1-\delta)\rho(f;N-1)}}_{\text{robustness}} \ \
    \underbrace{\vphantom{\frac{1}{\delta}} N\mathbb{\hat{E}}[A(s^\star(z);z)]}_{\text{expected match surplus| active}},
\end{align}
where $\mu(f;N)$ is the stationary share of searching final producers.
Output features the same four components discussed in the baseline dynamic model. Hence, supply chain resilience remains a key determinant of aggregate output and welfare.

\paragraph{Efficiency.}
We examine the efficiency of the equilibrium allocation in the dynamic economy with link destruction by comparing it to the allocation chosen by a social planner. The social planner maximizes the utility of the representative household subject to (i) the law of motion of demand for each intermediate producer (\ref{demand_links}), (ii) the law of motion for the mass of searching firms (\ref{mu_links}), (iii) the stock identity for the share of final producers that were active in the previous period,  and (iv) the labor resource constraint. 
See Appendix \ref{app:links} for the technical details. 

Comparing equilibrium and efficient specialization, we find that the dynamic model with link destruction features the same sources of externality as the baseline dynamic model. However, the search externality no longer has a clear sign. In this setting, the search externality includes an additional component that captures the marginal effect of specialization on welfare via the changing composition of the pool of searching final producers, as summarized by $\nu$.
Since higher specialization increases the share of inactive firms, this additional component pushes in the opposite direction relative to the baseline effect transmitted through market tightness.
We summarize the efficiency properties of this economy in the next theorem.
\begin{thm}[Efficiency of the Dynamic Economy with Link Destruction] \label{prop:eff_dynam_links}
   The equilibrium allocation is constrained-efficient if and only if $\chi_1(f;N) \left(N\widetilde{\mathcal{M}}(N)-1\right)\mathbb{\hat{E}}[A(s^\star(z);z)]+x_0=0$, where $\chi_1(f;N) \in (0,1)$, and $\widetilde{\mathcal{M}}(N) \in \left(\frac{1}{2},1\right]$. There exists a finite level of complexity $\underline{N}>1 $ such that if the production process is sufficiently complex, i.e. $N>\underline{N}$, then the dynamic equilibrium with link destruction features over-specialization and under-resilience. 
\end{thm}
\noindent
In Appendix \ref{app:links}, we characterize the multipliers $\widetilde{\mathcal{M}}(N)$, $\chi_1(f;N)$ and $\chi_2(f;N)$, and show that $N>1$ remains a sufficient condition for the equilibrium allocation to exhibit over-specialization and under-resilience over most of the parameter space.

Overall, we have established that endogenous specialization choices of intermediate producers generate a net negative externality on complementary input providers, even in a dynamic economy where the source of risk in supply chains is the potential disruption of single supplier links. We conclude this section with three remarks. First, we illustrate how final producers can mitigate the inefficiency induced by the network externality through risk-management strategies within our model. Second, we revisit the concept of the \textit{weakest link} to characterize supply chain fragility. Finally, we emphasize that the fundamental wedge between equilibrium and planner allocations reflects a trade-off between the value of production in good times and the economic cost of supply chain disruptions.


\subsubsection{Strategies to Mitigate the Impact of Supplier Breakdowns}
In our model, final producers occasionally halt production when they fail to source one or more key inputs. Because intermediate producers do not account for the foregone surplus from complementary inputs when their own input is missing, they tend to design overly specialized product varieties. As a result, production halts occur more frequently than socially desirable. However, if final producers could continue operating despite disruptions in some supplier relationships, the network externality would not arise. We now examine two key strategies that final producers can adopt to mitigate the impact of supplier breakdowns: holding inventories and sourcing inputs from multiple intermediate producers (multi-sourcing). Appendix \ref{app:inv_mult} provides a formal analysis.

\paragraph{Inventories.}
Suppose that active final producers can choose to purchase more units of each input than needed for current production, setting aside inventories as a buffer.\footnote{The role of inventories in supply chains has recently  been the subject of an active literature. See, for example, \cite{alessandria2023aggregate,ferrari2022inventories,carreras2021increasing,carreras_ferrari}.} If a supplier relationship breaks down, these inventories allow production to continue until a new supplier is found or until their stock is depleted. However, due to search frictions, there remains a positive probability that a final producer is unable to locate a compatible intermediate producer for any given number of periods. As a result, no finite level of inventories can completely hedge final producers from disruptions.

\paragraph{Multi-sourcing.}
Suppose final producers can establish links with multiple compatible suppliers met during a supplier search. Specifically, a final producer might maintain an idle backup link with the second-best compatible supplier at a fixed per-period cost. If the link with the best supplier breaks down first, the final producer can immediately switch to the backup supplier and continue production \citep{Elliott2022_Fragility,grossman2023resilience}. However, due to search frictions, the probability of finding multiple compatible intermediate producers is less than one. As a result, not all final producers are able to multi-source all inputs. It follows that multi-sourcing alone cannot completely hedge final producers from disruptions. 
 
\begin{remark} [Inventories \& Multi-sourcing] \label{rem:inv_mult}
    The ability of final producers to hold inventories or source the same input from multiple suppliers can alleviate, but not eliminate, the network externality.
\end{remark}


\subsubsection{Weakest Links and Supply Chain Fragility}

Much of the policy and academic discussion on supply chain fragility has focused on the concept of the \textit{weakest link} \citep{kremer1993ring,levine2012production,jones2011weak}. Due to the complementarities among production inputs, a supply chain is as fragile as its weakest link. Static models of supply chains identify the weakest link, $\hat{j}$, as the input \textit{provider} with the highest disruption probability:
\[
\hat{j}^{\text{static}} = \arg\max_j \delta_j.
\]
For instance, if two final producers source inputs from suppliers located in an area prone to natural disasters or subject to geopolitical tensions, their supply chains would be considered equally fragile. In contrast, our model identifies the weakest link as the \textit{input} that is most likely to be missing. This is a combination of the exogenous disruption probability of the current input provider, $\delta_j$, and the endogenous compatibility probability of alternative suppliers, $\bar\phi_j$:
\[
\hat{j}^{\text{dynamic}} = \arg\max_j \delta_j \left[1 - f(\bar{\phi}_j)\right].
\]
In other words, the weakest link is the input for which the probability that its current provider is disrupted \textit{and} no replacement is found is highest.

Accordingly, even two supply chains sourcing inputs from suppliers in the same ``risky'' area may differ significantly in fragility, depending on how difficult it is to replace those suppliers if needed. Supply chain fragility depends on the joint distribution of supplier disruption probabilities and input finding probabilities, reflecting endogenous specialization decisions.
Since optimal specialization is decreasing in the disruption probability, inputs whose providers are frequently disrupted may not be the \textit{weakest link}, as they endogenously choose to specialize less.

\begin{remark}[Weakest Link]
\label{rem:weakest}
Suppose that the disruption probability  $\delta_j$ is heterogeneous across input markets $j$, while all other parameters are symmetric. The input with the highest disruption probability is not necessarily the weakest link of the supply chain: it is possible that $\delta_j \left[1 - f(\bar{\phi}_j)\right]>\delta_k \left[1 - f(\bar{\phi}_k)\right]$ \ while \ $\delta_j<\delta_k$.
\end{remark}

\subsubsection{Productivity vs Resilience: A Dynamic Trade-off}
Our key result from Theorem \ref{prop:eff_dynam_links}  is that endogenous specialization distorts the efficient balance between productivity in good times and resilience in bad times -- that is, the speed of recovery from disruptions. Figure (\ref{fig:resil_vs_prod}) summarizes this result graphically.
When operations run smoothly, supply chains in the market equilibrium are \textit{more} productive than socially desirable, due to the high specialization of the inputs used in final production. However, this higher productivity comes at a cost: when a supplier relationship breaks down, the same specialization that boosts productivity makes recovery slower than efficient, delaying the return to normal operations.
\begin{figure}[ht]
\begin{center}
\caption{Productivity-Resilience Trade-off}
\label{fig:resil_vs_prod}
\includegraphics[width=0.6\textwidth,keepaspectratio]{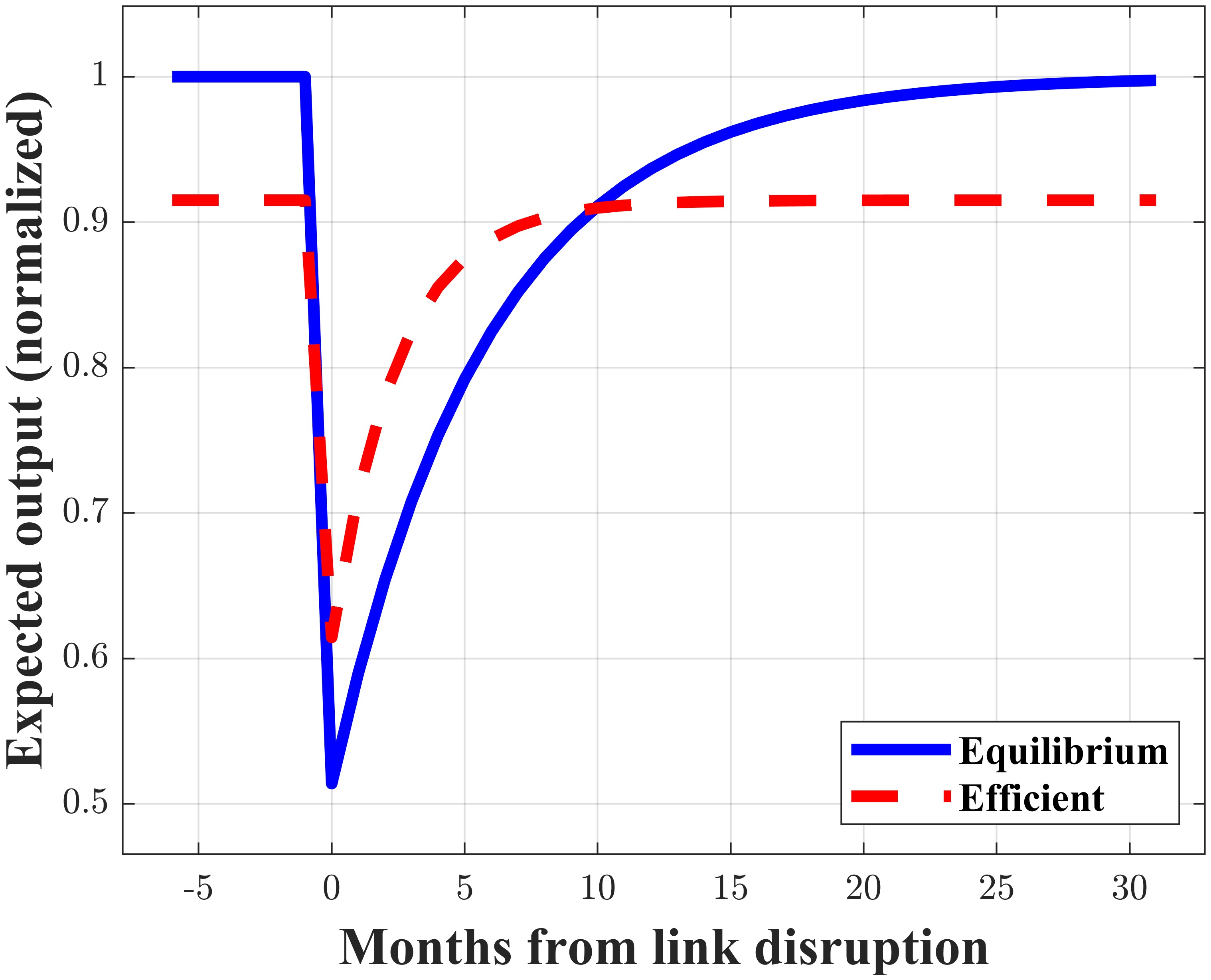}
\begin{minipage}{0.7\textwidth} \scriptsize{} \textit{Note}:  The graph depicts the expected output of an active final producer conditional on facing one link disruption shock in month $0$ and no disruptions in other months. The background data are generated from a quantitative version of the dynamic model with link destruction in steady state, assuming a monthly link disruption probability $\delta=5\%$, a monthly number of potential suppliers screened  $\lambda=24$, and a number of critical inputs sourced in the market $N=3$. The equilibrium input finding probability equals $f=54\%$.
\end{minipage} 
\end{center}
\end{figure}

This distortion in the efficient balance between productivity and resilience is driven by the network externality: single intermediate producers do not internalize how their specialization choices affect the likelihood that final producers 
source all the required inputs, and therefore the unrealized value added from complementary
inputs if production halts.  Crucially, this externality emerges precisely when a final producer is unable to find a compatible supplier -- meaning it cannot be corrected through a bilateral contractual agreement, as there are simply no two counterparties to sign one.
 On the other hand, 
 final producers can partially offset this externality through strategies to mitigate the impact of supplier breakdowns, such as holding inventories and multi-sourcing. In the next section, we study how a constrained social planner can fix the network externality by inducing specific terms-of-trade in firm-to-firm transactions.

\section{Normative Analysis}\label{normative_sec}

In this section, we study two normative approaches to address the inefficiency induced by equilibrium specialization choices. First, we characterize the optimal policy that decentralizes the constrained-efficient allocation in the economies we have studied in the previous sections. Next, we consider a planner setting \textit{standards} or \textit{minimum interoperability} requirements.

\subsection{Optimal Policy}

\paragraph{Static economy.} 
From Theorem \ref{thm:over-specialization}, it follows that intermediate producers offer too high a surplus $x^\star(z)$ to final producers in equilibrium. The constrained-efficient surplus offered equals:
\begin{align} \nonumber
    \mathcal{X}(z)=& \ x^\star(z)-\left(1-f(z)\right) \left[(N-1) \mathbb{\hat{E}}[A(s^\star(\tilde{z});\tilde{z})]+x_0\right] \\ \label{eff_surplus_static}
    =& \ -\left(1-f(z)\right) (N-1) \mathbb{\hat{E}}[A(s^\star(\tilde{z});\tilde{z})] + f(z)\mathbb{E}_{\max\{\tilde{z}\}|\tilde{z}\leq z}[A\left(s^\star(\tilde{z});\tilde{z}\right)],
\end{align}
where the second equality follows from substituting for the equilibrium surplus offered (\ref{eq_x_solved}) and reservation surplus (\ref{eq:res_surplus}). Notice that the constrained-efficient surplus offered has the same structure as the equilibrium surplus offered: it consists of a weighted average between a \textit{shadow} reservation surplus, $-(N-1) \mathbb{\hat{E}}[A(s^\star(\tilde{z});\tilde{z})]$, and the expected match surplus from the second-best compatible intermediate producer in contact with a final producer. The weights correspond to the probability that a final producer is not in contact with an alternative compatible intermediate producer with productivity lower than $z$, $1-f(z)$, and at least one such producer, $f(z)$, respectively. However, the constrained-efficient and equilibrium surplus differ in the level of the reservation surplus: in equilibrium, final producers are willing to bear losses on each input line as large as the expected surplus \textit{offered} by the other input lines, $-(N-1) \mathbb{\hat{E}}[x^\star(\tilde{z})]$, whereas a social planner would set the reservation surplus equal to the expected \textit{match} surplus from the other input lines. These observations allow us to establish the following result.
%
\begin{prop}[Optimal Subsidy Schedule in a Static Economy]\label{prop:norm_static}
The social planner can decentralize the constrained-efficient allocation with a lump-sum transfer to active final producers:
\begin{align*}
    T^\star= (N-1) \left(\mathbb{\hat{E}}[A(s^\star(\tilde{z});\tilde{z})]-\mathbb{\hat{E}}[x^\star(\tilde{z})]\right)>0, \quad \forall N>1.
\end{align*}
\end{prop}
Proposition \ref{prop:norm_static} characterizes the lump-sum transfer to active final producers that decentralizes the constrained-efficient allocation. The intuition is as follows. The social planner aims to increase the surplus appropriated by intermediate producers when final production takes place, thereby giving them more \textit{skin in the (final production) game}. As intermediate producers appropriate a larger share of surplus, the opportunity cost of marginally increasing specialization rises, while the marginal benefit from higher match surplus remains unchanged. To achieve this, the planner can implement a lump-sum subsidy to \textit{active} final producers. Because final producers receive additional compensation upon successfully sourcing all required inputs, they optimally lower their reservation surplus---or, equivalently, raise their reservation price---for each input. Anticipating this lower reservation surplus, intermediate producers reduce their surplus offered to final producers.

 Since the sensitivity of the surplus offered to the reservation surplus increases with the probability that a final producer does not meet alternative compatible suppliers with lower productivity, the reduction in surplus offered is heterogeneous across intermediate producers. Specifically, lower-productivity intermediate producers reduce their surplus offered by more. As a result, the lump-sum subsidy induces a decline in specialization that is concentrated among low-productivity intermediate producers and gradually fades out for high-productivity ones. Figure (\ref{fig:opt_subs}) illustrates this differential incidence: while the constrained-efficient allocation calls for higher (lower) prices (specialization) across \textit{all} intermediate inputs, the adjustment is substantially larger for low-productivity firms.
\begin{figure}[ht]
\begin{center}
\caption{Implicit Targeting Principle - Optimal Lump-Sum Subsidy}
\label{fig:opt_subs}
\includegraphics[width=0.6\textwidth,keepaspectratio]{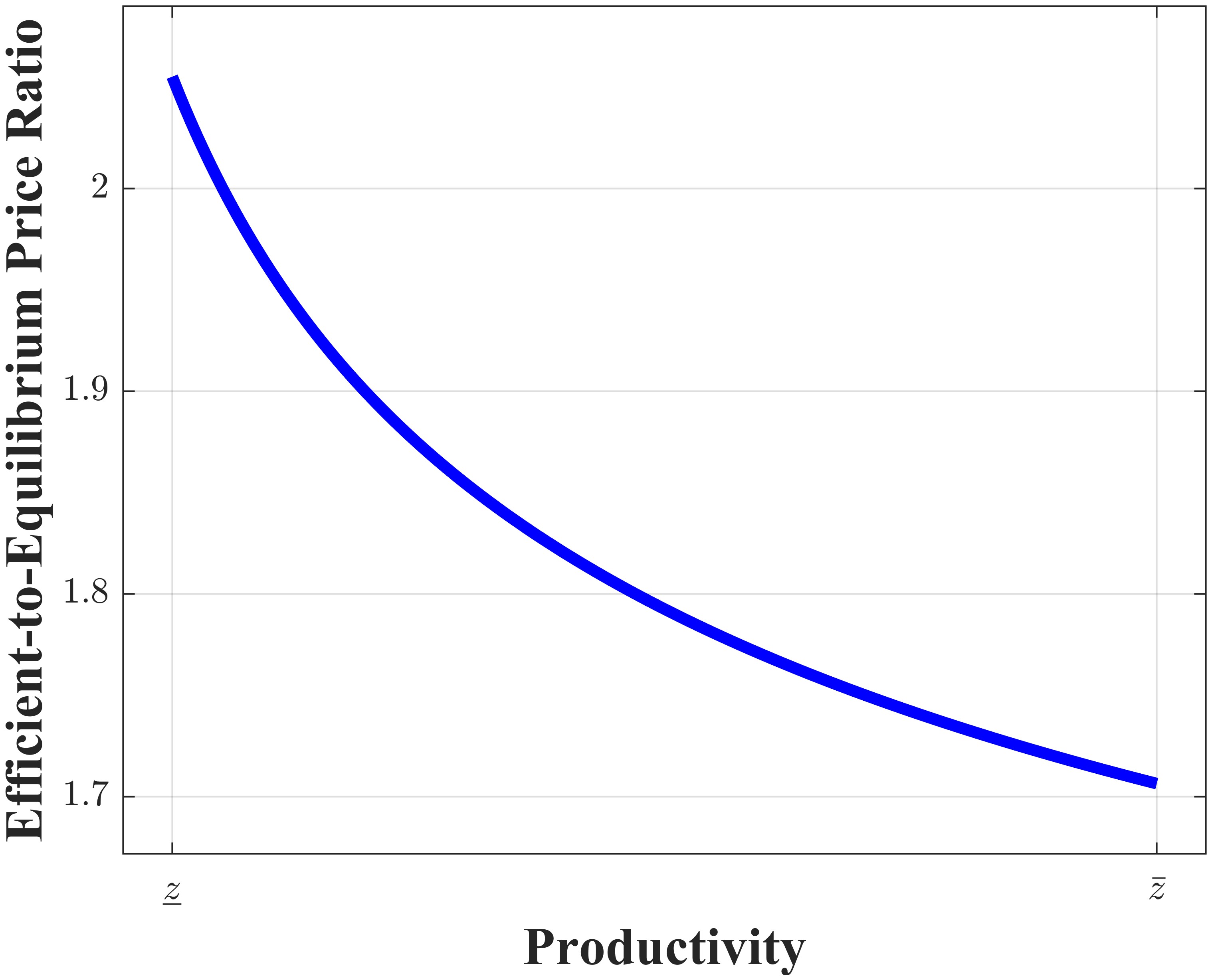}
\begin{minipage}{0.7\textwidth} \scriptsize{} \textit{Note}:  The graph shows the efficient price $\mathbf{P}(z) \equiv A(\Sc(z))-\mathcal{X}(z)$ scaled by the equilibrium price $p^\star(z) \equiv A(s^\star(z))-x^\star(z)$ over the support of the productivity distribution. Therefore, the market equilibrium prices are implicitly a horizontal line with an intercept equal to 1. The background data are generated from a quantitative version of the dynamic model with link destruction in steady state, assuming a monthly link disruption probability $\delta=5\%$, a monthly number of potential suppliers screened $\lambda=24$, and a number of critical inputs sourced in the market $N=3$. The equilibrium input finding probability equals $f=54\%$.
\end{minipage} 
\end{center}
\end{figure}

The economic rationale for this differential incidence is twofold. First, low-productivity firms are farther from their socially desirable specialization level, because the intensity of the net (negative) externality depends on the probability that a final producer fails to find a compatible supplier with lower productivity, which is particularly high for such firms. As a consequence, they offer excessively high surplus relative to the efficient benchmark, leaving them with too little skin in the final-production game. Second, under price posting, the price set by a firm with productivity $z$ depends on the specialization choices of all potential competitors with productivity $z^\prime < z$. Distorting the prices of the least productive firms, therefore, shifts the entire price distribution, whereas distorting the prices of the most productive firms does not trickle down to less productive competitors.

\paragraph{Dynamic economy.}
By combining (\ref{eq_s_dyn}) with (\ref{eff_s_dyn}), we can follow the same steps as in the static model to obtain the surplus offered that decentralizes the constrained-efficient allocation:
\begin{align} \nonumber
    \mathcal{X}(z)=& \ x^\star(z)-\left(1-f(z)\right) \left[\left(\mathcal{M}(N) N-1\right)\mathbb{\hat{E}}[A(s^\star(\tilde{z});\tilde{z})]+x_0\right] \\ \label{eff_surplus_dynamic}
    =& \ -\left(1-f(z)\right) (\mathcal{M}(N)N-1) \mathbb{\hat{E}}[A(s^\star(\tilde{z});\tilde{z})] + f(z)\mathbb{E}_{\max\{\tilde{z}\}|\tilde{z}\leq z}[A\left(s^\star(\tilde{z});\tilde{z}\right)],
\end{align}
where the second equality follows from substituting for the equilibrium surplus offered (\ref{eq_x_solved}) and reservation surplus (\ref{res_surplus_dyn}).
This result allows us to derive the optimal lump-sum subsidy in the dynamic context.
\begin{prop}[Optimal Subsidy Schedule in a Dynamic Economy]\label{prop:norm_dynamic}
The social planner can decentralize the constrained-efficient allocation with a lump-sum transfer to active final producers:
\begin{align*}
    T^{\star}= \frac{(\mathcal{M}(N)N-1)\,\mathbb{\hat{E}}[A(s^\star(\tilde{z});\tilde{z})]-\bigl(\alpha(N)N-1\bigr)\,\mathbb{\hat{E}}[x^\star(\tilde{z})]}{\alpha(N)}>0, \quad \forall N>1,
\end{align*}
where $\alpha(N) \equiv  1-\frac{\beta(1-\delta) f^N}{1-\beta(1-\delta)(1-f^N)}$.
\end{prop}
Proposition \ref{prop:norm_dynamic} shows that the optimal subsidy in the dynamic economy factors in the additional inefficiency brought about by the search externality -- subsumed by the multiplier $\mathcal{M}(N) \in \left(\frac{1}{2},1 \right]$. For the dynamic model with link destruction, all the results carry over by substituting $\chi_1(f;N)(\widetilde{\mathcal{M}}(N)N-1)$ and $\tilde{\alpha}(N)$ in place of $(\mathcal{M}(N)N-1)$ and $\alpha(N)$, respectively.

Overall, our results show that a properly calibrated lump-sum subsidy to active final producers can decentralize the constrained-efficient level of specialization. For the government budget to be balanced, the lump-sum subsidy can be financed through a lump-sum tax on households.

Finally, note that the possibility of decentralizing the efficient allocation by redistributing surplus across trading partners relates to the well-known \textit{Hosios condition} in matching models \citep{hosios1990efficiency}.\footnote{Rather than resorting to a lump-sum transfer to active final producers, the social planner may also decentralize the constrained-efficient allocation with a targeted transaction subsidy schedule $\tau^\star(z)$ such that the price of the transactions is given by $\mathbf{P}(z)=p^\star(z)+\tau^\star(z)$, where
$\tau^\star(z)= (1-f(z)) T^\star>0$ in the static economy, and
$\tau^\star(z)= (1-f(z)) \  \alpha(N) T^\star>0$ in the dynamic economy.} However, the economic intuition behind the two results is very different. In matching models, the inefficiency of the equilibrium comes from \textit{thin} and \textit{thick market} effects governed by the shape of the matching function. Accordingly, efficient entry of sellers obtains when the surplus share accruing to them equals their marginal contribution to matching (\textit{matching elasticity}). In our framework, there is no matching function and the number of meetings per intermediate producer is either exogenously given (in the static model) or determined by the relative mass of searching agents on each side of the market (in the dynamic model). Consequently, the matching elasticity is zero in the static model and proportional to the search externality in the dynamic model. On the contrary, the key source of inefficiency -- the network externality -- arises from the complementarity among production inputs. Hence, efficient specialization requires that the surplus share accruing to intermediate producers also compensates for the foregone surplus from complementary inputs caused by higher specialization of their own variety. \cite{mangin2021efficiency} provide a framework to reconcile these two different economic forces through the notion of a \textit{generalized} Hosios condition, according to which the efficient surplus share equals the sum of matching elasticity \textit{and surplus elasticity}.

\subsection{Standards and Minimum Interoperability}
The optimal policy to decentralize the constrained-efficient allocation requires subsidizing profit-making firms by taxing households. This may be politically infeasible. A common non-fiscal, regulatory approach used by governments is setting standards or minimum interoperability requirements.\footnote{A recent example of such policies is the implementation of the EU Common Charger Directive in 2024, imposing compatibility across producers on the use of USB-C chargers.} 
Through the lens of our model, we interpret these standards as caps on the degree of incompatibility, or, equivalently, a maximum level of specialization. Formally, we consider a Ramsey planner choosing a standard $\bar s$ associated with some minimum level of compatibility $\underline{\phi}$. The planner is constrained by each agent's equilibrium choices given the policy. We first prove that standards are welfare-improving in our framework. Then, we provide suggestive evidence that standards can bring the equilibrium significantly closer to the constrained-efficient allocation.
\begin{prop}[Standards Improve Welfare] 
\label{prop:stds}
   If the production process is complex,  setting a standard or a minimum interoperability requirement is welfare-improving. 
\end{prop}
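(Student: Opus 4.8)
The plan is to prove existence of a welfare-improving cap by perturbing the laissez-faire equilibrium. The starting point is the over-specialization result: by Proposition \ref{prop:over-specialization} (and its dynamic analogues, Propositions \ref{prop:eff_dynam} and \ref{prop:eff_dynam_links}), when $N>1$ the marginal welfare effect of specialization is strictly negative at the equilibrium for every type, $\frac{\partial \mathcal{W}}{\partial s(z)}\big|_{s(z)=s^\star(z)}<0$. The structural fact I would lean on is that, by Lemma \ref{lem:offered_surplus}, the business-stealing and appropriability terms in (\ref{dwds}) cancel at \emph{any} symmetric price-posting equilibrium, not only at laissez-faire, so the residual wedge is always the strictly negative network externality. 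Hence lowering the specialization of any firm, holding the remaining profile fixed, raises welfare; this is the force a standard exploits.

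Next I would cast the Ramsey problem as a one-dimensional maximization of equilibrium welfare $W(\bar s)\equiv \mathcal{W}[s^\star(\cdot;\bar s)]$ over the cap $\bar s$, where $s^\star(\cdot;\bar s)$ is the specialization profile consistent with the constraint $s(z)\le \bar s$ and with the induced aggregates $\bar\phi$, $f$, $G$, and $w$. For $\bar s\ge \sup_z s^\star(z)$ the constraint slacks and $W(\bar s)=W_{LF}$, the laissez-faire level, so it suffices to show that lowering $\bar s$ just below $\sup_z s^\star(z)$ — forcing a positive mass of the most specialized firms down to $\bar s$ — strictly raises $W$. Differentiating gives $\frac{dW}{d\bar s}=\int \frac{\partial \mathcal{W}}{\partial s(z)}[s^\star(\cdot;\bar s)]\,\frac{d s^\star(z;\bar s)}{d\bar s}\,dz$, which I would split into a direct term over the constrained types, where $\frac{d s^\star}{d\bar s}=1$ and the integrand is strictly negative (so tightening the cap raises welfare), and an indirect general-equilibrium term over the unconstrained types, which re-optimize in response to the induced change in $\bar\phi$ (hence $f$), in $G$, and in the wage $w$.

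The main obstacle is signing this indirect term. Forcing the top firms to lower $s$ raises $\bar\phi$ and $f$; through the trading probability $\mathcal{P}(z;N)=\phi(s)e^{-\lambda\hat\phi(z,\bar z)}f^{N-1}$ this pulls unconstrained firms toward \emph{more} specialization when $f<(N-1)/N$ and toward less otherwise, while the sharper competition among the now-more-compatible top firms pushes the opposite way. I would therefore close the argument by bounding the feedback: the aggregates depend smoothly (Lipschitz) on the profile, so the re-optimization of the unit mass of unconstrained firms is controlled by $d\bar\phi/d\bar s$, which is itself proportional to the mass of constrained firms, whereas the direct term loads the full uninternalized network externality precisely on those top firms, where it is largest. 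The plan is to show the direct gain dominates — immediately in the regime $f>(N-1)/N$, where the feedback reinforces it, and in general by computing the leading coefficient of $\frac{dW}{d\bar s}$ as the binding set enters and verifying it is strictly negative. Since $W$ is continuous and equals $W_{LF}$ at the threshold while this one-sided coefficient is negative, an interval of caps delivers $W(\bar s)>W_{LF}$, proving the claim; the dynamic and link-destruction versions follow identically upon replacing the network wedge with its counterparts in (\ref{dwds_dynamic}) and Proposition \ref{prop:eff_dynam_links}.
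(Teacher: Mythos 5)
Your first paragraph \emph{is} the paper's entire proof: the published argument for Proposition \ref{prop:stds} consists of the single observation that $N>1$ implies $\partial\mathcal{W}/\partial s(z)\big|_{s(z)=s^\star(z)}<0$ for every $z$ (from the proof of Proposition \ref{prop:eff_dynam}), so that capping the most specialized firms is a welfare-improving perturbation of the profile. Where you go beyond the paper is in worrying about the equilibrium re-optimization of the \emph{unconstrained} firms once the cap moves $\bar\phi$, $f$, $G$ and $w$. That worry is legitimate: the paper's own Appendix \ref{app:norm} notes that $\partial s^\star_{-j}/\partial\bar{s}_j<0$, i.e.\ a tighter standard pushes unconstrained complementary-input firms toward \emph{more} specialization, which enters $dW/d\bar{s}$ with the unfavourable sign, and the two-line published proof silently ignores this feedback. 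However, the bounding argument you sketch does not close the gap. Writing $\epsilon=\sup_z s^\star(z)-\bar{s}$, the direct gain integrates an $O(\epsilon)$ reduction in $s$ over an $O(\epsilon)$ mass of constrained types and is therefore $O(\epsilon^2)$; but the induced change in $\bar\phi$ (and hence in $f$, $G$, $w$) is also $O(\epsilon^2)$, so the indirect term over the unit mass of unconstrained types is of exactly the same order. The claim that the feedback is ``controlled by the mass of constrained firms'' while the direct term ``loads the full externality'' therefore does not by itself deliver a sign; one would have to compute and sign the leading coefficient of $dW/d\bar{s}$ at the threshold, which you explicitly defer. In short: you reproduce the paper's argument, correctly identify the step it leaves implicit, but your proposed repair of that step is a plan rather than a proof --- your attempt is no less complete than the paper's, yet the announced strengthening is not actually established.
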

\noindent
The intuition behind Proposition \ref{prop:stds} is straightforward. Since all intermediate producers design overly specialized product varieties in equilibrium, constraining the degree of specialization of the most specialized producers brings about a welfare improvement. 

Although uniform standards can improve welfare, they remain second-best policies in our framework, since implementing the efficient allocation requires a transfer to final producers, as shown in Propositions \ref{prop:norm_static}-\ref{prop:norm_dynamic}.
In Appendix \ref{app:norm} we solve the social planner problem when only standards are available as policy tools. The condition characterizing the optimal standard for each input comprises two components. On the one hand, the optimal standard balances the average marginal benefit and marginal cost of loosening the standard for constrained intermediate producers, mirroring the privately optimal specialization condition \eqref{eq_s}. On the other hand, it additionally internalizes the marginal impact of aggregate specialization on network and appropriability externalities across markets. Equation \eqref{opt_std} in Appendix \ref{app:norm} provides a formal characterization of these two components.

The extent to which standards can bring the equilibrium allocation closer to the efficient one is inherently a quantitative question, which hinges upon the specific market structure.
Figure (\ref{fig:opt_stds}) shows that setting standards can be a powerful instrument -- at least for some market structures.
Figure (\ref{fig:spec_stds}) sheds light on the cross-sectional impact of setting standards along the productivity distribution. The optimal standard implies that the most productive intermediate producers are less specialized (more compatible) than efficient, while the least productive are more specialized (less compatible) than in the first-best allocation. Hence, as long as specialization is increasing in productivity, the targeting principle of the optimal standard stands in contrast with that of the first-best policy (see Figure (\ref{fig:opt_subs})): rather than mainly targeting the specialization decisions of low-productivity intermediate producers (and let them spill over higher up in the productivity distribution through the price posting mechanism), the optimal standard brings aggregate specialization closer to its constrained-efficient level by distorting the specialization decisions of high-productivity intermediate producers.
In Table \ref{tab:stds}, we explore how both the optimal level of standards and their welfare effects vary with the structural parameters of the economy.
Implementing optimal standards is more effective
in bringing the economy closer to its efficient benchmark the higher the search efficiency, the
lower the disruption probability, and the lower the production complexity.

\begin{figure}[htb]
\begin{center}
\caption{Welfare and Compatibility: first-best vs standard}

\begin{subfigure}{0.485\textwidth}
\includegraphics[width=\textwidth,keepaspectratio]{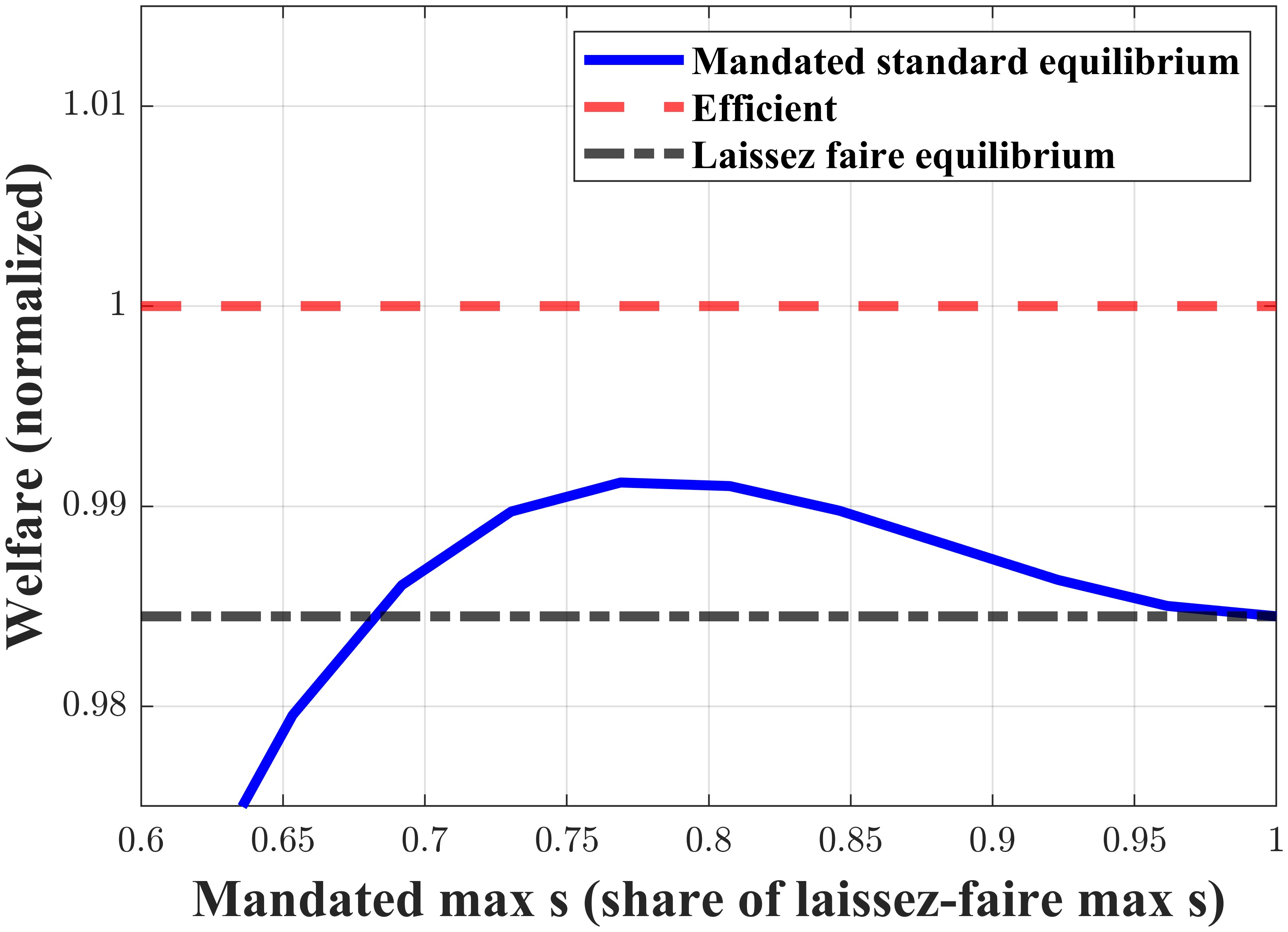}  
\caption{Welfare}
\label{fig:opt_stds}
\end{subfigure}
\begin{subfigure}{0.485\textwidth}
\includegraphics[width=\textwidth,keepaspectratio]{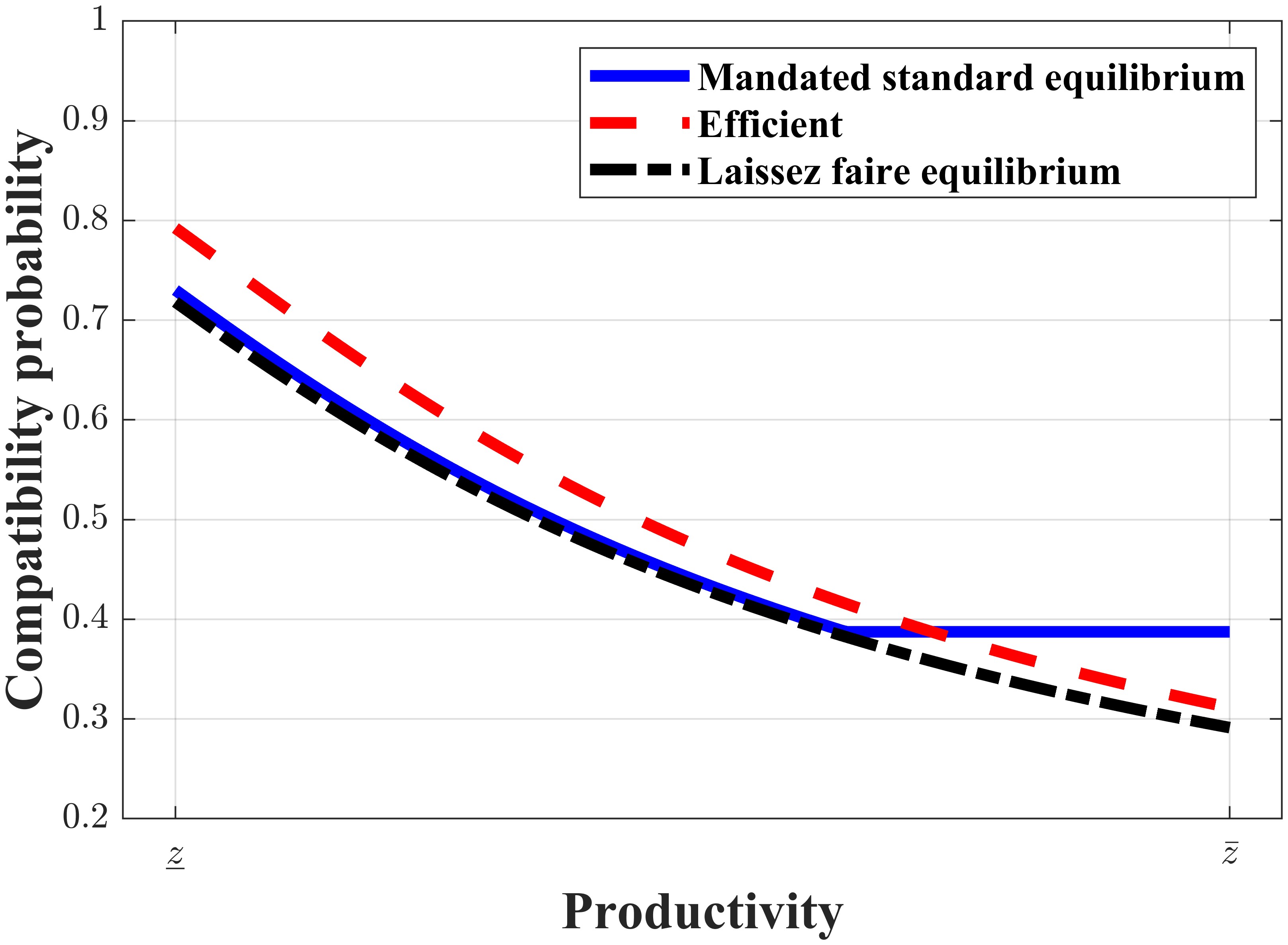}  
\caption{Compatibility}
\label{fig:spec_stds}
\end{subfigure}

\begin{minipage}{\textwidth} \scriptsize{} \textit{Note}:  Panel a) reports in solid blue aggregate steady-state welfare $\mathcal{W}$ for each mandated standard (i.e., cap on specialization) ranging from the lowest to the highest specialization in laissez faire equilibrium. The upper bound in dashed red represents the constrained-efficient steady-state welfare, the lower bound in dot-dashed black represents the equilibrium welfare with no standards in place (laissez faire). Welfare is measured in consumption equivalent variation. Panel b) displays the compatibility probability function under three scenarios: in the mandated standard equilibrium, $\phi(s^\star(z);\bar{s})$ (solid blue); in the constrained-efficient allocation, $\phi(\mathcal{S}(z))$ (dashed red); and in laissez-faire equilibrium $\phi(s^\star(z))$ (dot-dashed black).
The background data are generated from a quantitative version of the dynamic model with steady-steady behavior (see Section \ref{sec:dyn_ss}) assuming a monthly disruption probability $\delta=10\%$, a monthly number of potential suppliers screened $\lambda=3$, and a number of critical inputs sourced in the market $N=6$. The equilibrium input finding probability equals $f=75\%$. Under steady-state behavior, the constrained-efficient steady-state welfare is the maximum attainable.
\end{minipage} 
\end{center}
\end{figure}




\section{Relation to Empirical Evidence}  

We conclude by highlighting how our model helps rationalize existing empirical evidence. 
First, our modeling of supply chains as complex final production processes involving multiple complementary inputs is consistent with the shock-propagation patterns observed after the 2011 Tohoku Earthquake, as documented by \cite{boehm2019propagation} and \cite{carvalho2021supply}. In particular, \cite{boehm2019propagation} finds that ``\textit{the strong complementarity across material inputs implied that non-Japanese imported input use also fell nearly proportionately, thereby propagating the shock to other upstream firms}.'' In our dynamic model with link destruction, a shock that disrupts a supplier link reduces, on average, the sales of other suppliers to the same final producer, because the latter is sometimes unable to replace the missing input. This empirical evidence is therefore consistent with the propagation mechanism underpinning our horizontal network externality. 
Similarly, using U.S.\ Compustat data, \cite{barrot2016input} shows that shocks to suppliers propagate to firms sharing common customers with the disrupted firms only when the latter produce relationship-specific inputs that are not easily substitutable. In the same vein, \cite{Sanz2023resilience} documents that American firms with global supply chains reduce the number of shipments of ``specialized" (i.e., with a few potential suppliers worldwide) inputs by $73\%$ in the month following a flood hitting their input provider -- the reduction being negligible for generic inputs due to supplier switching. These findings provide an important validation of our dynamic notion of resilience: supplier-link disruptions translate into supply chain disruptions only if the final producer is unable to replace the broken link. The probability of replacement is inversely related to the degree of specialization of the respective input.  

Using firm-to-firm transaction data from India, \cite{khanna2022india} document that (i) robustness of final production is positively correlated with the specialization of intermediate inputs, and (ii) final producers whose suppliers were more exposed to Covid-19 restrictions reduced output and input purchases, as some supplier links were severed and replacements could not be found. Fact (i) aligns with the endogenous choice of input specialization in our dynamic models (see equations (\ref{eq_s_dyn}) and (\ref{eq_s_links})): the higher the robustness of the supplier link, the higher the optimal specialization. Fact (ii) reinforces the idea that replacing missing input providers is costly (in terms of final output) and time-consuming.  



\section{Conclusions}
In this paper, we study a model of frictional supply chain formation in which heterogeneous intermediate producers choose the specialization of their goods. Higher specialization increases value added but reduces the pool of compatible final producers. Final producers operate complex supply chains, defined as production processes requiring multiple complementary inputs. This framework delivers a precise and welfare-relevant notion of supply chain resilience: the probability that final production is restored in a given period
following a disruption. In the model, supply chain resilience is negatively affected by the average specialization of goods, the extent of search frictions, the frequency of disruptions, and the complexity of the production process. 

We study the efficiency properties of equilibrium specialization. We show that the combination of endogenous specialization and price-posting in frictional markets with simultaneous search implies that appropriability and business-stealing externalities perfectly offset each other within input markets. However, individual intermediate producers do not internalize how their specialization choices affect the likelihood that final producers
source all the required inputs, and therefore the unrealized value added from complementary inputs if production halts.
This network externality induces over-specialization in equilibrium. As a result, supply chains are excessively productive during normal times, when operations run smoothly.  Yet, over-specialization also makes supply chains less resilient than is socially efficient: when a disruption occurs, production takes too long to recover. A social planner would re-balance this trade-off between productivity and resilience, placing greater weight on the latter. We characterize how a social planner can decentralize efficient supply chain resilience through a lump-sum subsidy to active final producers. Restricting the set of policy tools to budget-neutral instruments, we show that setting compatibility standards is an effective way to mitigate over-specialization.

This paper sheds light on the sources of inefficiencies that are likely to affect supply chain resilience. 
Our analytical results call for further work to quantify the qualitative mechanisms we have highlighted. An interesting avenue for future research is to revisit the social gains from vertical and horizontal integration within supply chains, accounting for the network externality.

\bibliographystyle{apalike}
\bibliography{literature}
\clearpage
\newpage

\appendix
\newpage

\renewcommand{\thesection}{\Alph{section}}
\renewcommand{\thefigure}{A.\arabic{figure}}
\renewcommand{\thetable}{A.\arabic{table}}
\renewcommand{\thelemma}{A.\arabic{lemma}}
\renewcommand{\theprop}{A.\arabic{prop}}
\renewcommand{\theequation}{A.\arabic{equation}} 

\setcounter{figure}{0}
\setcounter{table}{0}
\setcounter{prop}{0}
\setcounter{lemma}{0}
\setcounter{equation}{0} 

\renewcommand{\theappprop}{A.\arabic{appprop}}
\renewcommand{\theapplemma}{A.\arabic{applemma}}

{
\begin{center}

{\Huge {Appendix} }
\end{center}}

\justifying

\section{Additional Material and Derivations}
\label{sec:additional_material}

\subsection{Definition of Expectation Operators}
\label{app:static}
Throughout the text, the expectation operator $\mathbb{E}[.]$ refers to the productivity distribution $\Gamma(z)$:
$$\mathbb{E}[X] \equiv \int_{\underline{z}}^{\bar{z}} X(z) \gamma(z)dz.$$
The expectation operator $\mathbb{E}_{\max\{\tilde{z}\}}$ refers to the productivity distribution of the highest-surplus-offering compatible intermediate producer:
$$\mathbb{E}_{\max\{\tilde{z}\}}[X] \equiv \int_{\underline{z}}^{\bar{z}} X(\tilde{z}) e^{-\lambda\hat{\phi}(\tilde{z},\bar{z})} \lambda\phi(s(\tilde{z}))\gamma(\tilde{z})d\tilde{z}.$$
The expectation operator $\mathbb{E}_{\max\{\tilde{z}\}|\tilde{z}\leq z}$ refers to the productivity distribution of the highest-surplus-offering compatible intermediate producer with lower productivity than $z$ contacted, conditional on its presence:
$$\mathbb{E}_{\max\{\tilde{z}\}|\tilde{z}\leq z}[X] \equiv \int_{\underline{z}}^z X(\tilde{z}) \frac{e^{-\lambda\hat{\phi}(\tilde{z},z)} \lambda\phi(s(\tilde{z}))\gamma(\tilde{z})}{f(z)}d\tilde{z}.$$
Finally, the expectation operator $\mathbb{\hat{E}}[.]$ refers to the productivity distribution of active matches:
$$\mathbb{\hat{E}}[X] \equiv \frac{\mathbb{E}_{\max\{\tilde{z}\}}}{f} = \int_{\underline{z}}^{\bar{z}} X(\tilde{z}) \frac{e^{-\lambda\hat{\phi}(\tilde{z},\bar{z})} \lambda\phi(s(\tilde{z}))\gamma(\tilde{z})}{f}d\tilde{z}.$$

\subsection{Generalized Production Function} \label{app:ces_prod_function}
In this section, we show how our final production structure can be derived from a more general class of models characterized by complementarity across indivisible production inputs. 

Each final producer $i$ operates a constant-elasticity-of-substitution (CES) production function combining $N>1$ complementary inputs into a consumption good:
\begin{align} \label{def:prod_fnct_ces}
Y_i = \left(\sum_{j=1}^N A_j \right)^{-\frac{1}{\eta}+1}\left(\sum_{j=1}^N A_j y_j^\eta \right)^{\frac{1}{\eta}},
\end{align}
where $\eta \leq 0$. The first multiplier makes sure that final producers exhibit no (dis-)like for variety. \\ \indent
Final producers solve the following profit maximization problem:
\begin{align}
    \pi_i = \max_{y_j \in \mathbb{N} \mathbbm{1}_j} \ Y_i-\sum_{j=1}^N p_j y_j.
\end{align}
A final producer can demand any quantity from an intermediate producer of good 
$j$ with whom she is connected, i.e., for which $\mathbbm{1}_j=1$.
Demand for each intermediate good is subject to an integer constraint, reflecting the indivisibility of goods typically traded in supply chains (e.g., engines, tires, chassis). The optimal demand for an intermediate good $j$ reads:
\begin{align*}
    y_j^\star: \begin{cases}
        Y(y_j^\star;\boldsymbol{y}_{-j}^\star)-Y(y_j^\star-1;\boldsymbol{y}_{-j}^\star) \geq p_j, \\
        Y(y_j^\star+1;\boldsymbol{y}_{-j}^\star)-Y(y_j^\star;\boldsymbol{y}_{-j}^\star) < p_j.
    \end{cases}
\end{align*}
If the distribution of quality-adjusted prices, $p_j/A_j$, has sufficiently low variance -- which is the case in equilibrium if the variance of the intermediate productivity distribution is small --, then the optimal demand for each intermediate good is price-insensitive -- just like in our baseline Leontief specification of the production function. 
Hence, output varies across final producers only because of the value added of inputs sourced: 
$$y_j^\star = y^\star \ \forall j \implies Y_i = \mathcal{Q}_i y^\star, \ \mathcal{Q}_i \equiv \sum_j A_j.$$
According to this generalized CES production function, the input value added $A_j$ increases the \textit{quantity} produced by the respective final producers, being the quality of the consumption good constant. Hence, this economy differs from our baseline only in terms of quantities produced, while being observationally equivalent as far as total expenditures and welfare are concerned.

\subsection{Specialization Function} \label{app:spec_function}
\begin{figure}[H]
\begin{center}
\caption{Specialization function: the role of the productivity distribution}
 \label{fig:spec_fnct} \centering
    \begin{minipage}{0.3\textwidth}
        \centering
         \includegraphics[width=\textwidth]{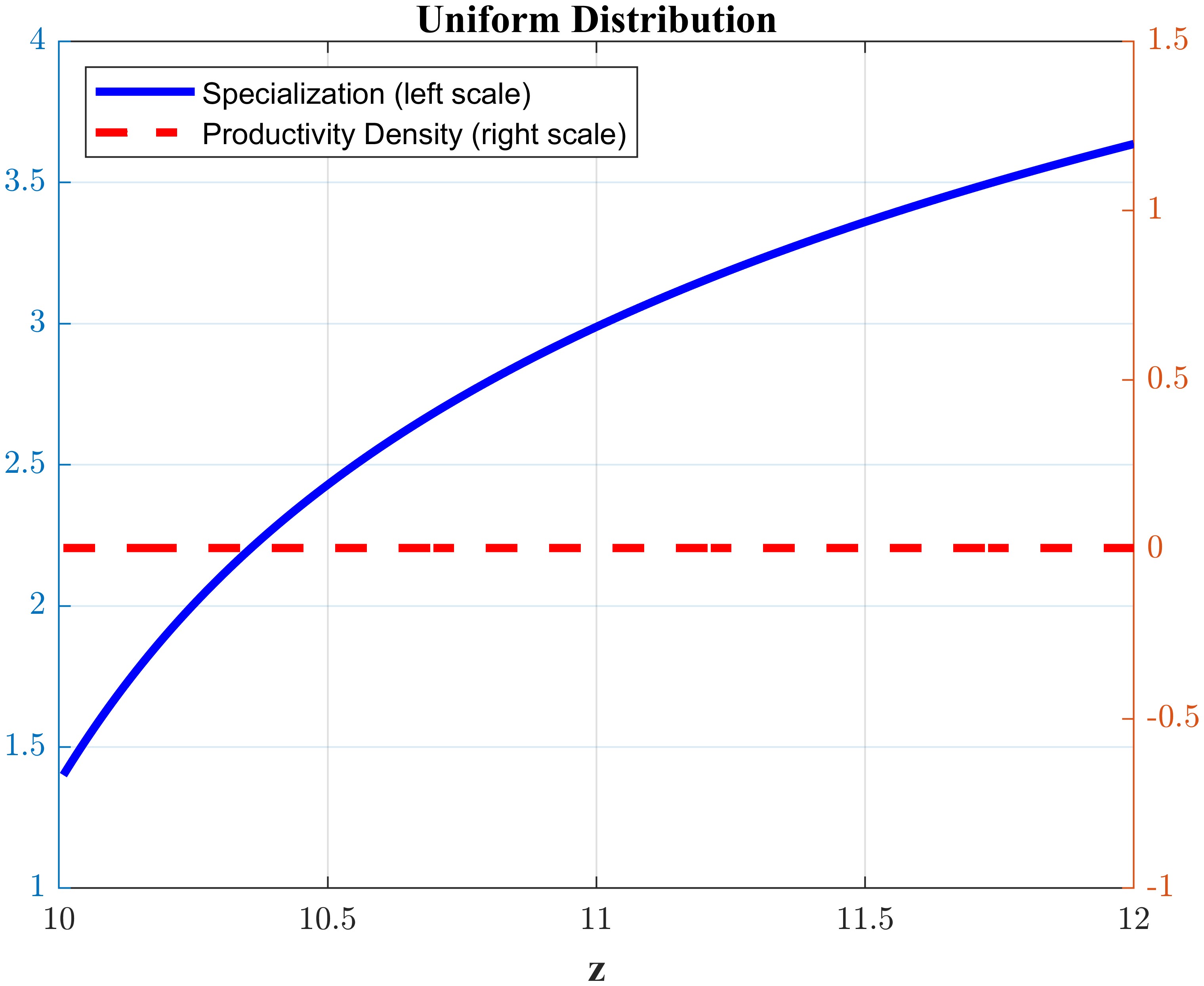} 
    \end{minipage}\hfill
    \begin{minipage}{0.3\textwidth}
        \centering
         \includegraphics[width=\textwidth]{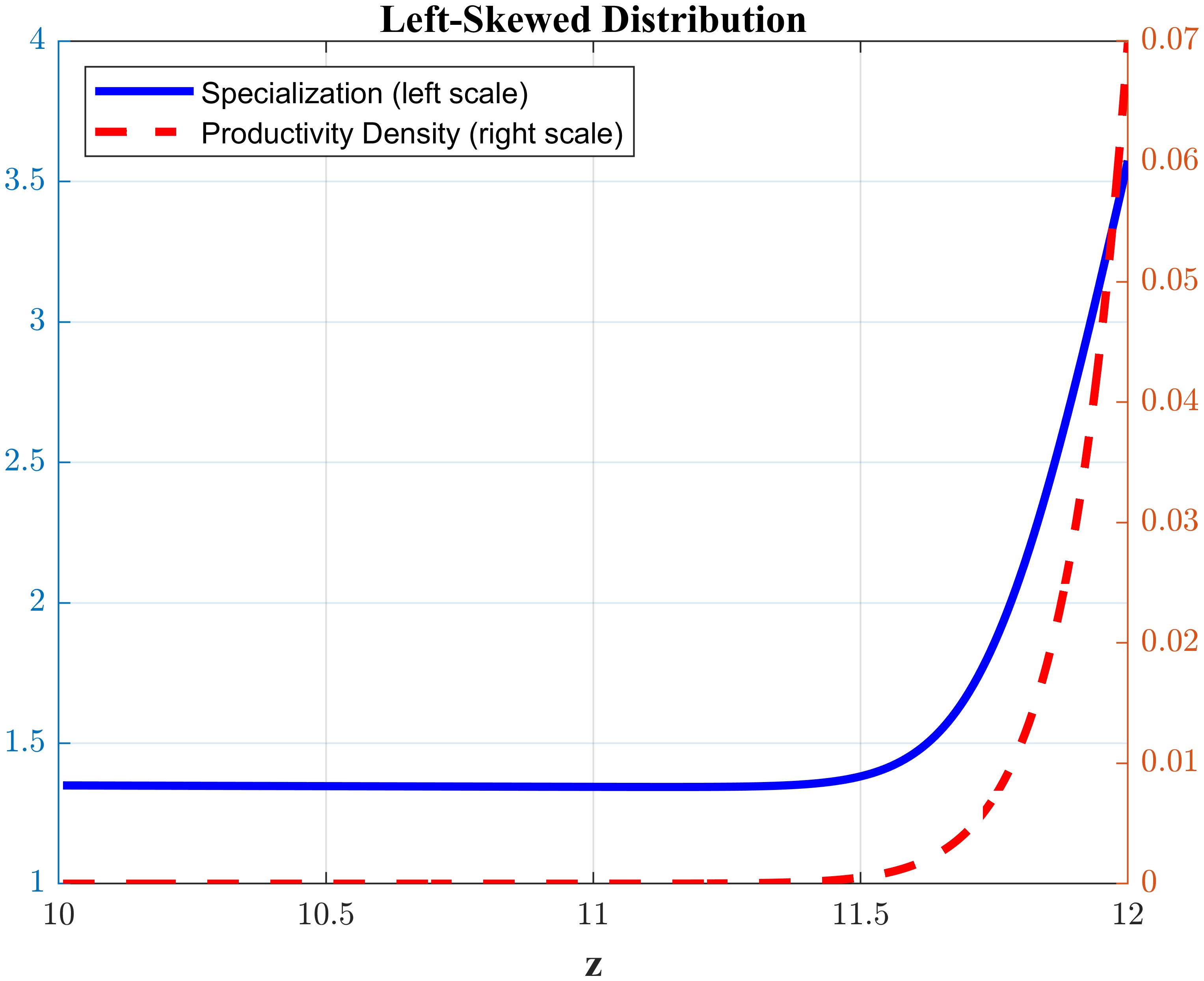}
    \end{minipage}
    \hfill
    \begin{minipage}{0.3\textwidth}
        \centering
         \includegraphics[width=\textwidth]{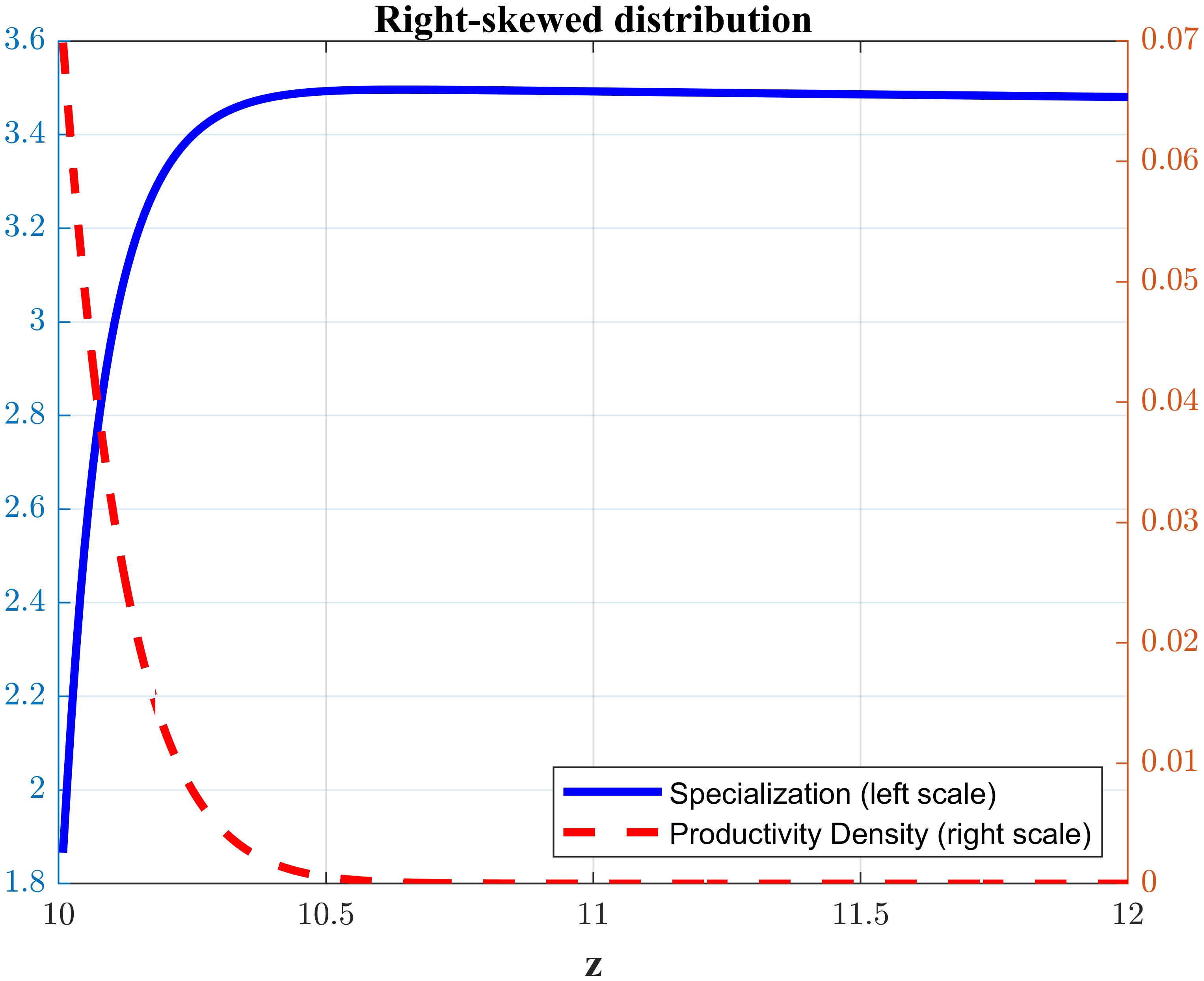}
    \end{minipage}
\begin{minipage}{\textwidth} \footnotesize{} \textit{Note}:  The three panels show the cross-sectional shape of the specialization function for different productivity distributions. The left panel considers a uniform distribution $z \sim U[10,20]$, the middle panel a left-skewed shifted Beta distribution $z \sim \mathcal{B}(1,18;10,20)$, the right panel a right-skewed shifted Beta distribution $z \sim \mathcal{B}(18,1;10,20)$. All the other structural parameters are the same across panels.
\end{minipage} 
\end{center}
\end{figure}
%
Figure (\ref{fig:spec_fnct}) shows how the specialization function is influenced by the shape of the productivity distribution. With a uniform density (left panel), specialization is (approximately) linearly increasing in productivity. A left-skewed productivity distribution (middle panel) entails a nonmonotonic behavior of specialization, which is decreasing in productivity for low productivity levels and exponentially increasing thereafter. Symmetrically, a right-skewed productivity distribution (right panel) gives rise to a mirror-like, nonmonotonic behavior of specialization, which is strongly increasing in productivity for low productivity levels and then gradually decreasing. The qualitative differences across productivity distributions stress the importance of firm heterogeneity as a determinant of specialization choices.

\subsection{Derivation and Interpretation of Optimal Surplus Offered}
\label{app:x_interpretation}
The first-order condition of the intermediate producer problem (\ref{static:pmp}) with respect to the surplus offered to final producers reads:
\begin{align}
\label{static:foc_x}
    \lambda \bar{\phi} G^\prime(x)[A(s;z)-x]=1.
\end{align}
Let $\mathcal{H}(z)$ denote the equilibrium distribution of surplus offered to final producers, i.e., $\mathcal{H}(z) \equiv G(x^\star(z))$. Under the guess that ${x^\star}^\prime(z)>0$, the equilibrium distribution of surplus offered to final producers equals the distribution of productivity weighted by the compatibility probability function, i.e., $\mathcal{H}(z)=\hat{\phi}(\underline{z},z)/\bar{\phi}$. It follows that $G^\prime(x(z))=\frac{\mathcal{H}^\prime(z)}{x^\prime(z)}$ and $\mathcal{H}^\prime(z)= \frac{\phi(s(z))\gamma(z)}{\bar{\phi}}$. Substituting these expressions into equation (\ref{static:foc_x}) yields the differential equation reported in the main text (\ref{eq_x}). 
Intermediate producers choose the optimal surplus offered $x^\star(z)$ by solving that differential equation, taking the reservation surplus $x_0$ (boundary condition) as given:
\begin{align*}
   & x^\prime (z) = \lambda \phi(s(z)) \gamma(z) [A(s(z))-x(z)] \\[.2cm]
   & x^\prime (z)+\lambda \phi(s(z)) \gamma(z) x(z) = \lambda \phi(s(z)) \gamma(z) A(s(z)) \\[.2cm]
   & e^{\lambda \hat{\phi}(\underline{z},z)} [ x^\prime (z)+\lambda \phi(s(z)) \gamma(z) x(z)] =  e^{\lambda \hat{\phi}(\underline{z},z)}   \lambda \phi(s(z)) \gamma(z) A(s(z)) \\[.2cm]
   & \frac{\partial}{\partial z} \left[e^{\lambda \hat{\phi}(\underline{z},z)} x(z) \right] = e^{\lambda \hat{\phi}(\underline{z},z)}   \lambda \phi(s(z)) \gamma(z) A(s(z)) \\[.2cm]
   & \int_{\underline{z}}^z \frac{\partial}{\partial \tilde{z}} \left[e^{\lambda \hat{\phi}(\underline{z},\tilde{z})}  x(\tilde{z})\right] d \tilde{z} = \int_{\underline{z}}^z e^{\lambda \hat{\phi}(\underline{z},\tilde{z})}    \lambda \phi(s(\tilde{z})) \gamma(\tilde{z}) A(s(\tilde{z})) d\tilde{z} \\[.2cm]
   & 
   e^{\lambda \hat{\phi}(\underline{z},z)} x(z)-x_0 = \int_{\underline{z}}^z e^{\lambda \hat{\phi}(\underline{z},\tilde{z})}   \lambda \phi(s(\tilde{z})) \gamma(\tilde{z}) A(s(\tilde{z})) d\tilde{z} \\[.2cm]
   & x(z) = e^{- \lambda \hat{\phi}(\underline{z},z)} \left[x_0 + \int_{\underline{z}}^z A(s(\tilde{z})) e^{\lambda \hat{\phi}(\underline{z},\tilde{z})}   \lambda \phi(s(\tilde{z})) \gamma(\tilde{z}) d\tilde{z} \right] \\[.2cm]
    & x(z) = e^{- \lambda \hat{\phi}(\underline{z},z)} x_0 + \int_{\underline{z}}^z  A(s(\tilde{z})) e^{- \lambda \hat{\phi}(\tilde{z},z)}   \lambda \phi(s(\tilde{z})) \gamma(\tilde{z}) d\tilde{z} \\[.2cm]
    & x^\star(z) = \left(1-f(z)\right) x_0 + f(z)\mathbb{E}_{\max\{\tilde{z}\}|\tilde{z}\leq z}[A\left(s^\star(\tilde{z});\tilde{z}\right)] \\[.2cm]
    & x^\star(z)  =  x_0 + f(z)\mathbb{E}_{\max\{\tilde{z}\}|\tilde{z}\leq z}[A\left(s^\star(\tilde{z});\tilde{z}\right)-x_0].
\end{align*}
The second-to-last line expresses the optimal surplus offered as a weighted average between the reservation surplus and the expected match surplus from the second-best compatible intermediate producer contacted by a searching final producer, where the weights are represented by the probability that a final producers is in contact with at least another intermediate producer with lower productivity than $z$ and its complement to one, respectively.
The last line makes clear that the optimal surplus offered exceeds the reservation surplus by the expected gap between the match surplus from the second-best compatible intermediate producer contacted and the reservation surplus. 

The optimal surplus offered function can be alternatively derived as the solution to a first-price sealed-bid auction with an unknown number of bidders. For given specialization function, let $v(z)$ be the valuation of a successful match for an intermediate producer of productivity $z$. An intermediate producer with productivity $z$ offering surplus $x$ makes expected operating profits:
\begin{align*}
  \Pi(x;z) &= \phi(z)\left(\sum_{k=0}^\infty m_k G(x)^k\right)f^{N-1}(v(z)-x) \\
  &= \phi(z) f^{N-1} \theta \lambda  \sum_{k=0}^\infty p_k G(x)^k(v(z)-x),
\end{align*}
where $p_k \equiv e^{-\lambda \bar{\phi}}\frac{(\lambda \bar{\phi})^k}{k!}$ denotes the probability that the number of other compatible intermediate producers contacted by a final producer is $k$, which follows a Poisson distribution. Setting the first-order condition with respect to $x$ to zero yields:
$$\sum_{k=0}^\infty p_k G(x)^k = (v(z)-x)\sum_{k=0}^\infty p_k k G(x)^{k-1} G^\prime(x).$$
We follow the same steps as in the main text by guessing that the optimal surplus offered is monotonically increasing in productivity, i.e., $x^\prime(z)>0$, and defining the equilibrium distribution of surplus offered as $\mathcal{H}(z)=G(x(z))$. Substituting into the optimality condition yields:
$$x^\prime(z) \sum_{k=0}^\infty p_k \mathcal{H}(z)^k = (v(z)-x(z))\sum_{k=0}^\infty p_k k \mathcal{H}(z)^{k-1} H^\prime(z).$$
Let $\tilde{\gamma}(z) \equiv \frac{\sum_{k=0}^\infty p_k k \mathcal{H}(z)^{k-1} H^\prime(z)}{\sum_{k=0}^\infty p_k \mathcal{H}(z)^k}=\frac{\sum_{k=0}^\infty p_k d\mathcal{H}(z)^k/dz}{\sum_{k=0}^\infty p_k \mathcal{H}(z)^k}$.
Solving the differential equation with boundary condition $x(\underline{z})=x_0$, we obtain the standard expression for the optimal bidding strategy in first-price sealed-bid auctions with an unknown number of bidders \citep{krishna2009auction}:
\begin{align} \label{xstar:auction}
  x^\star(z) = e^{-\int_{\underline{z}}^z\tilde{\gamma}(\tilde{z})d\tilde{z}} \left(x_0+\int_{\underline{z}}^z v(\tilde{z}) e^{\int_{\underline{z}}^{\tilde{z}} \tilde{\gamma}(\hat{z})d\hat{z}}  \tilde{\gamma}(\tilde{z}) d\tilde{z} \right).
\end{align}
Since $v(z) = A(s(z);z)$, it follows that the productivity density of the highest-surplus-offering compatible intermediate producer with lower productivity than $z$ equals $\frac{e^{-\int_{\tilde{z}}^z \tilde{\gamma}(\hat{z})d\hat{z}}}{1-e^{-\int_{\underline{z}}^z \tilde{\gamma}(\hat{z})d\hat{z}}} \tilde{\gamma}(\tilde{z})$.

Unlike standard auction theory, the distribution of the number of bidders is not taken as given in our model but pinned down by the extent of (exogenous) search frictions and (endogenous) compatibility frictions. Therefore, it is instructive to break down equation (\ref{xstar:auction}) further into a weighted average of states of the world where the final producer is able to locate lower-productivity competitors and where is not:
\begin{align*}
    x^\star(z) = \text{Pr}(k(z)=0)  x_0 + \text{Pr}(k(z)=1)   \mathbb{E}_{\max{\tilde{z}}|\tilde{z} \leq z, \  k(z)=1}\left[A(s^\star(\tilde{z});\tilde{z})\right],
\end{align*}
where $k(z) = \mathbbm{1}\{\text{final producer contacts at least one firm w/ productivity }\leq z\}$.
 From this formulation, it is apparent how the optimal surplus offered equals the expected outside option of a final producer.
 With probability $\text{Pr}(k(z)=0)=1-f(z)$, a final producer does not meet any compatible intermediate producer with productivity lower than $z$, in which case the final producer's outside option when meeting an intermediate producer with productivity $z$ boils down to her reservation surplus, that is, $x_0$. With complementary probability $f(z)$, the final producer meets at least one compatible intermediate producer with lower productivity than $z$, in which case the final producer's outside option when meeting an intermediate producer with productivity $z$ equals the maximum surplus the best of such firms can afford offering, that is, the expected highest match surplus among firms with lower productivity than $z$.

 \subsection{Bargaining Model}
\label{app:bargaining}
Our baseline model of specialization posits that intermediate producers post the price of their goods (or, interchangeably, set the offered surplus) ex ante, i.e., before meeting a final producer. In this section, we explore the alternative trading protocol based on ex post bargaining.

We study a model that is otherwise the same as our static baseline, but let the transaction price be determined by Nash bargaining upon meeting between a final producer and a compatible intermediate producer. We assume that final producers can observe the match surplus $v$ from matching with any intermediate producer they meet. Let $\xi \in [0,1]$ denotes the exogenous bargaining power weight of final producers in surplus-splitting negotiations.
The surplus accruing to the final producers solves:
\begin{align*}
   X^\star(v) = \argmax_{x} \left(X-\omega(v;\vec{v}_k)\right)^\xi \left(v-X\right)^{1-\xi}, 
\end{align*}
where $\omega(v;\vec{v}_k)$ denotes the outside option of a final producer which is contact with $k$ compatible intermediate producers with match surplus less than $v$.

To align the final producer's outside option in the two models, we assume that, while negotiating with the highest-surplus compatible intermediate producer, final producers keep the option of bargaining with the other less productive compatible intermediate producers they are in contact with. According to the Nash bargaining protocol, the surplus accruing to the final producer therefore equals:
$$X^\star(v)=\xi v +(1-\xi)\omega(v;\vec{v}_k),$$
where the final producer's outside option in bargaining equals the match surplus from the second best compatible intermediate producer, i.e.,   $\omega(v;\vec{v}_k)= \mathbbm{1}_{\{k=0\}}X_0+\mathbbm{1}_{\{k>0\}}\max\{v_1,\dots,v_k\}$, where $X_0$ is the reservation surplus.\footnote{Since intermediate producers have zero outside option at the bargaining stage, the second best compatible intermediate producer is indifferent between not trading with the final producer and offering him the entire match surplus.}
Let $\mathcal{G}(v)$ denote the distribution of match surplus. The expected outside option of a final producer when bargaining with an intermediate producer of productivity $z$, while being in contact with $k$ other compatible intermediate producers, reads: $\mathbb{E}[\omega(v;\vec{v}_k)] = \mathbbm{1}_{\{k=0\}}X_0+\mathbbm{1}_{\{k>0\}}\int_{\underline{v}}^v \tilde{v} \frac{k \mathcal{G}^\prime(\tilde{v})\mathcal{G}(\tilde{v})^{k-1}}{\mathcal{G}(v)^k}d\tilde{v}$.

According to this market structure, the expected operating profits of intermediate producers with productivity $z$ choosing match surplus $v$ equal:
\begin{align*}
    \Pi(v;z) =& \ \phi(s(v;z)) \ f^{N-1}\left(\sum_{k=0}^\infty m_k   \ \mathcal{G}\left(v\right)^k (1-\xi) \left(v-\mathbb{E}[\omega(v;\vec{v}_k)]\right)\right) \\
    =& \ \theta \lambda \phi(s(v;z)) f^{N-1} (1-\xi) \left(\sum_{k=0}^\infty e^{-\lambda\bar{\phi}}\frac{(\lambda\bar{\phi})^k}{k!}\mathcal{G}\left(v\right)^k \left(v-\mathbb{E}[\omega(v;\vec{v}_k)]\right)\right)
    \\
    =& \ \theta \lambda \phi(s(v;z)) f^{N-1} (1-\xi) \left(
    e^{-\lambda \bar{\phi}(1-\mathcal{G}\left(v\right))}v-e^{-\lambda \bar{\phi}} X_0-
\sum_{k=1}^\infty e^{-\lambda\bar{\phi}}\frac{(\lambda\bar{\phi})^k}{k!}\mathcal{G}\left(v\right)^k \left[v - \int_{\underline{v}}^v \frac{\mathcal{G}\left(\tilde{v}\right)^k}{\mathcal{G}\left(v\right)^k} d\tilde{v} \right]
   \right) \\
    =& \ \theta \lambda \phi(s(v;z)) f^{N-1} (1-\xi) \left(
    e^{-\lambda \bar{\phi}} (v-X_0)+ \int_{\underline{v}}^v e^{-\lambda \bar{\phi}(1-\mathcal{G}\left(\tilde{v}\right))}d\tilde{v} -  e^{-\lambda \bar{\phi}} (v-v_0)
   \right) \\
    =& \ \theta \lambda \phi(s(v;z)) f^{N-1} (1-\xi) \left[e^{-\lambda \bar{\phi}} (v_0-X_0)+\int_{\underline{v}}^v e^{-\lambda \bar{\phi}(1-\mathcal{G}\left(\tilde{v}\right))}d\tilde{v} \right].
\end{align*}
Optimal match surplus (or, interchangeably, optimal specialization) solves the following intermediate producer's problem:
\begin{align*}
  v^\star(z) = \argmax_v \ \Pi(v;z)-w q(s(v;z)).
\end{align*}
Substituting for $v(z) = A(s(z);z)$, equilibrium match surplus/specialization solves: 
\begin{align} \label{eq_s_barg2}
    &\theta \lambda \mathcal{P}(z;N) (1-\xi) \bigg[A^\prime(s^\star(z))+\frac{\phi^\prime(s^\star(z))}{\phi(s^\star(z))} \left(A(s^\star(z);z)-\mathbb{E}[\omega(z)]\right)\bigg]=w q^\prime(s^\star(z)),
\end{align}
where$\left(\frac{\partial s(v;z)}{\partial v}\right)^{-1}=A^\prime(s(z))$. Since in equilibrium $v^\prime(z)>0$, then $\int_{\underline{v}}^v e^{-\lambda \bar{\phi}(1-\mathcal{G}\left(\tilde{v}\right))}d\tilde{v} = \int_{\underline{z}}^z e^{-\lambda \bar{\phi}(1-\mathcal{G}\left(\tilde{z}\right))} v^\prime(\tilde{z}) d\tilde{z}$ $=  e^{-\lambda \hat{\phi}(z,\bar{z})} A(s^\star(z);z)-e^{-\lambda \bar{\phi}} A(s^\star(\underline{z});\underline{z})-\int_{\underline{z}}^z \lambda \phi(s(z)) e^{-\lambda \hat{\phi}(\tilde{z},\bar{z})} v(\tilde{z})\gamma(\tilde{z})d\tilde{z}$.
It follows that 
$e^{-\lambda \bar{\phi}} (v_0-X_0)+\int_{\underline{v}}^v e^{-\lambda \bar{\phi}(1-\mathcal{G}\left(\tilde{v}\right))}d\tilde{v} =  e^{-\lambda \hat{\phi}(z,\bar{z})} A(s^\star(z);z)-e^{-\lambda \bar{\phi}} X_0-\int_{\underline{z}}^z \lambda \phi(s(\tilde{z})) e^{-\lambda \hat{\phi}(\tilde{z},\bar{z})} v(\tilde{z})\gamma(\tilde{z})d\tilde{z} 
  = \ e^{-\lambda \hat{\phi}(z,\bar{z})}$ $\big(A(s^\star(z);z) -\mathbb{E}[\omega(z)]\big)$.
Notice that the expected outside option $\mathbb{E}[\omega(z)]$ equals the surplus offered to final producers in our baseline model, i.e., $\mathbb{E}[\omega(z)] = \left(1-f(z)\right) X_0 + f(z)\mathbb{E}_{\max\{\tilde{z}\}|\tilde{z}\leq z}[A\left(s^\star(\tilde{z});\tilde{z}\right)] \equiv x^\star(z)$. 

Comparing equations (\ref{eq_s_barg2}) and (\ref{eq_s}), we observe that the two coincide up to the intermediate producer's bargaining power weight multiplying the marginal benefit of specialization. Therefore, the bargaining model nests our baseline model if intermediate producers enjoy all the bargaining power, i.e., $\xi=0$.

Since the planner is indifferent as to how the match surplus is split, efficient specialization is still determined by equation (\ref{eff_s}). This allows us to establish the following proposition.
\begin{appprop}[Efficiency of the Static Economy with Bargaining]\label{prop:bargaining}
The equilibrium is constrained-efficient if and only if production is not complex, i.e. $N=1$, and intermediate producers hold all bargaining power, i.e., $\xi=0$. 
\end{appprop}
Since the network externality pushes towards over-specialization and bargaining frictions push towards under-specialization, there may be a bargaining power weight $\xi^\star$ such that the two externalities cancel out. However, notice that the bargaining power weight is independent of the trading counterpart, whereas the network externality is firm-specific. Hence, no bargaining power weight decentralizes the efficient allocation. 
If we allow bargaining weights to depend on the productivity of the intermediate producer involved in the transaction, there exists a unique schedule of bargaining weights that implements the efficient allocation:
\begin{align*}
    \xi^\star(z;N) = (N-1)\frac{\left(1-f(z)\right) \left(\mathbb{\hat{E}}[A(s^\star(\tilde{z});\tilde{z})]-\mathbb{\hat{E}}[x^\star(\tilde{z})]\right)}{A^\prime(s^\star(z))+\frac{\phi^\prime(s^\star(z))}{\phi(s^\star(z))}\bigg(A(s^\star(z);z) -  x^\star(z)\bigg)}.
\end{align*}
If intermediate producers do not hold all the bargaining power, the hold-up externality pushes equilibrium specialization to be inefficiently low. The bargaining weight schedule $\xi^\star(z;N)$ makes the hold-up externality exactly offsets the gap between network and across-market appropriability externalities. Importantly, since the hold-up externality does not vanish when production is not complex, the economy is always inefficient whenever $\xi(z)\neq \xi^\star(z;1) = 0$ for some $z$ when $N=1$.

\subsection{Directed Search}
\label{app:directed_search}
In this section we characterize the stationary equilibria of the models reported in the main text by assuming a directed search protocol. For consistency with the market structure of our baseline models, we assume that every input market is potentially segmented into submarkets characterized by pairs of expected surplus offered upon trading, $\mathbb{\hat{E}}[x]$, and input finding probability, $f(\bar{\phi})$, induced by the choice of a given reservation surplus, $x_0$. As in the baseline model, final producers still search simultaneously for potential suppliers and select as input provider the highest-surplus-offering one among the set of compatible supplier contacted. Unlike the baseline model, the reservation surplus is not pinned down by an expected break-even condition, but maximizes the joint surplus of the market participants in each input market. 

\paragraph{Static model.}
For each input market $j$, the equilibrium specialization and surplus offered functions can be determined through the following \textit{market utility approach}:\footnote{Formally, we guess that ${x^\star}^\prime(z)>0$ to replace $G(x(z))$ with $\hat{\phi}(\underline{z},z)/\bar{\phi}$ in the intermediate producers' profits.}
\begin{align*}
    \max_{s_j(z),x_j(z)} \ & \prod_{n=1}^N f\left(\bar{\phi}(s_n)\right) \sum_{n=1}^N \mathbb{\hat{E}}[x_n(z)] \\
    & \text{s.t.} \ \theta \lambda \phi(s_j(z)) e^{-\lambda \hat{\phi}(z,\bar{z})} \prod_{n \neq j} f\left(\bar{\phi}(s_n)\right) [A(s_j(z))-x_j(z)] - w q(s_j(z)) = U_j(z), \quad z \in [\underline{z},\bar{z}].
\end{align*}
Eliminating the surplus offered function $x_j(z)$ by substituting for the constraints into the objective function yields:
\begin{align*}
    \max_{s_j(z)} \ & \prod_{n=1}^N f\left(\bar{\phi}(s_n)\right) \left[ \mathbb{\hat{E}}[A_j(s(z);z)] + \sum_{n \neq j}^N \mathbb{\hat{E}}[x_n(z)] \right] - \frac{1}{\theta} \int \left[w q(s_j(z)) + U_j(z)\right] \gamma(z) dz.
\end{align*}
This problem yields a unique solution for the specialization function $s_j(z)$. Hence, each input market features just one active submarket (expected surplus offered-input finding probability pair). 
In a symmetric competitive search equilibrium, optimal specialization solves:
\begin{align*}
    & \theta \lambda \Pc(z;N)\bigg[A^\prime(s^\star(z))+\frac{\phi^\prime(s^\star(z))}{\phi(s^\star(z))} \bigg(A (s^\star(z);z) - f(z)\mathbb{E}_{\max{\tilde{z}}|\tilde{z}<z} [A(s^\star(\tilde{z});\tilde{z})] \\
    & + \left(1-f(z)\right) (N-1)  \mathbb{\hat{E}}[x^\star(\tilde{z})]\bigg)\bigg]=w q^\prime(s^\star(z)),
\end{align*}
which is the same as under random search. Hence, whether search is random or directed within input markets is irrelevant for our results.

Directed search would restore efficiency only if all input markets were integrated. In that case, the problem setup would be:
\begin{align*}
     & \max_{\left\{s_n,x_n\right\}_{n=1}^N} \  \prod_{n=1}^N f\left(\bar{\phi}(s_n)\right) \sum_{n=1}^N \mathbb{\hat{E}}[x_n(z)] \\
    & \text{s.t.} \ \theta \lambda \phi(s_j(z)) e^{-\lambda \hat{\phi}(z,\bar{z})} \prod_{n \neq j} f\left(\bar{\phi}(s_n)\right) [A(s_j(z))-x_j(z)] - w q(s_j(z)) = U_j(z), \quad z \in [\underline{z},\bar{z}], \ \boldsymbol{\forall j}.
\end{align*}
Unlike the previous problem setup, now the objective functions of intermediate producers in \textit{all} input markets $j$ are accounted for. As a result, markets are complete and the resulting allocation is constrained-efficient. Hence, constrained efficiency requires full (vertical and horizontal) integration of supply chains.

\paragraph{Static model with perfect information.}
Suppose final producers can observe the specialization and surplus offered of all the intermediate producers and, consequently, direct their search towards specific \textit{intermediate producers}. Hence, intermediate producers of equal productivity sort into a sub-market, which we index by their productivity $z$. Once final producers target a sub-market, we assume that all their $n \sim \text{Poisson}(\lambda)$ searches happen within the same sub-market (\cite{galenianos2009multiple} relax this assumption in a labor market setting with homogeneous agents). Since final producers randomize among the potential compatible suppliers contacted, the trading probability of intermediate producers is independent of the number of searches. The equilibrium specialization, surplus offered, and (sub-)market tightness can be determined through the usual market utility approach:
\begin{align*}
    \max_{s_j(z),x_j(z)} \ &  m\int \bigg(\frac{b(x_j(z))}{m \gamma(z)} f(s_j(z)) \prod_{n \neq j} f\left(s_n(z)\right) [A(s_j(z);z)-x_j(z)]-wq(s_j(z))\bigg) \gamma(z) dz\\
     \text{s.t.} & \ f(s_j(z)) \prod_{n \neq j} f\left(s_n(z)\right) \left[x_j(z)+\sum_{n \neq j} x_n(z) \right] = U_j, \quad z \in [\underline{z},\bar{z}], \\
    & \int b(x_j(z)) dz=1,
\end{align*}
where $b(x_j(z))$ is the measure of final producers queuing at the $z$ sub-market. It follows that the $z$-sub-market tightness equals $\theta(x_j(z)) = \frac{b(x_j(z))}{m \gamma(z)}$. In equilibrium, sub-markets offering higher surplus will have higher tightness, i.e., $b^\prime(x_j(z))>0$. Due to perfect information, final producers do not face any uncertainty about the input finding probability and surplus offered.

Eliminating the surplus offered through the first constraint, the problem reduces to:
\begin{align*}
    \max_{s_j(z)} \ &  \int \left[\bigg(b(s_j(z)) f(s_j(z)) \prod_{n \neq j} f\left(s_n(z)\right) \left[A(s_j(z);z)+\sum_{n \neq j} x_n(z) \right]-m\gamma(z)wq(s_j(z))\bigg) \right] dz-U_j\\
     \text{s.t.} 
    & \int b(s_j(z)) dz=1,
\end{align*}
where $b(s_j(z)) = b\left(\frac{U_j}{f(s_j(z)) \prod_{n \neq j} f\left(s_n(z)\right)}-\sum_{n \neq j} x_n(z)\right)$.
For all the active sub-markets, the first-order condition for optimal specialization in a symmetric equilibrium reads:
\begin{align*}
    & \ \theta(z) f^N\left[A^\prime(s(z))+\left(\frac{f^\prime(s(z))}{f(s(z))}+\frac{b^\prime(s(z))}{b(s(z))}\right) \left[A(s_j(z);z)+(N-1) x(z) \right]+\frac{\mu_b}{f^N}\frac{b^\prime(s(z))}{b(s(z))} \right] = wq^\prime(s(z)),
\end{align*}
where $\mu_b$ is the Kuhn Tucker multiplier attached to the adding-up constraint of the mass of final producers.

Again, directed search would restore efficiency only if all input markets were integrated. In that case, the problem setup would be:
\begin{align*}
    \max_{s_j(z),x_j(z)} \ &  m \int \bigg(\frac{b(x_j(z))}{m \gamma(z)} f(s_j(z)) \prod_{n \neq j} f\left(s_n(z)\right) [A(s_j(z);z)-x_j(z)]-wq(s_j(z))\bigg) \gamma(z) dz\\
     \text{s.t.} & \ f(s_j(z)) \prod_{n \neq j} f\left(s_n(z)\right) \left[x_j(z)+\sum_{n \neq j} x_n(z) \right] = U_j, \quad z \in [\underline{z},\bar{z}], \ \boldsymbol{\forall j}, \\
    & \int b(x_j(z)) dz=1, \quad \boldsymbol{\forall j}.
\end{align*}
Hence, our conclusions do not hinge upon the restriction on the information set.

\paragraph{Dynamic model.}
For each input market $j$, we follow again the market utility approach to determine the equilibrium specialization and surplus offered functions:
\begin{align*}
    & \max_{s_j(z),x_j(z)} \ \mu \prod_{n=1}^N f\left(\bar{\phi}(s_n)\right) \sum_{n=1}^N \mathbb{\hat{E}}[x_n(z)]+m\sum_{n=1}^N \int \hat{V}(\bar{X}_{n(-1)}(z))   \gamma(z) dz \\
    \text{s.t.} & \ D_j(z)A(s_j(z);z) - \bar{X}_{j(-1)}(z)-\theta \lambda \Pc_j(s_j(z);z)x_j(z)-w q(s_j(z)) + \beta V_j(D_j(z),\bar{X}_{j}(z);z)  = V_j(z), \\[.2cm]
    & \ D_j(z) = (1-\delta)  D_{j(-1)}(z) + \theta \lambda \Pc_j(s_j(z);z), \\[.2cm]
    & \ \bar{X}_j(z) = (1-\delta) \left[ \bar{X}_{j(-1)}(z)+\theta \lambda \Pc_j(s_j(z);z) x_j(z) \right], \quad z \in [\underline{z},\bar{z}] \\[.2cm]
    & \ \mu_{(+1)} = \delta \left[1-\left(1-\prod_{n=1}^N f(\bar{\phi}(s_n))\right)\mu\right]+\left(1-\prod_{n=1}^N f(\bar{\phi}(s_n))\right)\mu,
\end{align*}
where $\hat{V}(\bar{X}_{n(-1)}(z)) = \bar{X}_{n(-1)}(z) + \beta \hat{V}(\bar{X}_{n}(z))$ is the value for final producers of existing supplier relationships with  intermediate producers of input $n$ with productivity $z$.

Eliminating the surplus offered function $x_j(z)$ by substituting for the constraints into the objective function yields:
\begin{align*}
    \max_{s_j(z)} \ & m \int \left[ \mathcal{D}_j(z) A(s_j;z) - w q(s_j(z)) + \beta \tilde{V}(\mathcal{D}_j(z);z) - V_j(z)\right]\gamma(z)dz + \\
    & \mu \prod_{n=1}^N f\left(\bar{\phi}(s_n)\right) \sum_{n \neq j}^N \mathbb{\hat{E}}[x_n(z)]+\sum_{n \neq j}^N \int \hat{V}\left(\bar{X}_{n(-1)}(z)\right) \gamma(z) dz,
\end{align*}
where $ \tilde{V}(\mathcal{D}_j(z);z) \equiv V(\mathcal{D}_j(z),\bar{X}_j(z);z)+\hat{V}(\bar{X}_{j}(z))$ is independent of the surplus offered to existing customers.

In a symmetric competitive search equilibrium, optimal specialization solves:
\begin{align*}
    & \theta \lambda \Pc(z;N)\bigg[A^\prime(s^\star(z))+\delta[1+\beta(1-\delta)]\frac{\phi^\prime(s^\star(z))}{\phi(s^\star(z))} \bigg(A (s^\star(z);z) - f(z)\mathbb{E}_{\max{\tilde{z}}|\tilde{z}<z} [A(s^\star(\tilde{z});\tilde{z})] \\
    & + \left(1-f(z)\right) \left[(N-1)  \mathbb{\hat{E}}[x^\star(\tilde{z})]-\frac{\beta(1-\delta)}{1+\beta(1-\delta)}f^N \left(\mathbb{\hat{E}}[A(s^\star(\tilde{z});\tilde{z})+(N-1)\mathbb{\hat{E}}[x^\star(\tilde{z})]\right)\right]\bigg)\bigg]=w q^\prime(s^\star(z)).
\end{align*}
The competitive search equilibrium is implemented through a reservation surplus $x_0 =  -(N-1)  \mathbb{\hat{E}}[x^\star(\tilde{z})]+\frac{\beta(1-\delta)}{1+\beta(1-\delta)}f^N\left(\mathbb{\hat{E}}[A(s^\star(\tilde{z});\tilde{z})+(N-1)\mathbb{\hat{E}}[x^\star(\tilde{z})]\right)$.

Comparing the optimal specialization condition with its efficient counterpart (\ref{eff_s_dyn}), we draw two takeaways. First, directed search internalizes the search externality within each market, but not fully across markets. Hence, the equilibrium allocation is inefficient and displays over-specialization. Second, the equilibrium allocation depends on the search protocol. Hence, the irrelevance between random and directed search does \textit{not} extend to a dynamic setting.\footnote{The same takeaways hold also in the dynamic model with link destruction.}

\subsection{Contract Contingency and Completeness}
\label{app:non-cont}


In this section, we study the consequences of contracting frictions in supply chains on the efficiency properties of or baseline static equilibrium. First, we examine an alternative, non-contingent contractual arrangement in which suppliers are paid independently of whether all inputs are successfully sourced. In this setting, final producers bear all sourcing risk. In equilibrium, final producers who fail to source all required inputs incur losses, which are ultimately borne by the representative household in the form of lower profit rebates.

First, we assume that intermediate producers are paid independently of whether the final producer is able to source all the inputs successfully.
For intermediate producers, the trading probability is now independent of the complexity of final production, i.e.,  $\tilde{\mathcal{P}}(s,x;N)=\tilde{\mathcal{P}}(s,x)=\phi(s) e^{-\lambda \bar{\phi}(1-G(x))}$, since they are paid regardless of whether other inputs are successfully sourced. The consequences on the intermediate producers' policies are twofold. First, the optimal surplus offered increases due to a larger reservation surplus -- being its general structure still given by (\ref{eq_x_solved}). The reservation surplus for input $j$ is defined by the indifference condition between accepting the surplus offered and failing to source the input (thereby failing to produce at all) when in contact with a compatible intermediate producer: $$\prod_{n \neq j} f_n \left(A_{0,j} + \sum_{n \neq j} \mathbb{\hat{E}}[A_n]\right)-(A_{0,j}-x_{0,j})- \sum_{n \neq j} f_n \mathbb{\hat{E}}[A_n-x_n] = 0.$$
Hence, in a symmetric stationary equilibrium, the reservation surplus equals $x_0 =-(N-1)f\mathbb{\hat{E}}[x^\star(z)]+(1-f^{N-1})A(s^\star(\underline{z}))+(N-1)f(1-f^{N-2}) \mathbb{\hat{E}}[A(s^\star(z))]$.

Second, the optimal specialization $s_{NC}^\star(z)$ differs from the baseline model with contingent contracts, and is implicitly defined by:
    \begin{align}
        \label{eq_s_noncont}
    \theta \lambda \tilde{\Pc}(z)\bigg[A^\prime(s_{NC}^\star(z))+\frac{\phi^\prime(s_{NC}^\star(z))}{\phi(s_{NC}^\star(z))} \left(A \big(s_{NC}^\star(z);z \big)-x_{NC}^\star(z)\right)\bigg]=w q^\prime(s_{NC}^\star(z)).
    \end{align}
    %
Since the planner is indifferent as to how the match surplus is split, efficient specialization is still determined by (\ref{eff_s}). We proceed by evaluating the marginal welfare effect of specialization at the equilibrium solution with non-contingent contracts. Formally,
\begin{align} \nonumber
    \pd{\Wc}{s(z)}\bigg|_{s(z)=s_{NC}^\star(z)}\propto \quad &\underbrace{-\left(A^\prime(s_{NC}^\star(z))+\frac{\phi^\prime(s_{NC}^\star(z))}{\phi(s_{NC}^\star(z))}A(s_{NC}^\star(z);z)\right)(1-f^{N-1})}_{\text{contracting externality}} \\ \nonumber
    & -\frac{\phi^\prime(s_{NC}^\star(z))}{\phi(s_{NC}^\star(z))}\bigg[\underbrace{f^{N-1} f(z)\mathbb{E}_{\max\{\tilde{z}\}|\tilde{z} \leq z}\left[A(s_{NC}^\star(\tilde{z});\tilde{z})\right]}_{\text{business-stealing externality}} \\ \label{dwds_noncont}
&\underbrace{- x_{NC}^\star(z)}_{\substack{\text{appropriability} \\ \text{externality}}} \underbrace{-(N-1)f^{N-1}\left(1-f(z)\right)\mathbb{\hat{E}}[A(s_{NC}^\star(z);z)]}_{\text{network externality}}\bigg].
\end{align}
Note that Theorem \ref{thm:over-specialization} extends to an economy with non-contingent contracts. Just like the network externality and appropriability externality across markets, the \textit{contracting externality} vanishes when $N=1$. Hence, the equilibrium is efficient if and only if production is not complex. However, if the production process is complex, non-contingent contracts exacerbate over-specialization. In addition to the gap between the network externality and the appropriability externality across markets, two further forces  pushes toward excessive specialization. First, contracting frictions imply that the private marginal benefit of specialization on the conditional match surplus exceeds the social marginal benefit (contracting externality). Second, the appropriability externality within each market exceeds the business-stealing externality (notice that the latter is multiplied by $f^{N-1}<1$).
As intermediate producers are remunerated independently of whether sourcing is successful or not, they have stronger incentives to over-specialize when contracts are not contingent. 

Second, we consider an alternative extension to our baseline model, whereby intermediate producers can offer two surpluses: $x(z)$ in case of successful trade, and $x^\circ(z) \geq 0$ in case of no trade. Specifically, the intermediate producer offers $x^\circ(z)$ to the final producer upon meeting if the latter cannot find any compatible input suppliers. We view this contractual arrangement as a partial compensation for no-trade scenarios that would make incompatibility more costly for intermediate producers. In this context, we argue that positive no-trade payments are not sustainable in equilibrium; that is, $x^\circ(z)=0,\,\forall z$. Intermediate producers would only offer a positive $x^\circ$ if doing so increased the likelihood of being selected by final producers. However, final producers still find it optimal to choose the highest surplus-offering compatible supplier -- irrespective of no-trade payments. If the supplier is compatible, only the trade-case surplus $x$ is payoff-relevant. If the seller is not compatible, the final producer has no choice to make. Consequently, offering $x^\circ>0$ does not increase the trading probability, and, therefore, strictly decreases the intermediate producer's payoff. As a consequence, positive no-trade payments are not sustainable in equilibrium and do not solve the network externality.

\subsection{In-House Production Model}
\label{app:in-house}
Suppose final producers can produce any intermediate good of value added $\underline{A}$ in-house at cost $F > \underline{A}$. Since the surplus from producing intermediate goods in-house $\Delta \equiv F-\underline{A}$ is negative, they find it optimal to search the market for external providers of each input. In-house production possibilities set a lower bound to the reservation surplus, which now equals $x_0 = \max\{\Delta,-(N-1)\mathbb{\hat{E}}[x^\star(z)]\}$. If search for some input is unsuccessful, the final producer find it optimal to produce the missing input(s) in-house if and only if:
\begin{align*}
    MF \leq \sum_{n=1}^{N-M} x_n + M \underline{A} \iff \Delta \leq \frac{1}{M}\sum_{n=1}^{N-M} x_n,
\end{align*}
where $M \in [1,N]$ is the number of missing inputs (conditional on missing some input). Intuitively, a final producer produces the missing inputs in-house if the net cost of doing so (the left-hand side) is lower than the realized surplus from the other inputs (the right-hand side).  

Let $K(.)$ be the equilibrium distribution of $\frac{1}{M}\sum_{n=1}^{N-M} x_n$, with support $[(N-1)x_0,(N-1)x(\bar{z})]$. Under the intermediate producers' standpoint, the conditional trading probability becomes:
\begin{align*}
   \mathcal{P}(s,x;N) \equiv \ &  \phi(s) \ \left[1-G(x)\right] \ \left[f^{N-1}+(1-f^{N-1})(1-K(\Delta))\right] \\
   & = \phi(s) \ \left[1-G(x)\right] \ \left[1-K(\Delta)(1-f^{N-1})\right].
\end{align*}
Since $K(\Delta)<1$, the network externality operates even if final producers are allowed to produce inputs in-house. 

Moreover, the decision to turn on in-house production brings about a further inefficiency due to a hold-up problem. Indeed, a social planner would activate in-house production if and only if: 
\begin{align*}
    MF \leq \sum_{n=1}^{N-M} A_n + M \underline{A} \iff \Delta \leq \frac{1}{M}\sum_{n=1}^{N-M} A_n,
\end{align*}
Since $A_n \geq x_n \ \forall n$, in-house production is turned on too infrequently in equilibrium -- for given specialization function.

The equilibrium with in-house production is constrained-efficient if turning on in-house production is costless on net, i.e., $\Delta=0$. 

\subsection{Dynamic Model}
\label{app:dyn}
The value of an active final producer with suppliers of each input $j$ offering surplus $x_j$ in every period is:
\begin{align*}
    V(x_1,\dots,x_N) = \sum_n x_n + \beta \left[\delta S + (1-\delta) V(x_1,\dots,x_N) \right],
\end{align*}
where $S$ represents the value of a searching final producer. 
The expected value of an active final producer is:
\begin{align*}
    \mathbb{E}[V] = \frac{\sum_{n=1}^N \mathbb{\hat{E}}[x_n] + \beta \delta S}{1-\beta(1-\delta)}. 
\end{align*}
In turn, the value of a searching final producer is:
\begin{align*}
    S =& \prod_{n=1}^N f_n \mathbb{E}[ \max\left\{V, \beta S \right\}]+\left(1- \prod_{n=1}^N f_n \right) \beta S \\
    =&\prod_{n=1}^N f_n \frac{\sum_{n=1}^N \int_{x_{0,n}}^\infty x_n dF_n(x) + \beta \delta S}{1-\beta(1-\delta)} +\left(1- \prod_{n=1}^N f_n \right) \beta S,
\end{align*}
where the second equality follows from the optimality of a reservation surplus strategy (justified by the monotonicity of the value of an active final producer in the surplus offered on each input). A searching final producer is able to find all the inputs it needs with probability $\prod_{n=1}^N f_n$ -- each of which is expected to deliver surplus $x_n$ according to some distribution $F_n(x)$. If search is unsuccessful, the final producer will keep searching in the next period. The reservation surplus is defined by the indifference condition between accepting and continuing to search next period:
\begin{align*}
    V(x_{0,j},\mathbb{\hat{E}}[\boldsymbol{x}_{-j}]) = \beta S 
    \implies x_{0,j} = - \sum_{n \neq j}^N \mathbb{\hat{E}}[x_n] +\beta(1-\delta)(1-\beta)S.
\end{align*}
Since input markets are symmetric, the expected value of an active final producer in equilibrium is:
\begin{align*}
    \mathbb{E}[V] = \frac{N \mathbb{\hat{E}}[x] + \beta \delta S}{1-\beta(1-\delta)}. 
\end{align*}
In turn, the value of a searching final producer is:
\begin{align*}
    S = f^N \mathbb{E}[V]+(1-f^N) \beta S = \frac{f^N}{1-\beta (1-f^N)} \mathbb{E}[V].
\end{align*}
Plugging $S$ back into $\mathbb{E}[V]$ yields the equilibrium expression for the expected value of an active final producer:
\begin{align*}
    \mathbb{E}[V] = \frac{1-\beta(1-f^N)}{[1-\beta(1-\delta)][1-\beta(1-f^N)]-\beta\delta f^N} N \mathbb{\hat{E}}[x].
\end{align*}
Finally, the reservation surplus is given by:
\begin{align*}
\nonumber
    & x_0 = -\left[\left(1-\frac{\beta(1-\delta)(1-\beta) f^N}{[1-\beta(1-\delta)][1-\beta(1-f^N)]-\beta\delta f^N}\right)N-1\right]\mathbb{\hat{E}}[x^\star(z)] = -(\alpha(N)N-1) \mathbb{\hat{E}}[x^\star(z)],
\end{align*}
where $\alpha(N) \equiv 1-\frac{\beta(1-\delta) f^N}{1-\beta(1-\delta)(1-f^N)} \in (0,1)$ summarizes the influence of the option value of search on the reservation surplus.

We now solve the recursive profit maximization problem of intermediate producers (\ref{pmp_dyn}).
The first-order condition with respect to the surplus offered to the flow of new customers $x$ is:
\begin{align*}
  & \frac{\partial}{\partial x} \left[\theta \lambda \Pc(s,x;N)A(s;z)\right] + \beta \frac{\partial V(\mathcal{D},\bar{X})}{\partial \mathcal{D}}\frac{\partial}{\partial x} \left[\theta \lambda \Pc(s,x;N)\right]
  -\left[1-\beta(1-\delta) \frac{\partial V(\mathcal{D},\bar{X})}{\partial \bar{X}} \right]\frac{\partial}{\partial x} \left[\theta \lambda \Pc(s,x;N)x\right] \\
  & =0.
\end{align*}
The envelope conditions with respect to demand and the surplus offered to the stock of existing customers read:
\begin{align*}
\frac{\partial V(\mathcal{D}_{(-1)},\bar{X}_{(-1)})}{\partial \bar{X}_{(-1)}} &=-1, \quad 
  \frac{\partial V(\mathcal{D}_{(-1)},\bar{X}_{(-1)})}{\partial \mathcal{D}_{(-1)}} =(1-\delta)A(s;z).
\end{align*}
Rolling the envelope conditions forward and substituting them into the first-order condition yields:
\begin{align*}
   \left[1+\beta (1-\delta)\right] \frac{\partial}{\partial x} \left[\theta \lambda \Pc(s,x;N) \left(A(s;z)-x \right)\right]=0. 
\end{align*}
Developing the derivative allows recovering the same first-order condition as in the static model (\ref{static:foc_x}):
$\lambda \bar{\phi} G^\prime(x)[A(s;z)-x]=1.$
%
%
Hence, the equilibrium surplus offered to final producers in the dynamic model has the same structure as in the static model.

The stationary equilibrium of the dynamic model in Section \ref{dynamic_sec} is characterized by the following system of equations:
\begin{align*} \nonumber
    &   x_0 = -\left(\alpha(N)N-1\right)\mathbb{\hat{E}}[x^\star(z)] , \\[.2cm]  \nonumber
    &x^\star(z) = \left(1-f(z)\right) x_0+f(z)\mathbb{E}_{\max\{\tilde{z}\}|\tilde{z}\leq z}[A(s^\star(\tilde{z});\tilde{z})] , \\[.2cm]  \nonumber
    &  \mathcal{D}(z;N)\bigg[A^\prime (s^\star(z))+\delta[1+\beta(1-\delta)]\frac{\phi^\prime(s^\star(z))}{\phi(s^\star(z))}(A(s^\star(z);z)-x^\star(z))\bigg] = w q^\prime(s^\star(z)), \\[.2cm]
    & 1-\frac{\psi}{w}= N m \bar{q}, \\[.2cm]
    & C= \mu(f;N) f^{N}\frac{1}{\delta} N\mathbb{\hat{E}}[A(s^\star(\tilde{z});\tilde{z}), \quad \mu(f;N) = \frac{\delta}{1-(1-\delta)(1-f^N)}.
\end{align*}
The equilibrium conditions represent the optimal reservation surplus, surplus offered to final producers and specialization, along with the labor and goods market clearing conditions, respectively.

\subsubsection{Dynamic Model with Link Destruction}
\label{app:links}
The value of an active final producer is:
\begin{align*}
    V(x_1,\dots,x_N) = \sum_n x_n + \beta \left[(1-\rho(N)) S + \sum_{k=0}^N \binom{N}{k} (\delta f)^k (1-\delta)^{N-k}  V(\boldsymbol{x}_{N,k}) \right],
\end{align*}
where $\boldsymbol{x}_{N,k}$ denotes a vector with $k$ entries equal to $\mathbb{\hat{E}}[x]$ (the expected surplus offered by replaced inputs) and $N-k$ entries equal to the surplus offered by current inputs $x_i$ (under symmetry, the exact ordering of the $x_i$ does not matter).\footnote{By the Binomial theorem, it follows that $\sum_{k=0}^N \binom{N}{k} (\delta f)^k (1-\delta)^{N-k} = \rho(N)$.}
Notice that the retention probability function $\rho$ already allows for immediate replacement of missing suppliers. Hence, unlike in the baseline dynamic model, $S$ is interpreted as the value of an inactive final producer -- rather than a searching one.

In a symmetric equilibrium, the expected value of a final producer is:
\begin{align*}
    \mathbb{E}[V] = \frac{N \mathbb{\hat{E}}[x] + \beta (1-\rho(N)) S}{1-\beta\rho(N)}. 
\end{align*}
The value of an inactive final producer is:
\begin{align*}
    S = \beta \left[ f^N \mathbb{E}[V]+(1-f^N) S\right] = \frac{\beta f^N}{1-\beta (1-f^N)} \mathbb{E}[V]
\end{align*}
Plugging $S$ back into $\mathbb{E}[V]$, we get:
\begin{align*}
    \mathbb{E}[V] = \frac{1-\beta(1-f^N)}{(1-\beta)[1-\beta(\rho(N)-f^N)]} N \mathbb{\hat{E}}[x].
\end{align*}
For the computation of the reservation surplus, it is useful to characterize the expected value of an active producer whose input provider of input $j$ offers surplus $x_j$:
\begin{align*}
V(x_j,\mathbb{\hat{E}}[\boldsymbol{x}_{-j}]) = & \ x_0 + (N-1)\mathbb{\hat{E}}[\boldsymbol{x}] +\beta(1-\rho(N))S + \beta(1-\delta)\rho(N-1)V(x_j,\mathbb{\hat{E}}[\boldsymbol{x}_{-j}]) \\
    & +\beta[\rho(N)-(1-\delta)\rho(N-1)] \mathbb{E}[V],
\end{align*}
where $(1-\delta)\rho(N-1)$ denotes the probability that the link with the input provider of input $j$ is not severed and all the other inputs are sourced.
The reservation surplus is determined by the standard indifference condition between accepting the surplus offered and continuing to search in the next period:
\begin{align*}
    & V(x_0,\mathbb{\hat{E}}[\boldsymbol{x}]) = S , \\
    \implies &   x_0+(N-1)\mathbb{\hat{E}}[x] = \left[1-\beta \left(1-\rho(N)+(1-\delta)\rho(N-1)\right)\right] S -\beta \left[\rho(N)-(1-\delta)\rho(N-1)\right]\mathbb{E}[V], \\
    & x_0 = -\left[\left(1-\frac{\beta (1-\beta) \left(f^N-\rho(N)+(1-\delta)\rho(N-1)\right)}{1-\beta (\rho(N)-f^N)}\right)N-1\right]\mathbb{\hat{E}}[x^\star(z)] \\
    & \hphantom{x_0} = -\left(\tilde{\alpha}(N)N-1\right)\mathbb{\hat{E}}[x^\star(z)],
\end{align*}
where $\tilde{\alpha}(N) \equiv 1-\frac{\beta (1-\beta) \left[f^N-\delta f \rho(N-1)\right]}{1-\beta (\rho(N)-f^N)} \in (0,1)$ summarizes the influence of the option value of search on the reservation surplus, and follows from substituting $\rho(N)-(1-\delta)\rho(N-1)=\delta f \rho(N-1)$.

Let $\mu^A \in (0,1)$ denote the share of active final producers at the production stage. It follows that the share of searching final producers equals $\mu = (1-\mu^A_{(-1)})+\delta \mu^A_{(-1)}$. The dynamics of $\mu^A$ is governed by the following law of motion:
\begin{align}
    \mu^A = 1-\left[\left(1-\prod_{n=1}^N f_n \right)(1-\nu) +(1-f\rho(f;N-1))\nu\right]\mu,
\end{align}
where $\nu \equiv \frac{\delta\mu^A_{(-1)}}{\mu}$ is the share of searching final producers that were active in the previous period. Plugging the definition of $\nu$ into the stock identity for the share of searching final producers, we back out the relation between the share of searching final producers that were active in the previous period and the share of searching final producers:
\begin{align*}
    \mu_{(+1)} =& \ 1-\mu^A+\delta \mu^A = 1-\frac{\mu_{(+1)}\nu_{(+1)}}{\delta}+\mu_{(+1)}\nu_{(+1)}
    \implies \nu = \ \frac{\delta}{1-\delta}\left(\frac{1}{\mu}-1\right),
\end{align*}
where the implication follows from solving the identity for $\nu_{(+1)}$ and rolling all the time subscripts backward one period.

Exploiting symmetry across markets, the expected surplus offered of active matches, $\mathbb{\hat{E}}[x^\star(\tilde{z})]$, can be derived by applying the expectation operator $\mathbb{\hat{E}}[.]$ to the equilibrium surplus offered function (\ref{eq_x_solved}):
\begin{align} \nonumber
    \mathbb{\hat{E}}[x^\star(z)] =& -\mathbb{\hat{E}}[1-f(z)]\left(\tilde{\alpha}(N)N-1\right) \mathbb{\hat{E}}[x^\star(z)]  +  \mathbb{\hat{E}} \bigg[f(z) \mathbb{E}_{\max{\tilde{z}}|\tilde{z}<z} [A(s^\star(\tilde{z});\tilde{z})]\bigg] \\[.2cm]
    \nonumber
    =& \frac{\mathbb{\hat{E}} \left[f(z)\mathbb{E}_{\max{\tilde{z}}|\tilde{z}<z} [A(s^\star(\tilde{z});\tilde{z})]\right]}{1+\left(\tilde{\alpha}(N)N-1\right) \mathbb{\hat{E}}[1-f(z)]} \\[.2cm]
    \label{exp_surplus_links}
    =& \frac{\mathbb{\hat{E}} \left[f(z)\mathbb{E}_{\max{\tilde{z}}|\tilde{z}<z} [A(s^\star(\tilde{z});\tilde{z})]\right]}{1-\left(\tilde{\alpha}(N)N-1\right)\frac{1-f}{f}\ln{1-f}},
\end{align}
where we made use of the fact that $\mathbb{\hat{E}}[1-f(z)] = \frac{1}{f}\int e^{- \lambda \hat{\phi}(\underline{z},\tilde{z})}  e^{- \lambda \hat{\phi}(\tilde{z},\bar{z})} \lambda \phi(s(\tilde{z}))\gamma(\tilde{z})d\tilde{z} = \frac{e^{-\lambda \bar{\phi}}}{f} \int  \lambda \phi(s(\tilde{z}))\gamma(\tilde{z})d\tilde{z} = \frac{1-f}{f}\lambda \bar{\phi} = -\frac{1-f}{f}\ln{1-f}$.

Moreover, the stationary share of searching final producers, $\mu$, and the stationary share of searching final producers that were active in the previous period, $\nu$, read:
\begin{align*}
    \mu(f;N) =& \ \frac{\delta}{1-(1-\delta)\frac{1-f\rho(N-1)}{1-f \left(\rho(N-1)-f^{N-1}\right)}}, \\
    \nu(f;N) =& \ \frac{f^N}{1-f \left(\rho(N-1)-f^{N-1}\right)}.
\end{align*}

\paragraph{Efficiency.} The social planner solves the following recursive problem:
\begin{align*}
\mathcal{W}\left(\boldsymbol{\mathcal{D}}_{(-1)},\boldsymbol{\mu}\right) = \max_{ \substack{
s_j(z),\; j = 1,\dots,N \\
\quad z \in [\underline{z},\bar{z}]
}  } \ & m \sum_{n=1}^N \int D_n(z)A(s_n(z);z)\gamma(z)dz + \psi \log(1-\ell) +\beta \mathcal{W}\left(\boldsymbol{\mathcal{D}},\boldsymbol{\mu}_{(+1)}\right) \\[.4cm]
\text{s.t.} \ & D_j(z) = (1-\delta)\rho(\boldsymbol{f}_{-j}) D_{j(-1)}(z) + \frac{\mu_j}{m} \lambda \phi(s_j(z)) e^{-\lambda \hat{\phi}(z,\bar{z})} \tilde{f}_{j}(\boldsymbol{f}_{-j},\nu_j), \ \forall j,z, \\[.2cm]
& \mu_{j(+1)} = \delta+(1-\delta)\left[\left(1-\prod_{n=1}^N f_n\right)(1-\nu_j)+\left(1-f_j \rho (\boldsymbol{f}_{-j})\right)\nu_j\right]\mu_j
, \ \forall j, \\[.2cm]
& \tilde{f}_{j}(\boldsymbol{f}_{-j},\nu_j) = \nu_j \rho(\boldsymbol{f}_{-j})+(1-\nu_j) \prod_{n\neq j}f_n
, \ \forall j, \\[.2cm]
& \nu_j = \frac{\delta}{1-\delta} \left(\frac{1}{\mu_j}-1\right), \ \forall j, \\[.2cm]
& \ell = m \sum_{n=1}^N \int q(s_n(z)) \gamma(z) dz,
\end{align*}
where the second constraint follows from rearranging (\ref{mu_links}).

The first-order condition for efficient specialization of intermediate producers of input $j$ with productivity $z$ reads (suppressing notation when unambiguous):
\begin{align*}
    & \mu_j \lambda e^{-\lambda \hat{\phi}(z,\bar{z})} \tilde{f}_j \phi^\prime(s_j(z)) \left( A(s_j(z);z) \gamma(z) + \frac{\beta}{m} \frac{\partial \mathcal{W}}{\partial \mathcal{D}_j(z)}  \right)+ m \mathcal{D}_j(z) A^\prime(s_j(z)) \gamma(z) - \frac{\psi}{1-\ell} m q^\prime(s_j(z)) \gamma(z)  \\
    - & \mu_j \lambda \tilde{f}_j \lambda \phi^\prime(s_j(z)) \gamma(z) \int_{\underline{z}}^z  e^{-\lambda \hat{\phi}(\tilde{z},z)} \left(A(s_j(\tilde{z});\tilde{z})\gamma(\tilde{z})  +\frac{\beta}{m} \frac{\partial \mathcal{W}}{\partial \mathcal{D}_j(\tilde{z})} \right) d\tilde{z} \\
    + & \sum_{n \neq j} \int \mu_n \lambda \phi(s_n(\tilde{z})) e^{-\lambda \hat{\phi}(\tilde{z},\bar{z})} \frac{\partial \tilde{f}_n}{\partial s_j(z)} \left(A(s_j(\tilde{z});\tilde{z}) \gamma(\tilde{z})+\frac{\beta}{m} \frac{\partial \mathcal{W}}{\partial \mathcal{D}_n(\tilde{z})} \right) d\tilde{z} \\
    + & m \sum_{n \neq j} \int (1-\delta) \frac{\partial \rho(\boldsymbol{f}_{-n})}{\partial s_j(z)}  \mathcal{D}_{n (-1)}(\tilde{z}) \left(A(s_j(\tilde{z});\tilde{z})\gamma(\tilde{z})  +\frac{\beta}{m} \frac{\partial \mathcal{W}}{\partial \mathcal{D}_n(\tilde{z})} \right) d\tilde{z} \\
    - & \beta \sum_{n \neq j} \frac{\partial \mathcal{W}}{\partial \mu_{n(+1)}} (1-\delta) \mu_n \left[(1-\nu_n) \Pi_{s \neq j} f_s \frac{\partial f_j}{\partial s_j(z)}+ \nu_n f_n \frac{\partial \rho(\boldsymbol{f}_{-n})}{\partial s_j(z)} \right] \\
    - & \beta \frac{\partial \mathcal{W}}{\partial \mu_{j(+1)}} (1-\delta) \mu_j \left[(1-\nu_n) \Pi_{s \neq j} f_s \frac{\partial f_j}{\partial s_j(z)}+ \nu_n \rho(\boldsymbol{f}_{-j}) \frac{\partial f_j}{\partial s_j(z)}\right] = 0.
\end{align*}
The first line collects the direct effects of specialization on the firm making the specialization choice. The second line represents the spillover effect on the expected surplus from other intermediate producers
supplying the same input, holding constant both the conditional surplus and the compatibility
probability. The third and fourth lines capture the spillover effect
on the expected surplus from intermediate producers of complementary inputs. The fifth and sixth lines capture the spillover effect on meeting probabilities.

The envelope conditions read:
\begin{align*}
     \frac{\partial \mathcal{W}}{\partial \mathcal{D}_{j (-1)}} &= m(1-\delta) \rho(\boldsymbol{f}_{-j}) A(s_j(z);z) \gamma(z), \\
     \frac{\partial \mathcal{W}}{\partial \mu_j} &= \int  \left(\tilde{f}_j+\mu_j \frac{\partial \tilde{f}_j}{\partial \nu_j} \frac{\partial \nu_j}{\partial \mu_j}\right) \lambda \phi(s_j(z)) e^{-\lambda \hat{\phi}(z,\bar{z})} A(s_j(z);z)\gamma(z) dz.
\end{align*}
Rolling the envelope conditions forward, substituting them into the first-order condition, and using steady-state conditions, efficient specialization of intermediate producers with productivity $z$ is given by: 
%
\begin{align}
    \nonumber
      & \mathcal{D}(z;N) \bigg[A^\prime(\Sc(z))+[1-(1-\delta)\rho(f;N-1)][1+\beta(1-\delta)\rho(f;N-1)]\frac{\phi^\prime(\Sc(z))}{\phi(\Sc(z))}\bigg(A(\Sc(z);z)    \\ \nonumber
      & 
      - f(z)\mathbb{E}_{\max\{\tilde{z}\}|\tilde{z} \leq z}[A(\Sc(\tilde{z});\tilde{z})] + \chi_1(f;N)(N-1)\left(1-f(z)\right)\mathbb{\hat{E}}[A(\Sc(\tilde{z});\tilde{z})]   \\ \label{eff_s_dyn_links}
      & -\chi_2(f;N)\frac{\beta(1-\delta)}{1+\beta(1-\delta)}f^N N \left(1-f(z)\right) \mathbb{\hat{E}}[A(\Sc(\tilde{z};\tilde{z}))]\bigg)\bigg] = \frac{\psi}{1-N m \bar{q}} \  q^\prime(\Sc(z)).
\end{align}
The first-order condition characterizing efficient specialization in the dynamic model with link destruction differs from the baseline dynamic model (\ref{eff_s_dyn}) in two respects. First, the disruption probability of existing customers is $[1-(1-\delta)\rho(f;N-1)]$ instead of $\delta$: this reflects the fact that existing customers keep sourcing the input if the supplier link does not break down (which happens with probability $1-\delta$) and the other supplier links either do not face disruptions or are replaced in case of disruption (which happens with probability $\rho(f;N-1)$). Similarly, the marginal value of an existing customer is compounded by $\beta(1-\delta)\rho(f;N-1)$, rather than simply by $\beta(1-\delta)$, because a customer may separate if hit by a link disruption and unable to replace it. Second, the network and search externality are weighted by the multipliers $\chi_1(f;N) \in (0,1)$ and $\chi_2(f;N) \lesseqgtr 0$, respectively. These multipliers are defined as follows:
\begin{align*}
   \chi_1(f;N) \equiv & \ 1-\nu\frac{\rho(f;N-1)-\delta f \rho(f;N-2)}{\tilde{f}(f,\nu)}+(1-\delta)\frac{\delta f \rho(f;N-2)}{1-(1-\delta)\rho(f;N-1)}  \\
   = & \ 
    1-(1-\delta)\frac{f(1-\delta)\rho(f;N-2)(1-\rho(f;N-1))]}{1-(1-\delta)\rho(f;N-1)} \in (0,1),  \\[.6cm]
   \chi_2(f;N) \equiv & \ \frac{1+\beta(1-\delta)}{1+\beta(1-\delta)\rho(f;N-1)}\left(1-\frac{\rho(f;N-1)-f^{N-1}}{\tilde{f}(f;N-1)}\frac{\delta}{(1-\delta)\mu(f;N)}\right)\bigg[(1-\nu(f;N)) \\
   & \ +\frac{\nu(f;N)}{N}\bigg(\frac{\rho(f;N-1)}{f^{N-1}}+(N-1)\delta\frac{\rho(f;N-2)}{f^{N-2}}\bigg)\bigg] \\
   = & \ \frac{1+\beta (1-\delta)}{1+\beta (1-\delta) \rho(f;N-1)}\frac{f^{N-1}-\delta\rho(f;N-1)}{(1-\delta)f^{N-1}} \left(1-\frac{N-1}{N}f(1-\delta)\rho(f;N-2) \right) \\
   < & \ f^{N-2}[1-(1-f)N],
\end{align*}
where the simplifications follow from substituting the steady-state expressions for $\nu, \mu$, and $\tilde{f}$.
The multiplier $\chi_1(f;N)$ discounts, on the one hand, the network externality of the baseline dynamic model because a share $\nu$ of final searching producers need to source less than $N$ inputs in the market (second term) and, on the other hand, inflates it since existing customers have a lower probability of being active the higher the specialization of other inputs (third term). 
The multiplier of the search externality, $\chi_2(f;N)$, features two distinct components. The first multiplier discounts the search externality of the baseline dynamic model because a higher share of searching final producers, $\mu$, tilts the composition of the pool of searching final producers to those that were inactive in the previous period, which have a lower probability of being active. The second multiplier adjusts the probability that a searching final producer is active in the baseline dynamic model, $f^N$, for the composition of the pool of searching final producers. The two multipliers pushes the search externality in opposite directions, so their net effect is qualitatively ambiguous.

We now turn to characterizing the efficiency properties of the dynamic equilibrium with link destruction.
Following Theorem \ref{prop:eff_dynam_links}, the equilibrium allocation displays over-specialization if $\chi_1(f;N) (N\widetilde{\mathcal{M}}(N) -1) \mathbb{\hat{E}}[A(s^\star(z);z)]+x_0>0$, where $\widetilde{\mathcal{M}}(N) \equiv 1-\frac{\beta(1-\delta)}{1+\beta(1-\delta)}f^N \frac{\chi_2(f;N)}{\chi_1(f;N)}$. Substituting for the equilibrium reservation surplus $x_0$, the over-specialization condition reads:
\begin{align*}
    \chi_1(f;N)(N\widetilde{\mathcal{M}}(N)-1)\mathbb{\hat{E}}[A(s^\star(z);z)]-\left(\tilde{\alpha}(N)N-1\right)\mathbb{\hat{E}}[x^\star(z)]>0.
\end{align*}
We start by analyzing the first term, which represents the wedge between the network externality and the search externality. Since $\chi_1(f;N)\mathbb{\hat{E}}[A(s^\star(z);z)]>0$, the sign of this term is pinned down by the difference $N\widetilde{\mathcal{M}}(N)-1$. The following result allows us to sign the term.
\begin{applemma}[Network Externality Dominates Search Externality]
\label{lemma:ntwrk>search}
    If production is complex, the network externality always dominate the search externality, i.e.,
\begin{align*}
    N\widetilde{\mathcal{M}}(N)-1>0, \quad \forall N>1 \iff \widetilde{\mathcal{M}}(N) \in \left(\frac{1}{2},1\right].
\end{align*}
\end{applemma}

The derivation in equation (\ref{exp_surplus_links}) allows us to bound the expected surplus offered of active matches: $\mathbb{\hat{E}}[x^\star(z)] \in \left(0,\frac{f^2\mathbb{\hat{E}} \left[A(s^\star(\tilde{z});\tilde{z})\right]}{{f-(1-f)\ln{1-f}\left(\tilde{\alpha}(N)N-1\right)}}\right).$ Based on these bounds, we proceed by deriving sufficient conditions for over-specialization.

First, suppose that $\tilde{\alpha}(N)N-1>0$.
Upon substituting for the upper bound of $\mathbb{\hat{E}}[x^\star(z)]$, a sufficient condition for the equilibrium allocation to exhibit over-specialization is: 
\begin{align*}
    F(f;\delta,N,\beta) & \ \equiv \chi_1(f;N)(N\widetilde{\mathcal{M}}(N)-1)-\frac{f^2}{f \frac{1-\beta(\rho(N)-f^N)}{1-\beta \left[(\rho(N)-f^N)+(1-\beta)\left(f^N-\delta f \rho(N-1)\right) \right]}-(1-f)\ln{1-f}} \\
    & \ >0.
\end{align*}
This condition depends only on structural parameters ($\delta$, $N$, and $\beta$) and on the equilibrium input finding probability, which is bounded between $0$ and $1$.
 The following lemma allows us to characterize the behavior of the over-specialization condition as complexity varies.
\begin{applemma}[Asymptotic Behavior of Over-Specialization Condition]
\label{lemma:ismonotonic}
    Let $(f,\delta,\beta) \in (0,1)^3$. There exists a finite level of complexity $1<\underline N < \infty $ such that if the production process is sufficiently complex, i.e. $N>\underline N$, then $F(f;\delta,N,\beta) >0 \quad \forall N> \underline N$. 
\end{applemma}
\noindent
Lemma \ref{lemma:ismonotonic} allows us to claim that, if the production complex is sufficiently complex, the equilibrium allocation displays over-specialization and under-resilience
 for any combination of structural parameters \textit{and} equilibrium input finding probability. Figure \ref{fig:over-spec_links} shows that the dynamic equilibrium with link destruction exhibits over-specialization across nearly the entire parameter space -- especially so as complexity grows larger. 

\begin{figure}[ht!]
\begin{center}
\caption{Direction of equilibrium inefficiency}
\label{fig:over-spec_links}
\begin{subfigure}[b]{0.325\textwidth}

\label{fig:over-spec_links_N=2}

\caption{$N=2$}
\includegraphics[width=\textwidth,keepaspectratio]{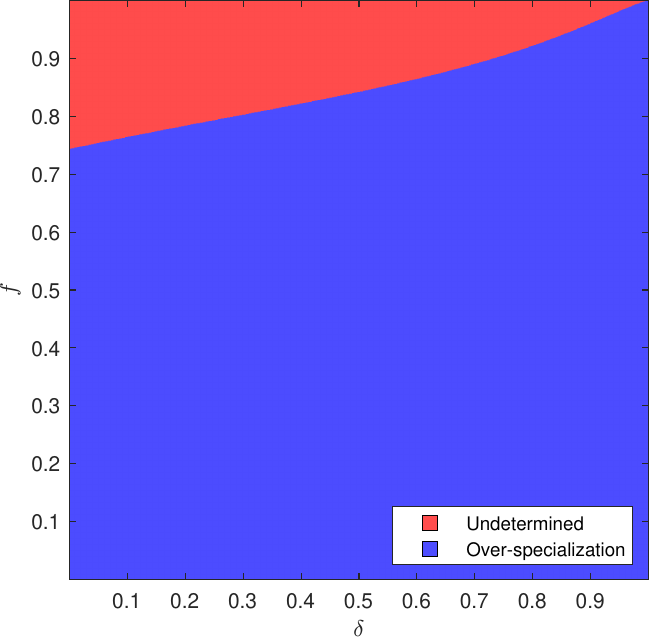}
\end{subfigure}
\begin{subfigure}[b]{0.325\textwidth}
\label{fig:over-spec_links_N=3}

\caption{$N=3$}
\includegraphics[width=\textwidth,keepaspectratio]{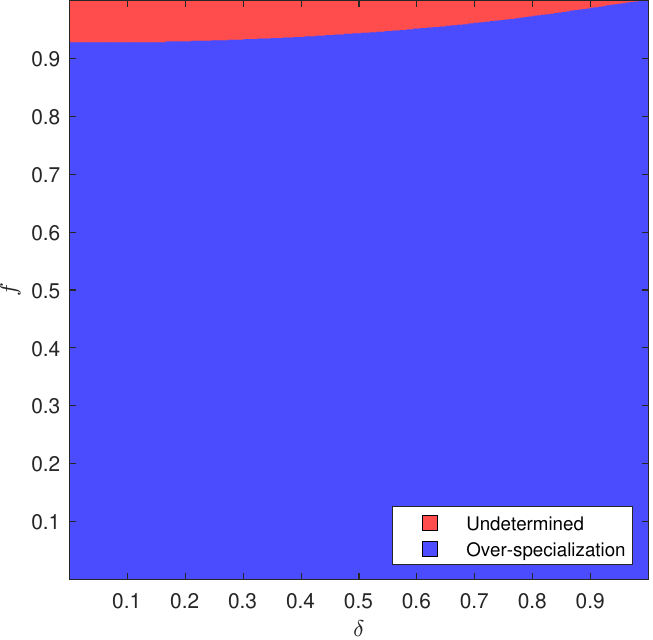}
\end{subfigure}
\begin{subfigure}[b]{0.325\textwidth}
\label{fig:over-spec_links_N=5}

\caption{$N=5$}
\includegraphics[width=\textwidth,keepaspectratio]{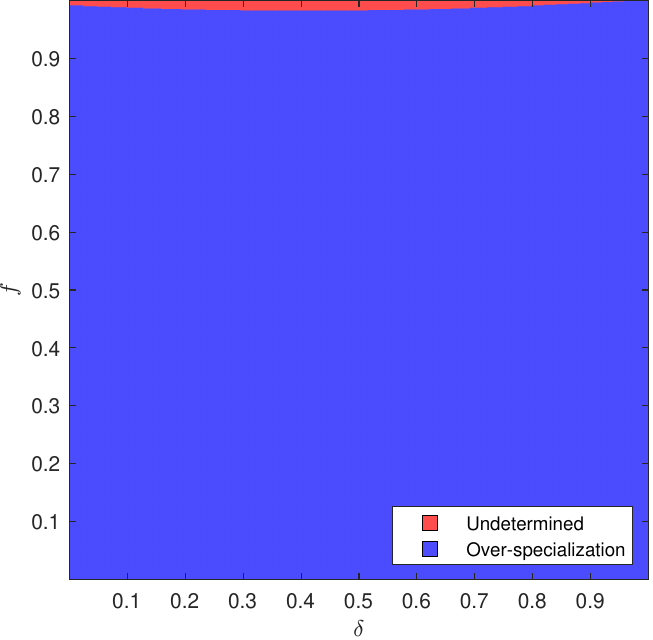}
\end{subfigure}
\begin{minipage}{0.9\textwidth} \scriptsize{} \textit{Note}:  The figure shows whether the function $F(f;\delta,N,\beta)$ is positive (blue) or negative (red) for any combination of $\delta, f \in \{0.001,0.002,0.003,\dots,0.999\}^2$, and $N \in \{2,3,5\}$. The discount factor $\beta$ is set to $0.995$.
\end{minipage} 
\end{center}
\end{figure}

Second, suppose that $\tilde{\alpha}(N)N-1<0$. 
Since $\mathbb{\hat{E}}[x^\star(z)]$ is bounded by zero from below, a sufficient condition for the equilibrium allocation to exhibit over-specialization is $ N\widetilde{\mathcal{M}}(N)-1>0, \quad \forall N>1$, which is always the case by Lemma \ref{lemma:ntwrk>search}.

Overall, our numerical checks allow us to conclude that, for any parametrization, the equilibrium of the dynamic model with link destruction displays over-specialization if the production process is sufficiently complex.

\subsubsection{Strategies to Mitigate the Impact of Supplier Breakdowns}
\label{app:inv_mult}
In this section we examine two key strategies that final producers can adopt to mitigate the impact
of supplier breakdowns: holding inventories and sourcing inputs from multiple intermediate
producers (multi-sourcing). Our goal is to determine whether these strategies can prevent firm-level disruptions caused by the inability of sourcing inputs. If they do, implementing the corresponding strategy would eliminate the network externality. Indeed, from the perspective of an intermediate producer, the probability that an attached final producer is active would be one regardless of the specialization of the other inputs.

\paragraph{Inventories.}
Suppose that active final producers can choose to purchase more units of each input than needed for current production, setting aside inventories as a buffer. If a supplier
relationship breaks down, these inventories allow production to continue until a new supplier is
found or until the stock is depleted. We further assume that inputs sourced can be used either in the current period or in the next -- otherwise they perish. This implies that final producers set aside either zero or one unit of input from an input provider. However, the analysis can be generalized to accommodate any finite number of periods after which inputs perish.

Active final producers choose optimal inventories by solving the following dynamic profits maximization problem:

\begin{align}
    \max_{\{I_j\}_{j=1}^N \in \{0,1\}^N} \underbrace{\mathcal{Q}_i-\sum_{j=1}^N(1+I_j)p_j}_{\pi_i\left(\{I_j^{(-1)}\}_{j=1}^N,\{I_j\}_{j=1}^N \right)} + \beta \mathbb{E}\left[\pi_i^\prime\left(\{I_j\}_{j=1}^N\right)\right],
\end{align}
where $\mathcal{Q}_i=\sum_{j=1}^N \max \{A_j^{(-1)} I_j^{\star(-1)},A_j\}$ is the consumption good quality and $I_j^{\star(-1)}$ denotes optimal inventories of input $j$ chosen in the previous period.\footnote{Since the inventory choice is made by active final producers, they always buy one unit of intermediate good by their input providers. Inventories are purchased on top of that.}

To highlight the basic economic forces shaping the inventory decision, we now specialize on the case where $N=2$. The optimal inventory policy function reads:
\begin{align} \label{eq:inventory}
    & I_j^{\star}\left(\{A_s\}_{s=1}^N,p_j,\{I^\star_s\}_{s \neq j}^N \right)=1 \iff \\ \nonumber
    p_j < & \ \beta \delta_j \Bigg( \left[(1-\delta_{-j})+\delta_{-j} \left(f_{-j}+(1-f_{-j})I^{\star}_{-j}\right)\right]f_j \text{Pr}(A^\prime_j \leq A_j) \mathbb{E}[A_j-A^\prime_j | A^\prime \leq A_j] \dots \\ \nonumber
    & \ + (1-\delta_{-j})(1-f_j)[A_{-j}+A_j-(1+I^{\star\prime}_{-j})p_{-j}] \dots \\ \nonumber
    & \ +\delta_{-j} f_{-j}(1-f_{j})[\max \{A_{-j} I_{-j}^{\star},A^\prime_{-j}\}+A_j-(1+I^{\star\prime}_{-j})p_{-j}] \dots \\ \nonumber
    & \ + \delta_{-j}(1-f_{-j})(1-f_{j})I^{\star}_{-j}[A_{-j} +A_j]\Bigg).
\end{align}
The marginal cost of an additional unit of inventories is the price of the input $j$ set by the current input provider. The marginal benefit comprises four terms -- all discounted by the probability that the supplier relationship with the current input provider breaks down, $\delta_j$. First, if the final producer is able to replace the supplier relationship in the next period, she would use the unit of inventories to produce were the new input supplier of lower value than the previous one (second row). Second, if the final producer is not able to replace the supplier relationship in the next period and the other supplier relationship does not break up, she would use the unit of inventories to continue production (third row). Third, if the final producer is not able to replace the supplier relationship in the next period and the other supplier relationship gets replaced upon breaking up, she would use the unit of inventories to continue production (third row). Finally, if also the other supplier relationship breaks down and the final producer is not able to replace either of them in the next period, she would use the unit of inventories to continue production provided that she has set aside inventories of the other input, as well  (fifth row).

Hence, inventories allows mitigating the impact of supplier breakdowns.
However, if inputs are perishable, inventories cannot completely hedge final producers from disruptions. The reason is that, with positive probability, the final producer will not be able to locate a compatible intermediate producer in $n<\infty$ periods.

If inputs were not perishable, a final producer may set aside a large number of units (approaching infinity) of inventories to completely hedge from disruptions. However, this strategy would not be profit-maximizing. As apparent from equation (\ref{eq:inventory}), the marginal benefit of holding inventories scales with the time-discounted probability of actually using such inventories. On the other hand, the marginal cost always equals the current price. It follows that there exists an $\tilde{n}<\infty$ such that, if $n>\tilde{n}$, the marginal benefit of the $n$-th unit of inventories approaches zero. Hence, holding infinite inventories cannot be a profit-maximizing strategy. 

Overall, inventories do not completely hedge final producers from disruptions in equilibrium. It follows that intermediate producers will always face a probability lower than one that an attached final producer is able to source all the other complementary inputs. Since this probability is decreasing in specialization, the network externality is not eliminated by endogenous inventories.

\paragraph{Multi-sourcing.}
Suppose final producers can establish links with multiple compatible
suppliers met during a supplier search. Specifically, assume that a final producer may choose to maintain an idle
backup link with the second-best compatible supplier at a fixed per-period cost $c_M$. If the link with
the best supplier breaks down first, the final producer can immediately switch to the backup
supplier and continue production. This implies that active final producers are in contact with either one or two intermediate producers of a certain input. However, the analysis can be generalized to accommodate any finite number of suppliers per input.

Let $D_j (D_j^M)$ be an indicator of link destruction with the main (backup) provider of input $j$.
Active final producers choose the optimal number of backup links by solving the following dynamic profits maximization problem:
\begin{align} \
    \max_{M_j \in \begin{cases}
         \scriptstyle{\{0,1\}} & \scriptstyle{\text{if } (D_j=1,N^c>1) \text{ or } (D_j^M=0,M_j^{\star (-1)}=1)} \\
         \scriptstyle{\{0\}} & \scriptstyle{\text{else}}
    \end{cases}
    , j=1,\dots,N}
    \pi_i-c_M\sum_{j=1}^N M_j + \beta \mathbb{E}\left[\pi_i^\prime\left(\{M_j\}_{j=1}^N\right)\right].
\end{align}
In words, final producers make a supplier search when the supplier relationship with their main input provider breaks down ($D_j=1$). Provided that they meet more than one compatible supplier ($N^c>1$), they can establish a backup link with the second-best compatible supplier. If they established a backup link in a past supplier search ($M_j^{\star (-1)}=1$), they can roll it over unless the link breaks for exogenous reasons ($D_j^M=0$).
To highlight the basic economic forces shaping the multi-sourcing decision, we now specialize on the case where $N=2$. The optimal multi-sourcing policy function reads:
\begin{align} \label{eq:multisource}
    & M_j^{\star}\left(\{x_s\}_{s=1}^N,\{x_s^M\}_{s=1}^N,c_M,\{M^\star_s\}_{s \neq j}^N \right)=1 \iff \\ \nonumber
    c_M < & \beta \delta_j(1-\delta_j^M) \Bigg( \left[(1-\delta_{-j})+\delta_{-j} \left(f_{-j}+(1-f_{-j})M^{\star}_{-j}\right)\right]f_j G(x^M_j) \mathbb{E}[x_j^M-x^\prime_j | x^\prime \leq x_j^M] \dots \\ \nonumber
    & + (1-\delta_{-j})(1-f_j)\left[\pi_i(x_j^M,x^M_{-j})-c_M\sum_j M_j^{\star \prime}\right] \dots \\ \nonumber
    & +\delta_{-j}f_{-j}(1-f_j)\left[\pi_i(x_j^M,\max\{x^M_{-j},x^\prime_{-j}\})-c_M\sum_j M_j^{\star \prime}\right]  \dots \\ \nonumber
    & +\delta_{-j}(1-f_{-j})(1-f_j) M^{\star}_{-j}\left[\pi_i(x_j^M,x^M_{-j})-c_M\sum_j M_j^{\star \prime}\right]\Bigg).
\end{align}
The optimality condition for the number of backup links shares many similarities with that of inventories (\ref{eq:inventory}). The marginal cost of an additional backup link is the per-period cost of maintaining an idle link alive, $c_M$. The marginal benefit comprises four terms -- all discounted by the probability that the supplier relationship with the main input provider breaks down and that with the backup supplier does not, $\delta_j(1-\delta_j^M)$. First, if the final producer is able to replace the supplier relationship in the next period, she would upgrade the backup link to main supplier were the new input supplier of lower value, thus gaining the differential surplus offered by the two (second row). Second, if the final producer is not able to replace the supplier relationship in the next period and the other supplier relationship does not break up, she would resort to the backup link to continue production (third row). Third, if the final producer is not able to replace the supplier relationship in the next period and the other supplier relationship gets replaced upon breaking up, she would resort to the backup link to continue production (third row). Finally, if also the other supplier relationship breaks down and the final producer is not able to replace either of them in the next period, she would resort to the backup link to continue production provided that she has a backup link for the other input, as well  (fifth row).

Hence, multi-sourcing allows mitigating the impact of supplier breakdowns.
However, if alternative input providers can be located only when performing a supplier search, multi-sourcing cannot completely hedge final producers from disruptions. The reason is that, with positive probability, the final producer will not be able to locate more than one compatible intermediate producer in a supplier search.

If final producers could search for alternative suppliers in any period (\textit{on-the-match} search), a final producer may build up a large number of backup links (approaching infinity) to completely hedge from disruptions. However, this strategy would not be profit-maximizing. As apparent from equation (\ref{eq:multisource}), the marginal benefit of an additional backup link scales with the time-discounted probability of actually using such links. On the other hand, the marginal cost always equals the link maintenance cost. It follows that there exists an $\tilde{n}<\infty$ such that, if $n>\tilde{n}$, the marginal benefit of the $n$-th backup link approaches zero. Hence, holding infinite backup links cannot be a profit-maximizing strategy. 

Overall, multi-sourcing does not completely hedge final producers from disruptions in equilibrium. It follows that intermediate producers will always face a probability lower than one that an attached final producer is able to source all the other complementary inputs. Since this probability is decreasing in specialization, the network externality is not eliminated by endogenous multi-sourcing.

\subsection{Normative Analysis}
\label{app:norm}
Suppose a social planner can use standards as the only policy instrument. We now study how the social planner would choose a standard in the static economy of Section \ref{static_sec}. 

The vector of optimal standards $\boldsymbol{\bar{s}} = (\bar{s}_1, \dots, \bar{s}_N)$  solves:
\begin{align*}
    \max_{\boldsymbol{\bar{s}}} & \sum_{n=1}^N \prod_{v \neq n} f(\boldsymbol{s}_v) \int A(s_n(\tilde{z});\tilde{z}) e^{-\lambda \hat{\phi}(\tilde{z},\bar{z})} \lambda \phi(s_n(\tilde{z})) \gamma(\tilde{z}) d\tilde{z} + \psi \log \left(1-m \sum_{n=1}^N q(s_n(\tilde{z}))\gamma(\tilde{z})d\tilde{z}\right) \\[.2cm]
    & \text{s.t.} \ s_j(z) = \min\{\bar{s}_j,s^\star_j(z)\} \ \forall j.
\end{align*}
Assuming that ${s^\star_j}^\prime(z)>0 \ \forall j, \forall z \in [\underline{z},\bar{z}]$, the problem underlying the choice of the standard for input $j$ can be expressed as follows:\footnote{As shown in Appendix \ref{app:spec_function}, the shape of the (exogenous) productivity distribution can be chosen to ensure that the equilibrium specialization function is monotonically increasing in productivity.}
\begin{align*}
  \max_{\bar{s}_j} & \prod_{v \neq j} f(\boldsymbol{s}_v) \left[\int_{\underline{z}}^{\hat{z}(\bar{s}_j)} A(s^\star_j(\tilde{z});\tilde{z}) e^{-\lambda \hat{\phi}_j(\tilde{z},\bar{z})} \lambda \phi(s^\star_j(\tilde{z})) \gamma(\tilde{z}) d\tilde{z} + \int_{\hat{z}(\bar{s}_j)}^{\bar{z}} A(\bar{s}_j;\tilde{z}) e^{-\lambda \hat{\phi}_j(\tilde{z},\bar{z})} \lambda \phi(\bar{s}_j) \gamma(\tilde{z}) d\tilde{z}\right] \\
  & +\sum_{n \neq j}^N \prod_{v \neq n} f(\boldsymbol{s}_v) \int A(s^\star_n(\tilde{z});\tilde{z}) e^{-\lambda \hat{\phi}_n(\tilde{z},\bar{z})} \lambda \phi(s^\star_n(\tilde{z})) \gamma(\tilde{z}) d\tilde{z} \\
  & + \psi \log \left(1-m \left[\int_{\underline{z}}^{\hat{z}(\bar{s}_j)} q(s^\star_j(z))\gamma(\tilde{z})d\tilde{z} +(1-\Gamma(\hat{z}(\bar{s}_j)))q(\bar{s}_j)\right] -m\sum_{n \neq j}^N q(s^\star_n(\tilde{z}))\gamma(\tilde{z})d\tilde{z}\right).
\end{align*}
From Proposition \ref{prop:stds} we know that the optimal standard is lower than the highest specialization observed in equilibrium, i.e., $\bar{s}_j \in [s^\star_j(\underline{z}),s^\star_j(\bar{z}))$. The optimal standard solves:
\begin{align*}
   & \sum_{n \neq j}^N \frac{\partial f(\boldsymbol{s}_n)}{\partial \bar{s}_j} \prod_{v \neq j,n} f(\boldsymbol{s}_v) \mathbb{E}_{\max \tilde{z}}[A(s_j(\tilde{z});\tilde{z})] +  \prod_{v \neq j} f(\boldsymbol{s}_v) \left[A(\bar{s}_j;\hat{z}) e^{-\lambda \phi(\bar{s}_j)(1-\Gamma(\hat{z}))} \lambda \phi(\bar{s}_j) \gamma(\hat{z}) \frac{\partial \hat{z}(\bar{s}_j)}{\partial \bar{s}_j} \right. \\
   & + \left. \int_{\underline{z}}^{\hat{z}(\bar{s}_j)} \frac{\partial}{\partial \bar{s}_j} \left(A(s^\star_j(z);z)e^{-\lambda \hat{\phi}_j(\tilde{z},\bar{z})} \lambda \phi(s^\star_j(\tilde{z}))\right)\gamma(\tilde{z})d\tilde{z}-  A(\bar{s}_j;\hat{z}) e^{-\lambda \phi(\bar{s}_j)(1-\Gamma(\hat{z}))} \lambda \phi(\bar{s}_j) \gamma(\hat{z}) \frac{\partial \hat{z}(\bar{s}_j)}{\partial \bar{s}_j} \right. \\
   & \left. + \int_{\hat{z}(\bar{s}_j)}^{\bar{z}}   \left(A^\prime(\bar{s}_j;\tilde{z}) \phi(\bar{s}_j)+A(\bar{s}_j;\tilde{z}) \phi^\prime(\bar{s}_j)\right) \lambda e^{-\lambda \phi(\bar{s}_j)(1-\Gamma(\tilde{z}))}\gamma(\tilde{z})d\tilde{z} \right. \\
   & \left. - \lambda \phi^\prime(\bar{s}_j)  \int_{\hat{z}(\bar{s}_j)}^{\bar{z}}   \vphantom{\frac{\partial \hat{z}}{\partial \bar{s}_j}} (1-\Gamma(\tilde{z})) A(\bar{s}_j;\tilde{z}) \phi(\bar{s}_j) \lambda   e^{-\lambda \phi(\bar{s}_j)(1-\Gamma(\tilde{z}))} \gamma(\tilde{z}) d\tilde{z} \right] \\
   & + \sum_{n \neq j}^N \sum_{m \neq n}^N \frac{\partial f(\boldsymbol{s}_m)}{\partial \bar{s}_j} \prod_{v \neq j,m} f(\boldsymbol{s}_v) \mathbb{E}_{\max \tilde{z}}[A(s_n^\star(\tilde{z});\tilde{z})]  + \sum_{n \neq j}^N \prod_{v \neq n} f(\boldsymbol{s}_n) \int_{\underline{z}}^{\bar z} \frac{\partial}{\partial \bar{s}_j} \left(A(s^\star_n(z);z)e^{-\lambda \hat{\phi}_n(\tilde{z},\bar{z})} \lambda \phi(s^\star_n(\tilde{z}))\right)  \\
   &  \gamma(\tilde{z})d\tilde{z} - \frac{\psi}{1-\ell}m \left[q(\bar{s}_j) \gamma(\hat{z}) \frac{\partial \hat{z}(\bar{s}_j)}{\partial \bar{s}_j}+ \int_{\underline{z}}^{\hat{z}(\bar{s}_j)} q^\prime(s^\star_j(\tilde{z})) \frac{\partial s^\star_j(\tilde{z})}{\partial \bar{s}_j} \gamma(\tilde{z}) d\tilde{z} -  q(\bar{s}_j) \gamma(\hat{z}) \frac{\partial \hat{z}(\bar{s}_j)}{\partial \bar{s}_j}+(1-\Gamma(\hat{z}))q^\prime(\bar{s}_j) \right.  \\
   & \left. + \sum_{n \neq j}^N \int q^\prime(s^\star_n(\tilde{z})) \frac{\partial s^\star_n(\tilde{z})}{\partial \bar{s}_j} \gamma(\tilde{z}) d\tilde{z} \right] := 0.
\end{align*}
The first term represents the marginal effect of setting a standard for input $j$ on the expected surplus from input $j$ induced by changing the input finding probability of the other inputs $n \neq j$. The second term in square brackets represents the marginal effect on the expected surplus from input $j$ for given finding probability of the other inputs. The third term represents the marginal effect of setting a standard for input $j$ on the expected surplus from complementary inputs $n \neq j$ induced by changing the input finding probability of input $j$. The fourth term represents the marginal effect on the expected surplus from complementary inputs $n \neq j$ for given finding probability of the other inputs. The last term in square brackets represents the marginal effect on the labor requirement for specialization.

This first-order condition can be significantly simplified by using an Envelope argument. Intuitively, small perturbations around the privately optimal specialization do not have first-order effects. Hence, we make use of equation (\ref{eq_s}) to observe that:
\begin{align} \nonumber
    & \ \prod_{v \neq j} f(\boldsymbol{s}_v)\int_{\underline{z}}^{\hat{z}(\bar{s}_j)} \frac{\partial}{\partial \bar{s}_j} \left(A(s^\star_j(z);z)e^{-\lambda \hat{\phi}_j(z,\bar{z})} \lambda \phi(s^\star_j(z))\right)\gamma(z)dz-\frac{\psi}{1-\ell}m  \int_{\underline{z}}^{\hat{z}(\bar{s}_j)} q^\prime(s^\star_j(z)) \frac{\partial s^\star_j(z)}{\partial \bar{s}_j} \gamma(z) dz \\ \nonumber
    = & \ \int_{\underline{z}}^{\hat{z}(\bar{s}_j)} \left[\prod_{v \neq j} f(\boldsymbol{s}_v) \frac{\partial }{\partial s^\star_j(z)} \int_{\underline{z}}^{\hat{z}(\bar{s}_j)}A(s^\star_j(\tilde{z});\tilde{z})e^{-\lambda \hat{\phi}_j(\tilde{z},\bar{z})} \lambda \phi(s^\star_j(\tilde{z}))\gamma(\tilde{z})d\tilde{z}-\frac{\psi}{1-\ell}m  q^\prime(s^\star_j(z)) \gamma(z)\right]\frac{\partial s^\star_j(z)}{\partial \bar{s}_j} dz \\ \nonumber
    & - \ \prod_{v \neq j} f(\boldsymbol{s}_v) \lambda \phi^\prime(\bar{s}_j) (1-\Gamma(\hat{z}))\int_{\underline{z}}^{\hat{z}(\bar{s}_j)} A(s^\star_j(\tilde{z});\tilde{z}) \lambda \phi(s^\star_j(\tilde{z})) e^{-\lambda \hat{\phi}(\tilde{z},\bar{z})} \gamma(\tilde{z}) d\tilde{z} \\
    \nonumber
    = & \ \int_{\underline{z}}^{\hat{z}(\bar{s}_j)} \Bigg(\frac{1}{m} \prod_{v \neq j} f(\boldsymbol{s}_v) \lambda \phi(s^\star_j(z)) e^{-\lambda \hat{\phi}_j(z,\bar{z})}\bigg[A^\prime(s^\star_j(z))+\frac{\phi^\prime(s^\star_j(z))}{\phi(s^\star_j(z))} \Big(A (s^\star_j(z);z) - f_j(z) \\ \nonumber
    & \ \mathbb{E}_{\max\{\tilde{z}\}|\tilde{z}\leq z}[A\left(s^\star_j(\tilde{z});\tilde{z}\right)]\Big)\bigg]  -\frac{\psi}{1-\ell}  q^\prime(s^\star_j(z)) \Bigg) \frac{\partial s^\star_j(z)}{\partial \bar{s}_j} m \gamma(z) dz\\ 
    \nonumber
    & - \ \prod_{v \neq j} f(\boldsymbol{s}_v) \lambda \phi^\prime(\bar{s}_j) (1-\Gamma(\hat{z}))\int_{\underline{z}}^{\hat{z}(\bar{s}_j)} A(s^\star_j(\tilde{z});\tilde{z}) \lambda \phi(s^\star_j(\tilde{z})) e^{-\lambda \hat{\phi}(\tilde{z},\bar{z})} \gamma(\tilde{z}) d\tilde{z} \\ \nonumber
    = & \ \int_{\underline{z}}^{\hat{z}(\bar{s}_j)} \prod_{v \neq j} f(\boldsymbol{s}_v) \lambda \phi^\prime(s^\star_j(z)) e^{-\lambda \hat{\phi}_j(z,\bar{z})}\left(1-f_j(z)\right) x_{0,j}  \frac{\partial s^\star_j(z)}{\partial \bar{s}_j} m \gamma(z) dz \\ \nonumber 
    & - \ \prod_{v \neq j} f(\boldsymbol{s}_v) \lambda \phi^\prime(\bar{s}_j) (1-\Gamma(\hat{z})) \int_{\underline{z}}^{\hat{z}(\bar{s}_j)} A(s^\star_j(\tilde{z});\tilde{z}) \lambda \phi(s^\star_j(\tilde{z})) e^{-\lambda \hat{\phi}(\tilde{z},\bar{z})} \gamma(\tilde{z}) d\tilde{z}.
\end{align}
where the last equality follows from substituting the first-order conditions for the privately optimal specialization (\ref{eq_s}) and surplus offered (\ref{eq_x_solved}). The same condition holds for the other inputs $n \neq j$ by letting $\hat{z}(\bar{s}_n)=\bar{z}$.

Upon implementing this simplification, as well as canceling out the derivatives of the extremes of integration, the optimal standard is pinned down implicitly by the following condition:
\begin{align} \nonumber
    & \prod_{v \neq j} f(\boldsymbol{s}_v) \ \lambda \phi(\bar{s}_j) \int_{\hat{z}(\bar{s}_j)}^{\bar{z}} \left[A^\prime(\bar{s}_j;\tilde{z})+\frac{\phi^\prime(\bar{s}_j)}{\phi(\bar{s}_j)}\left(A(\bar{s}_j;\tilde{z})-\tilde{x}(\bar{s}_j;\tilde{z})\right) \right]  e^{-\lambda \phi(\bar{s}_j) (1-\Gamma(\tilde{z}))} \gamma(\tilde{z}) d\tilde{z} = \\ 
    \nonumber
    & \ \frac{\psi}{1-\ell}m (1-\Gamma(\hat{z}(\bar{s}_j)))q^\prime(\bar{s}_j) -\sum_{n=1}^N \sum_{m \neq n}^N \frac{\partial f(\boldsymbol{s}_m)}{\partial \bar{s}_j} \prod_{v \neq n,m} f(\boldsymbol{s}_v) \mathbb{E}_{\max \tilde{z}} [A(s_n(\tilde{z});\tilde{z})] \\ \nonumber
    & - \prod_{v \neq j} f(\boldsymbol{s}_v)  \int_{\underline{z}}^{\hat{z}(\bar{s}_j)} \lambda \phi^\prime(s^\star_j(z)) e^{-\lambda \hat{\phi}_j(z,\bar{z})}\left(1-f_j(z)\right) x_{0,j}  \frac{\partial s^\star_j(z)}{\partial \bar{s}_j} m \gamma(z) dz \\ \nonumber
    & -\sum_{n \neq j}^N  \prod_{v \neq n} f(\boldsymbol{s}_v)  \int \lambda \phi^\prime(s^\star_n(z)) e^{-\lambda \hat{\phi}_n(z,\bar{z})}\left(1-f_n(z)\right) x_{0,n}  \frac{\partial s^\star_n(z)}{\partial \bar{s}_j} m \gamma(z) dz,
\end{align}
where $\tilde{x}(\bar{s}_j;z) \equiv \lambda \phi(\bar{s}_j)(1-\Gamma(z)) A(\bar{s}_j;z)+\int_{\underline{z}}^{\hat{z}(\bar{s}_j)} A(s_j(\tilde{z});\tilde{z}) \lambda \phi(s_j(\tilde{z})) e^{-\lambda \hat{\phi}(\tilde{z},\hat{z}(\bar{s}_j))} \gamma(\tilde{z}) d\tilde{z} =  \lambda \phi(\bar{s}_j) (1-\Gamma(z)) A(\bar{s}_j;z)+f(\hat{z}(\bar{s}_j))\mathbb{E}_{\max \tilde{z} | \tilde{z}<\hat{z}(\bar{s})}[A(s^\star(z);z)] \, \forall z \in [\hat{z}(\bar{s}_j),\bar{z}]$ acts as a \textit{shadow} surplus offered, i.e., the counterpart of the surplus offered in equation (\ref{eq_s}). Formally, it equals the sum of the foregone total surplus of intermediate producers with productivity $z>\hat z(\bar s)$ from capping specialization and the expected match surplus from the highest-productivity compatible intermediate producer contacted by a searching final producer with lower productivity than $\hat z(\bar s)$.

Upon imposing symmetry, the optimality condition can be rearranged as:
\begin{align} 
  & f^{N-1} \lambda \phi(\bar{s}) \int_{\hat{z}(\bar{s})}^{\bar{z}} \left[A^\prime(\bar{s};\tilde{z})+\frac{\phi^\prime(\bar{s})}{\phi(\bar{s})}\left(A(\bar{s};\tilde{z})-\tilde{x}(\bar{s};\tilde{z})\right) \right]  e^{-\lambda \phi(\bar{s}) (1-\Gamma(\tilde{z}))} \gamma(\tilde{z}) d\tilde{z} = \frac{\psi}{1-\ell}m (1-\Gamma(\hat{z}(\bar{s})))q^\prime(\bar{s}) \\ 
    \nonumber
    &  -(N-1) \left(\frac{\partial f_j}{\partial \bar{s}_j}+(N-1)\frac{\partial f_{-j}}{\partial \bar{s}_j}\right) f^{N-1} \mathbb{\hat{E}} [A(s(\tilde{z});\tilde{z})] - f^{N-1} m \bigg[\int_{\underline{z}}^{\hat{z}(\bar{s})} \lambda \phi^\prime(s(z))  e^{-\lambda \hat{\phi}(z,\bar{z})}\left(1-f(z)\right)\\
   \label{opt_std}
    &   \frac{\partial s^\star_j(z)}{\partial \bar{s}_j} \gamma(z)  dz  + (N-1)\int_{\underline{z}}^{\bar{z}} \lambda \phi^\prime(s(z)) e^{-\lambda \hat{\phi}(z,\bar{z})}\left(1-f(z)\right) \frac{\partial s^\star_{-j}(z)}{\partial \bar{s}_j}  \gamma(z) dz \bigg] x_0,
\end{align}
where the subscript $j$ is omitted whenever the respective variable is the same across input markets.
The optimality condition takes a familiar form. 
The left-hand side  provides the marginal benefit from higher specialization for constrained firms (recall that increasing $\bar s$ implies loosening the standard). All intermediate producers whose specialization is constrained increase their specialization to the new looser standard, thereby increasing the surplus ($A^\prime>0$) but reducing the compatibility probability ($\phi^\prime<0$) -- the latter being valued at the social surplus generated by the intermediate producer ($A-\tilde{x}$).

The right-hand side describes the marginal cost of loosening the standard. The marginal cost consists of three terms. First, a looser standard raises the marginal labor requirement for specialization. Second, changing the standard impacts the network externality on all input markets.\footnote{A looser standard on input $j$ increases the average specialization of the input itself, thus reducing the finding probability and worsening the network externality for complementary input producers, $\partial f_j/\partial \bar s_j<0$. Moreover, a looser standard affects the incentives to specialize in complementary inputs $-j$. On the one hand, both the higher marginal labor requirement for specialization and the lower trading probability in $j$ induce intermediate producers in other input lines to specialize less. On the other hand, a higher reservation surplus for their input (recall that the reservation surplus for a given input equals the expected surplus offered by complementary inputs) pushes them to specialize more. Hence, the impact on the network externality for complementary input producers is a priori ambiguous, $\partial f_{-j}/\partial \bar s_j \lesseqgtr 0$.}
Finally, changing the standard impacts the appropriability externality across markets induced by unconstrained intermediate producers on all input markets.  The sign of this term reflects the direction of the specialization response of unconstrained intermediate producers. According to the decomposition (\ref{dwds_expanded}), this term is weighted by the reservation surplus for the respective input.
Overall, the first line is the exact counterpart of the private optimal specialization condition (\ref{eq_s}). The last two lines capture the marginal change in the network externality and appropriability externality across markets.

We finally study how the effectiveness and bindingness of optimal standards vary with the structural parameters of our economy. To do so, we make use of a quantitative version of our dynamic model with steady-state behavior (see Section \ref{sec:dyn_ss}), and run comparative statics exercises around the parameter vector underlying Figure (\ref{fig:opt_stds}).\footnote{Focusing on the dynamic model with steady-state behavior allows us to gauge the welfare loss due to equilibrium over-specialization by directly comparing steady-state welfare in the equilibrium and efficient allocation. On the contrary, steady-state welfare is not the objective function of the dynamically-consistent Social Planer problems analyzed in Section \ref{dynamic_sec}.} Results are reported in Table \ref{tab:stds}. As search efficiency improves, firms choose higher optimal specialization. The planner also raises the standard $\bar s$, but does so more quickly than the market, so fewer firms end up constrained by the standard.
Conversely, when disruptions become more frequent, the planner lowers $\bar s$ more slowly than the market does, causing more firms to be constrained.
Likewise, as production complexity rises, firms reduce their privately optimal specialization and the planner also lowers the standard, but again at a slower pace, which makes the standard effectively tighter.
 \begin{table}[ht] 
 \tabcolsep=1.5pt\relax
  \centering
    \caption{Optimal standards - Comparative statics}
  \label{tab:stds}
\resizebox{0.4\textwidth}{!}{
  \begin{tabular}{c c c c}\hline\hline
  \Tstrut
 & \multicolumn{1}{c}{$ \quad \Wc(\Bar{s}^\star)/\Wc^{max} \quad $} & \multicolumn{1}{c}{$\quad \Bar{s}^\star/\mathbb{E}[s^\star(z)] \quad$} & \multicolumn{1}{c}{$\quad 1-\Gamma(z(\Bar{s}^\star)) \quad $} \\[.1cm]
 \hline \\
  \multicolumn{4}{c}{\textit{Search efficiency} } \\[.2cm]
 $\boldsymbol{\lambda=2}$ & $0.948$  & $0.662$  & $0.908$ \\[.2cm]
 $\boldsymbol{\lambda=3}$ & $ 0.991$  & $1.155$  & $0.382$ \\[.2cm]
 $\boldsymbol{\lambda=4}$ & $0.995$  & $1.595$  &  $0.004$ \\[.6cm]
  \multicolumn{4}{c}{\textit{Disruption probability}} \\[.2cm]
 $\boldsymbol{\delta=0.08}$ & $0.997$  & $1.271$ & $0.265$ \\[.2cm]
 $\boldsymbol{\delta=0.10}$ & $ 0.991$  & $1.155$  & $0.382$ \\[.2cm]
 $\boldsymbol{\delta=0.12}$ & $0.985$  & $1.097$ & $0.434$\\[.6cm]
  \multicolumn{4}{c}{\textit{Complexity}} \\[.2cm]
 $\boldsymbol{N=5}$ & $1.000$  & $1.430$  & $0.072$ \\[.2cm]
 $\boldsymbol{N=6}$ & $ 0.991$  & $1.155$  & $0.382$ \\[.2cm]
 $\boldsymbol{N=7}$ & $0.978$ & $1.041$ & $0.486$\\[.2cm]
 \hline \hline
 
 \end{tabular}
 }
 \vspace{.2cm}
 
 \begin{minipage}{0.7\textwidth} \scriptsize{} \textit{Note}:  The table reports (i) the ratio between the steady-state welfare by setting an optimal standard to all inputs in the stationary equilibrium of the dynamic model with steady-state behavior (see Section \ref{sec:dyn_ss}) and the steady-state welfare by implementing the steady-state welfare-maximizing specialization (\ref{eff_s_dyn_ss}), (ii) the ratio between optimal standard and average equilibrium specialization, and (iii) the share of constrained firms by the optimal standard for a combination of structural parameters. For each comparative statics exercise, all the other parameters are kept fixed at their value of Figure (\ref{fig:opt_stds}) (the intermediate value for the parameters considered).
\end{minipage} 
 \end{table}

\newpage
\section{Extensions}
\label{sec:extensions}
In this section, we extend our model along four dimensions.
First, we study the problem of final producers optimally choosing the level of complexity of their production process, that is, the length of their supply chain or the number of complementary inputs needed for final production. We show that, much like specialization choices, complexity is generally inefficient. 

Second, we allow final producers to invest resources to reduce the likelihood of own disruptions, e.g., through prudential investment in redundancies. We show that prudential investment is generally inefficient due to a well-known hold-up problem. 

Third, we consider a general quality function that allows for arbitrary substitutability or complementarity among complementary inputs.
This extension brings about a further externality induced by specialization choices associated with complementarity in quality.

Finally, we note that our baseline model features arbitrage opportunities. Specifically, we have not imposed any condition that makes the relative mass of intermediate and final producers adjust to arbitrage away any difference in the expected returns from running either class of firms. In our final extension, therefore, we allow entrepreneurs to enter the market either as intermediate or final producers. This extension brings about a further externality induced by specialization choices related to entry decisions.

For simplicity, in all extensions we assume steady-state behavior by both the private agents and the social planner -- as common in the equilibrium search literature \citep{burdett1998wage}. This means that intermediate producers maximize their steady-state profits, while the social planner maximizes the steady-state utility of the representative household. 
Hence, we start this section by developing the dynamic model with steady-state behavior to use as a benchmark.

\subsection{Dynamic Model with Steady-State Behavior}
\label{sec:dyn_ss}

\paragraph{Stationary equilibrium.}
Intermediate firms solve the following profits maximization problem:
\begin{align*}
    V(s,x;z) = \max_{s,x}
    & \ \mathcal{D}(s,x)\left(A(s;z)-x\right)-w q(s) \\
    \text{s.t.} & \ \mathcal{D}(s,x) =
    \frac{\theta \lambda}{\delta}\Pc(s,x).
\end{align*}
Differently from the baseline dynamic model, intermediate producers use stationary demand as demand constraint (instead of its law of motion).

The optimal surplus offered has the same structure as in the baseline static model, given by equation (\ref{eq_x_solved}). Due to steady-state behavior, the reservation surplus does not embed any option value of searching and boils down to its static value (\ref{eq:res_surplus}), as well.
Optimal specialization solves:
\begin{align} \label{eq_s_dyn_ss}
 \mathcal{D}(z;N)\bigg[A^\prime (s^\star(z))+\frac{\phi^\prime(s^\star(z))}{\phi(s^\star(z))}(A(s^\star(z);z)-x^\star(z))\bigg] =w q^\prime(s^\star(z)).
\end{align}
The only difference with respect to the optimal specialization condition of the baseline (dynamically-consistent) model (\ref{eq_s_dyn}) lies in the absence of any multiplier on the reduction in trading probability. 

\paragraph{Social planner problem.}
The social planner solves:
\begin{align*}
\mathcal{W} = \max_{ \substack{
s_j(z),\; j = 1,\dots,N \\
\quad z \in [\underline{z},\bar{z}]
}  } & \ m \sum_{n=1}^N \int D_j(z)A(s_j(z);z) \ \gamma(z)dz + \psi \log(1-\ell) \\[.4cm]
\text{s.t.} \ & D_j(z) = \frac{1}{\delta} \frac{\mu}{m} \lambda \phi(s_j(z)) e^{-\lambda \hat{\phi}(z,\bar{z})} \prod_{v\neq j} f_v, \\[.2cm]
& \mu = \frac{\delta}{\delta+(1-\delta)\prod_{n=1}^N f_{n}}, \\[.2cm]
& \ell = m \sum_{n=1}^N \int q(s_n(z)) \gamma(z) dz.
\end{align*}
Hence, the social planner uses stationary firm-level demand and the stationary share of searching final producers as constraints -- rather than their laws of motion.

Efficient specialization of intermediate producers with productivity $z$ is given by:
\begin{align}
    \nonumber
      & \mathcal{D}(z;N) \bigg[A^\prime(\Sc(z))+\frac{\phi^\prime(\Sc(z))}{\phi(\Sc(z))}\bigg(A(\Sc(z);z) - f(z)\mathbb{E}_{\max\{\tilde{z}\}|\tilde{z} \leq z}[A(\Sc(\tilde{z};\tilde{z}))]+(N-1)\left(1-f(z)\right)  \\ \label{eff_s_dyn_ss}
    & 
    \mathbb{\hat{E}}[A(\Sc(\tilde{z});\tilde{z})]-\frac{1-\delta}{\delta+(1-\delta)f^N}f^N N \left(1-f(z)\right) \mathbb{\hat{E}}[A(\Sc(\tilde{z});\tilde{z})]\bigg)\bigg] 
    = \frac{\psi}{1-N m \bar{q}} \  q^\prime(\Sc(z)). 
\end{align}
Comparing equations (\ref{eff_s_dyn_ss}) and (\ref{eff_s_dyn}), we observe that the former differs from the latter (i) by the absence of a multiplier on the reduction in trading probability, and (ii) in the multiplier on the search externality.

Rearranging (\ref{eff_s_dyn_ss}) allows us to establish the following proposition.
\begin{appprop}  [Efficiency of the Dynamic Economy with Steady-State Behavior] \label{prop:eff_dynam_ss}
   The economy is constrained-efficient if and only if $(N\mu(f;N)-1)\mathbb{\hat{E}}[A(s^\star(z);z)]+x_0=0$, where $\mu(f;N) = \frac{\delta}{\delta+(1-\delta)f^N}$. Let $\Xi(N) \equiv 1-(N-1)\frac{1-f}{f}\ln{1-f}$. If $N>\frac{\Xi(N)-1}{\Xi(N)\mu(f;N)-1}$ and $\Xi(N)\mu(f;N)-1>0$, then the economy features over-specialization.
\end{appprop}
Hence, assuming steady-state behavior, the variable $\mathcal{M}(N)$ defined in Theorem \ref{thm:eff_dynam} boils down the share of searching final producers $\mu(f;N)$. Overall, the dynamic model with steady-state behavior features the same qualitative sources of externality as the baseline dynamic model. 

\subsection{Endogenous Complexity}
\label{app:end_N}
In this extension, we endogenize the problem of firms choosing the complexity of their production process. In particular, final producers optimize over the number of key inputs in the production function. This approach is similar to the core idea of \cite{oberfield2018IO}, \cite{boehm2020misallocation}, \cite{kopytov2021endogenous} and \cite{kim2023supplychain}. 
For simplicity, in this extension, we treat $N$ as a measure rather than an integer.

\paragraph{Static model.}
We allow final producers to choose the complexity of the production process to maximize expected profits:
\begin{align*}
    N^\star = \argmax_{N\geq 1} f^{N} N \mathbb{\hat{E}}[x(z)],
\end{align*}
where $f$ is the input finding probability, and $\mathbb{\hat{E}}[x(z)]$ is the expected surplus accruing to the final producer on each input, conditional on being active. Intuitively increasing complexity induces two opposite effects. On the one hand, adding inputs increases the value of output directly. On the other hand, it reduces the probability that the consumption good is produced altogether.
The optimal level of complexity solves:\footnote{If $N$ were a positive integer, the optimal level of complexity would be $N^\star = \left\lfloor \frac{1}{1 - f} \right\rfloor$, where $\lfloor . \rfloor$ denotes the floor function.}
\begin{align}\label{eq_complex}
    N^\star = \frac{1}{\ln{1/f}}.
\end{align}
Hence, complexity is increasing in the input finding probability $f$.
Efficient complexity solves:
\begin{align*}
    \Nc = \argmax_{N \geq 1}   \ f^N N \mathbb{\hat{E}}[A(s(z);z)] + \psi \log \left(1-N m \bar{q} \right).
\end{align*}
Relative to the problem of individual firms, the social planner problem features two differences. First, additional complexity is evaluated at the expected social surplus, ${\mathbb{\hat{E}}[A(s(z);z)]}$, instead of the expected private surplus, ${\mathbb{\hat{E}}[x(z)]}$. Second, the planner accounts for the additional specialization costs associated with additional inputs, taking the form of higher disutility from labor. 
Hence, efficient complexity is given by:
\begin{align} \label{eff_complex}
    \Nc &= 
     \frac{1}{\ln{1/f}}\left(1- \frac{\tilde{w} \ell}{y}\right).
\end{align}
where $y$ is given by (\ref{output_static}), $\tilde{w} \equiv \frac{\psi}{1-Nm\bar{q}}$ is the shadow wage rate, and the labor share $\tilde{w}\ell/y$ represents the ratio between aggregate product design costs and value added.
Comparing equations (\ref{eq_complex}) and (\ref{eff_complex}) allows us to establish the following result.
\begin{appprop}[Efficiency of the Static Economy with Endogenous Complexity]\label{prop:efficiency_endog_N}
    The equilibrium is inefficient and features lower sourcing capacity than the constrained-efficient allocation, i.e., $f({\boldsymbol{s}}^\star)^{N^\star}<f(\boldsymbol{\Sc})^\Nc$. For given specialization, the equilibrium exhibits excess complexity, i.e., $N^\star>\Nc$. For given complexity,  the equilibrium exhibits over-specialization, i.e., $s^\star(z)>\Sc(z)$.
\end{appprop}
Intuitively, final producers do not internalize the marginal product design cost induced by their complexity choice, thereby generating a \textit{hold-up} externality. Choosing a higher level of complexity forces intermediate producers to invest more in product design. These costs are sunk at the time of price posting.  As a result, equilibrium sourcing capacity  is lower than efficient. 


\paragraph{Dynamic model.}
Equilibrium complexity is the same as in the static model and given by (\ref{eq_complex}).
Efficient complexity solves:
\begin{align*}
    \Nc = \argmax_{N \geq 1} \ &  \mu(f;N)\bigg(\frac{f^N}{\delta} N\mathbb{\hat{E}}[A(s(z);z)]\bigg) + \psi \log \left(1-Nm\bar{q}\right),
\end{align*}
where $\mu(f;N) = \frac{\delta}{\delta+(1-\delta)f^N}$ is the stationary share of searching final producers.
Efficient complexity is given by:
\begin{align} \label{eff_complex_dyn}
    \Nc = \frac{1}{\mu(f;\Nc)\ln{1/f}}\left(1- \frac{\tilde{w} \ell}{y}\right).
\end{align}
Comparing equations (\ref{eff_complex_dyn}) and (\ref{eff_complex}), we observe that a further inefficiency arises in a dynamic setting.
Following the same approach adopted for specialization, the marginal welfare effect of equilibrium complexity can be decomposed into two terms:  \begin{align}\label{dwdN}
    \pd{\Wc}{N}\bigg|_{N=N^\star}\propto \ \underbrace{- \ w\ell/y \ }_{\substack{\text{hold-up} \\ \text{externality}}} \ 
 \underbrace{\vphantom{-\mu}\ + (1-\mu(f;N^\star)) \ }_{\vphantom{\text{hold-up}}\text{search externality}}.
\end{align}
First, a hold-up externality pushes equilibrium complexity to be higher than efficient. The reason is that final producers fail to internalize the product design costs of each additional input, which are borne by intermediate producers before meeting. The surplus share of such product design costs is represented by the labor share. Second, final producers do not internalize the effect of their complexity choice on the equilibrium share of searching final producers. This search externality pushes the private marginal cost of complexity to exceed the social one. 
Rearranging (\ref{eff_complex_dyn}) allows us to establish the following result:
\begin{appprop}[Efficiency of the Dynamic Economy with Endogenous Complexity]\label{prop:dyn_endog_N}
The equilibrium is constrained-efficient if and only if:
\begin{align*}
    \begin{cases}
        (N \mu(f;N^\star)-1)\mathbb{\hat{E}}[A(s^\star(z);z)]+x_0=0,  \\
        LS^\star = 1-\mu(f;N^\star),
    \end{cases}
\end{align*}
where $\mu(f;N^\star) = \frac{\delta}{\delta+(1-\delta)f^{N^\star}}$ and $LS=\frac{w\ell}{y}$. 
The equilibrium features under-resilience if and only if ${LS}^\star>1-\mu(f;N^\star)$. 
\end{appprop}
\noindent
In general, the equilibrium features both inefficient complexity and specialization. 

\subsection{Endogenous Robustness}
\label{app:end_rob}
So far, we have treated the disruption probability of final producers as an exogenous parameter. In this section, we allow final producers to make prudential investments in robustness aimed at reducing their disruption probability, e.g., through redundancies. Investment in robustness comes at per-period cost $\kappa(r)$ per supply relationship, with $\kappa^\prime>0, \ \kappa^{\prime\prime}>0$. Specifically, we posit that $\delta=\delta(r)$, with $\delta^\prime<0, \ \delta^{\prime \prime}>0$. For analytical tractability, we assume that final producers
choose their robustness policy before matching with their input providers.\footnote{More specifically, we assume that final producers commit to a robustness policy $r$ at the search stage for the entire life of the firm until a disruption occurs. Implementing such a robustness policy requires paying a per-period cost $\kappa(r)$ per supplier relationship. If final producers could optimize their robustness policy once matched, intermediate producers would internalize the robustness response when choosing the surplus offered to final producers.} We think of robustness policies as the design and implementation of prudential strategies to reduce the exposure to exogenous shocks.

\paragraph{Stationary equilibrium.}
Final producers choose their robustness to maximize expected profits:
\begin{align}
    r^\star = \argmax_{r} f^N N\frac{ \mathbb{\hat{E}}[x(z)]-\kappa(r)}{\delta(r)},
\end{align}
where $\mathbb{\hat{E}}[x(z)]$ is the expected surplus accruing to the final producer on each input, conditional on being active.
Equilibrium robustness solves:
%
\begin{align} \label{eq_r}
    -\delta^\prime(r^\star)\frac{\mathbb{\hat{E}}[x(z)]-\kappa(r^\star)}{\delta(r^\star)} =\kappa^\prime(r^\star).
\end{align}
The privately optimal robustness trades off a lower disruption probability, $-\delta^\prime$, evaluated at the expected present discounted value of each supplier relationship, $\frac{ \mathbb{\hat{E}}[x]-\kappa}{\delta}$, against higher private marginal costs, $ \kappa^\prime$. Importantly, note that firms discount the future with their endogenous disruption probability, $\delta$. Under the assumption that final producers decide their robustness before matching, the expected surplus offered is constant across final producers, and so is their optimal robustness. Due to the presence of robustness costs, the reservation surplus of final producers, $x_0$, is pinned down by the following expected break-even condition:
\begin{align} \label{reservation_surplus_end_rob}
    x_0+(N-1)\mathbb{\hat{E}}[x(z)]=N\kappa(r^\star).
\end{align}
Final producers need to source $N$ key inputs. Whenever they meet with a potential supplier, they know that production will ultimately entail a $N \kappa(r)$ robustness cost. They are, therefore, willing to accept any surplus such that operating profits exceed that cost, where $(N-1)\mathbb{\hat{E}}[x(z)]$ represents the expected surplus they can obtain from the other $N-1$ suppliers.
%
The reservation surplus acts as a boundary condition for the differential equation governing the equilibrium surplus offered to final producers (\ref{eq_x}).
Hence, the equilibrium surplus offered to final producers reads:
\begin{align} \label{eq_x_solved_end_rob}
x^\star(z) &= \left(1-f(z)\right)\left(\kappa(r^\star)-(N-1)\mathbb{\hat{E}}[x^\star(z)-\kappa(r^\star)]\right)+f(z)\mathbb{E}_{\max\{\tilde{z}\}|\tilde{z}\leq z}[A(s^\star(\tilde{z});\tilde{z})] \\ \nonumber
&= \kappa(r^\star)- \left(1-f(z)\right)(N-1)\mathbb{\hat{E}}[x^\star(z)-\kappa(r^\star)]+f(z)\mathbb{E}_{\max\{\tilde{z}\}|\tilde{z}\leq z}[A(s^\star(\tilde{z});\tilde{z})-\kappa(r^\star)].
\end{align}
It follows that the expected surplus offered of active matches equals $\mathbb{\hat{E}}[x^\star(z)-\kappa(r^\star)]= \frac{1}{\Xi(N)}\mathbb{\hat{E}}[f(z)\mathbb{E}_{\max\{\tilde{z}\}|\tilde{z}<z}[A(s^\star(z);z)-\kappa(r^\star)]]$, where $\Xi(N) \equiv 1-(N-1)\frac{1-f}{f}\ln{1-f}>1$. 
Equations (\ref{reservation_surplus_end_rob}) and (\ref{eq_x_solved_end_rob}) jointly imply that the equilibrium reservation surplus offered to final producers is given by $x_0=  \kappa(r^\star)-\frac{N-1}{\Xi(N)}\mathbb{\hat{E}}[f(z)\mathbb{E}_{\max\{\tilde{z}\}|\tilde{z}<z}[A(s^\star(z);z)]]$. Note that $x_0<\kappa(r^\star)$, since final producers are willing to take on negative net surplus from a single supplier relationship to extract positive expected surplus from the other $N-1$ relationships.
Up to the different formulation of the equilibrium surplus offered to final producers, equilibrium specialization is still determined by  (\ref{eq_s_dyn_ss}):
\begin{align*} 
 \mathcal{D}(z;N,\delta(r^\star))\bigg[A^\prime (s^\star(z))+\frac{\phi^\prime(s^\star(z))}{\phi(s^\star(z))}(A(s^\star(z);z)-x^\star(z))\bigg] = w q^\prime(s^\star(z)).
\end{align*}
Interestingly, since $\delta(r)$ discounts the marginal benefit of specialization in (\ref{eq_s_dyn_ss}), equilibrium robustness and specialization are positively correlated.
This result aligns well with the empirical findings of \cite{khanna2022india}, which shows that supply chains involving more specialized inputs proved more robust to Covid-19-induced disruptions in India.

\paragraph{Social planner problem.}
 Next, we study the planner problem to evaluate the efficiency properties of equilibrium specialization and robustness decisions.
The efficient specialization condition has the same formulation as in the model with exogenous robustness, up to the definition of match surplus as net of robustness costs.

Efficient robustness solves the following social planner problem:
\begin{align*}
    \varrho = \argmax_{r} \ \mu(f,r;N)\frac{f^{N}}{\delta(r)} N \Big( \mathbb{\hat{E}}[A(s(z);z)]-\kappa(r)\Big).
\end{align*}
Notice that robustness costs are borne by active final producers, whose equilibrium mass is $\mu\frac{f^N}{\delta}=\frac{f^N}{\delta+(1-\delta) f^{N}}$.
Efficient robustness is implicitly defined by: 
\begin{align} \label{eff_r}
        -\delta^\prime(\varrho)\left(1-f^N\right)\mu(f,\varrho;N) \frac{ \mathbb{\hat{E}}[A(s(z);z)-\kappa(\varrho)]}{\delta(\varrho)} =\kappa^\prime(\varrho).
\end{align}
The socially optimal level of robustness trades off a lower disruption probability, $-\delta^\prime$, evaluated at the expected present discounted social value of each supplier relationship, $\left(1-f^N\right)\mu$ $\frac{\mathbb{\hat{E}}[A-\kappa]}{\delta}$, against the marginal robustness cost, $\kappa^\prime$.
Comparing equations (\ref{eq_r}) and (\ref{eff_r}), we observe that equilibrium robustness is generally inefficient. Let the aggregate surplus share accruing to final producers be  $\tilde{\zeta} \equiv \frac{\mathbb{\hat{E}}[x^\star(z)-\kappa(r^\star)]}{\mathbb{\hat{E}}[A(s^\star(z);z)-\kappa(r^\star)]}$. 
The marginal welfare effect of equilibrium robustness can be decomposed into two terms:
\begin{align}\label{dwdr}
    \pd{\Wc}{r}\bigg|_{r=r^\star}\propto \underbrace{\vphantom{-\mu}\ 1-\tilde{\zeta} \ }_{\substack{\text{hold-up} \\ \text{externality}}} \    \underbrace{- \ [1-(1-f^N)\mu(f,r^\star;N)] \ }_{\vphantom{\text{hold-up}}\text{search externality}} .
\end{align}
First, a standard hold-up externality induces equilibrium investment in robustness to be lower than efficient, as the flow private marginal benefit, $\mathbb{\hat{E}}[x-\kappa]$, falls short of the flow social marginal benefit, $\mathbb{\hat{E}}[A-\kappa]$. Second, firms do not internalize the effect on equilibrium tightness induced by their higher robustness. This search externality is evident when noting that the social planner discounts the marginal surplus at a higher rate, $\delta/[(1-f^N)\mu]$, than private firms do, $\delta$. It follows that the search externality pushes equilibrium investment in robustness to be higher than efficient.
The nature of the search externality is akin to that arising in models of endogenous match destruction, where firms ignore the effect of their reservation productivity choice on equilibrium tightness. As in those models, a specific surplus sharing rule allows decentralizing the efficient allocation \citep{pissarides2000}. 
Formally, efficient robustness obtains in equilibrium (for given aggregate specialization) if the aggregate surplus share accruing to final producers equals the equilibrium share of unattached final producers, i.e., $\tilde{\zeta}=(1-f^N)\mu$. Under this condition, the hold-up externality exactly offsets the search externality. 
Otherwise, the net effect of the hold-up and search externality is ambiguous.\footnote{Standard models of endogenous match destruction imply that either the equilibrium match destruction rate is efficient or lower than efficient. The reason why, in our model, match destruction can be excessive is the presence of the hold-up externality potentially countervailing the search externality.} 

We now turn to determining efficient specialization. The social planner problem reads:
\begin{align*}
\mathcal{W} = \max_{ \substack{
s_j(z),\; j = 1,\dots,N \\
\quad z \in [\underline{z},\bar{z}]
}  } & \ m \sum_{n=1}^N \int D_n(z)[A(s_n(z);z)-\kappa(r)] \ \gamma(z)dz + \psi \log(1-\ell) \\[.4cm]
\text{s.t.} \ & D_j(z) = \frac{1}{\delta(r)} \frac{\mu(\delta(r))}{m} \lambda \phi(s_j(z)) e^{-\lambda \hat{\phi}(z,\bar{z})} \prod_{v\neq j} f_v, \ \forall j,z, \\[.2cm]
& \mu(\delta(r)) = \frac{\delta(r)}{\delta(r)+(1-\delta(r))\prod_{n=1}^N f_{n}}, \\[.2cm]
& \ell = m \sum_{n=1}^N \int q(s_n(z)) \gamma(z) dz.
\end{align*}
%
Efficient specialization of intermediate producers with productivity $z$ is given by:
\begin{align}
   \nonumber
      & \mathcal{D}(z;N) \bigg[A^\prime(\Sc(z))+\frac{\phi^\prime(\Sc(z))}{\phi(\Sc(z))}\bigg(A(\Sc(z);z)-\kappa(\varrho) - f(z)\mathbb{E}_{\max\{\tilde{z}\}|\tilde{z} \leq z}[A(\Sc(\tilde{z};\tilde{z}))-\kappa(\varrho)] \\ \nonumber
    & 
    +(N-1)\left(1-f(z)\right)\mathbb{\hat{E}}[A(\Sc(\tilde{z});\tilde{z})-\kappa(\varrho)]-\frac{1-\delta}{\delta+(1-\delta)f^N}f^N N \left(1-f(z)\right) \mathbb{\hat{E}}[A(\Sc(\tilde{z});\tilde{z})-\kappa(\varrho)]\bigg)\bigg] \\  
    & 
    = \frac{\psi}{1-N m \bar{q}} \  q^\prime(\Sc(z)). 
\end{align}
%
%
Evaluating the marginal welfare effect of specialization at the equilibrium specialization yields:
\begin{align} \nonumber
    \pd{\Wc}{s(z)}\bigg|_{s(z)=s^\star(z)} \propto & \ \underbrace{ f(z)\mathbb{E}_{\max\{\tilde{z}\}|\tilde{z} \leq z}[A(s^\star(\tilde{z};\tilde{z}))-\kappa(r^\star)]}_{\text{business stealing externality}} \underbrace{- \vphantom{f(z)} \ [x^\star(z)-\kappa(r^\star)]}_{\substack{\text{appropriability} \\ \text{externality}}} \\ \nonumber
    & \ \underbrace{- \ (N-1)\left( 1-f(z) \right)\mathbb{\hat{E}}[A(s^\star(\tilde{z});\tilde{z})-\kappa(r^\star)]}_{\text{network externality}} \\ \label{dwds_end_rob}
    & \ \underbrace{\vphantom{\int_{\underline z}^z}+ \  \frac{1-\delta(r^\star)}{\delta(r^\star)+(1-\delta(r^\star))f^N}f^N \left( 1-f(z) \right) N\mathbb{\hat{E}}[A(s^\star(\tilde{z});\tilde{z})-\kappa(r^\star)]}_{\text{search externality}}.
\end{align}
The only difference with the baseline dynamic model lies in the definitions of match surplus and surplus offered \textit{net} of robustness costs.
We summarize our findings in the next proposition.
\begin{appprop} [Efficiency of the Dynamic Economy with Endogenous Robustness]\label{prop:robustness_efficiency}
The equilibrium is constrained-efficient if and only if:
\begin{align*}
    \begin{cases}
        (N\mu(f,r^\star;N)-1)\mathbb{\hat{E}}[A(s^\star(z);z)-\kappa(r^\star)]+[x_0-\kappa(r^\star)]=0,  \\
       \tilde{\zeta}=(1-f^N)\mu(f,r^\star;N),
    \end{cases}
\end{align*}
where $\mu(f,r^\star;N) = \frac{\delta(r^\star)}{\delta(r^\star)+(1-\delta(r^\star))f^N}$ and $\tilde{\zeta} \equiv \frac{\mathbb{\hat{E}}[x^\star(z)-\kappa(r^\star)]}{\mathbb{\hat{E}}[A(s^\star(z);z)-\kappa(r^\star)]}$. 
The equilibrium can feature both under- and over-specialization and both under- and over-robustness.
\end{appprop}
\noindent
In general, the equilibrium features inefficient investment in both robustness and specialization. 

\subsection{General Quality Function}
\label{app:gen_quality}
So far, we have looked at an economy in which final producers operate a specific technology. In particular, we assumed that intermediate goods are (i) perfect substitutes in quality (intensive margin) and (ii) complements in quantity (extensive margin). 
While both assumptions matter for the results derived so far, only (ii) is crucial for our new network externality to arise. 
The intuitive reason is that, although goods are complements, an intermediate producer only cares about losing its own share of the surplus if production does not take place. As a consequence, a final producer does not have an instrument to induce bilaterally efficient specialization choices (in a way akin to the inefficiencies induced by limited liabilities). Property (ii) is at the heart of this result. If goods are not complementary at the extensive margin, then no network externality arises, and the economy would be statically constrained-efficient. Note that the same extensive-margin complementarity is underlying some of the results in \cite{Elliott2022_Fragility} and \cite{acemoglu2024macroeconomics}.

We now show that our assumption of perfect substitutability in quality actually induces fewer inefficiencies than a more general quality function would. In particular, suppose that the quality function of a final producer $i$ takes the following generic form:
\begin{align*}
    \mathcal{Q}_i = \chi(A_1 ,\hdots,A_N),
\end{align*}
where $\chi$ is a constant-return-to-scale aggregator, which is increasing and concave in each argument. Importantly, we impose $\frac{\partial \chi}{\partial A_j}>0, \frac{\partial^2 \chi}{\partial A_j^2}<0  \ \forall j$. Hence, we allow for arbitrary substitutability or complementarity in quality (intensive margin), while retaining complementarity in quantity (extensive margin). 

\paragraph{Stationary equilibrium.}
We analyze the equilibrium of the dynamic economy.
Intermediate producers maximize expected operating profits net of product design costs:
\begin{align*}
    V(s,x;z) = \max_{s,x}
    \ \mathcal{D}(s,x)\left(\tilde{A}(s;\textbf{s}_{-i};z)-x\right)-w q(s).
\end{align*}
where $  \mathcal{D}(s,x;N) = \frac{\theta\lambda\Pc(s,x;N)}{\delta}$ and $\tilde{A}(s;\textbf{s}_{-i}) \equiv \frac{\partial \chi}{\partial A_i}A_i$.
Notice that the only difference with respect to the baseline dynamic model is that match surplus potentially depends on the specialization vector of all the other input providers $\textbf{s}_{-i}$. For a generic CES aggregator with substitution parameter $\sigma \in [0,1]$, it holds that $\tilde{A}_i = A_i^\sigma \chi^{1-\sigma}$, which nests our baseline model for $\sigma = 1$.
%
%
Up to the generalized definition of $\tilde{A}$ instead of $A$, equilibrium specialization of intermediate producers with productivity $z$ producing input $i$ is the same as in the baseline dynamic model and given by:
\begin{align*} 
  \mathcal{D}(z;N)\bigg[\tilde{A}^\prime (s^\star(z),\boldsymbol{s}_{-i})+\frac{\phi^\prime(s^\star(z))}{\phi(s^\star(z))}(\tilde{A}(s^\star(z),\boldsymbol{s}_{-i};z)-x^\star(z))\bigg] =w q^\prime(s^\star(z)).   
\end{align*}

\paragraph{Social planner problem.}
The social planner problem reads:
\begin{align*}
    \mathcal{W} = \max_{ \substack{
s_j(z),\; j = 1,\dots,N \\
\quad z \in [\underline{z},\bar{z}]
}  } & \ m \sum_{n=1}^N \int D_n(z) \tilde{A}(s_n(z),\boldsymbol{s}_{-n};z) \ \gamma(z)dz + \psi \log(1-\ell) \\[.4cm]
\text{s.t.} \ & D_j(z) = \frac{1}{\delta} \frac{\mu}{m} \lambda \phi(s_j(z)) e^{-\lambda \hat{\phi}(z,\bar{z})} \prod_{v\neq j} f_v, \ \forall j,z,\\[.2cm]
& \mu = \frac{\delta}{\delta+(1-\delta)\prod_{n=1}^N f_{n}}, \\[.2cm]
& \ell = m \sum_{n=1}^N \int q(s_n(z)) \gamma(z) dz.
\end{align*}
Efficient specialization of intermediate producers with productivity $z$ producing input $j$ solves:
\begin{align*}
  & \mathcal{D}(z;N) \bigg[\tilde{A}^\prime(\Sc(z),\boldsymbol{\Sc}_{-j})+\frac{\phi^\prime(\Sc(z))}{\phi(\Sc(z))}\bigg(\tilde{A}(\Sc(z),\boldsymbol{\Sc}_{-j};z) - f(z)\mathbb{E}_{\max\{\tilde{z}\}|\tilde{z} \leq z}[\tilde{A}(\Sc(\tilde{z}),\boldsymbol{\Sc}_{-j};\tilde{z})]+  \\
    & 
    (N-1)\left(1-f(z)\right)\mathbb{\hat{E}}[\tilde{A}(\Sc(\tilde{z}),\boldsymbol{\Sc}_{-j};\tilde{z})]-\frac{1-\delta}{\delta+(1-\delta)f^N}f^N N \left(1-f(z)\right) \mathbb{\hat{E}}[A(\Sc(\tilde{z}),\boldsymbol{\Sc}_{-j};\tilde{z})]\bigg)\bigg] \\
    &
    = \frac{\psi}{1-N m \bar{q}} \  q^\prime(\Sc(z))-\frac{\theta}{\delta} \sum_{i \neq j}^N \prod_{v \neq i} f_i \  \mathbb{E}_{\max \{\tilde{z}\}}\left[\frac{\partial \tilde{A}(\Sc_{i},\boldsymbol{\Sc}_{-i};\tilde{z})}{\partial  \Sc_j(\tilde{z})}\right]. 
\end{align*}
Efficient specialization with a general quality function coincides with that of the baseline dynamic model with steady-state behavior (\ref{eff_s_dyn_ss}) up to the presence of the final term, which nets out specialization costs by the marginal change in the value added from complementary inputs.
Hence, we observe that a further inefficiency arises if inputs are not perfect substitutes at the intensive margin. The reason is that intermediate producers do not internalize the effect of their specialization on the match surplus generated by other input providers:
\begin{align} \nonumber
    \pd{\Wc}{s(z)}\bigg|_{s(z)=s^\star(z)} \propto & \underbrace{f(z)\mathbb{E}_{\max\{\tilde{z}\}|\tilde{z}\leq z}\left[\tilde{A}(\boldsymbol{s}^\star(\tilde z);\tilde{z})\right]}_{\text{business-stealing externality}} \underbrace{\vphantom{\mathbb{E}_{\max\{\tilde{z}\}|\tilde{z}\leq z}\left[\tilde{A}(\boldsymbol{s}^\star(\tilde z);\tilde{z})\right]}- \ x^\star(z)}_{\substack{\text{appropriability} \\ \text{externality}}}  \underbrace{\vphantom{\mathbb{E}_{\max\{\tilde{z}\}|\tilde{z}\leq z}\left[\tilde{A}(\boldsymbol{s}^\star(\tilde z);\tilde{z})\right]}- \ (N-1)\left( 1-f(z) \right)\mathbb{\hat{E}}[\tilde{A}(\boldsymbol{s}^\star(\tilde{z});\tilde{z})]}_{\text{network externality}}\\ \nonumber
    &  \underbrace{\vphantom{\mathbb{E}\left[\frac{\partial \tilde{\boldsymbol{A}}(\boldsymbol{s}^\star_{-j};z)}{\partial  s_j(z)}\right]}+ \ N \frac{1-\delta}{\delta+(1-\delta)f^N}f^N \left( 1-f(z) \right)\mathbb{\hat{E}}[\tilde{A}(\boldsymbol{s}^\star(\tilde{z});\tilde{z})]}_{\text{search externality}}
    \\ \label{dwds_generalized}
    &  \underbrace{\vphantom{\int_{\underline z}^z}-\ \frac{(N-1)e^{\lambda \hat{\phi}(z,\bar{z})}}{\lambda \phi^\prime(s^\star(z))}\mathbb{E}_{\max\{\tilde{z}\}}\left[\frac{\partial \tilde{\boldsymbol{A}}(\boldsymbol{s}^\star_{-j};\tilde{z})}{\partial  s_j(\tilde{z})}\right] }_{\text{production externality}}.
\end{align}
The new production externality pushes towards equilibrium under-specialization. Hence, 
under the more general aggregator $\chi$, the economy features a further inefficiency: individual specialization influences not only the trading probability of other input providers but
also their match surplus. We characterize the efficiency properties of the economy with general quality function in the next proposition.
\begin{appprop}  [Efficiency of the Dynamic Economy with General Quality Function] \label{prop:eff_dynam_nas}
   The equilibrium is constrained-efficient if and only if:
   $$\left(\mu(f;N) N-1 \right)\mathbb{\hat{E}}[\tilde{A}(s^\star(z);z)]+x_0-\frac{(N-1)}{\lambda \phi^\prime(s^\star(z)) (1-f)} \mathbb{E}_{\max \{\tilde{z}\}}\left[\frac{\partial \tilde{\boldsymbol{A}}(s^\star_{-j};\tilde{z})}{\partial  s_j(\tilde{z})}\right]=0, \ \forall z \in [\underline{z},\bar{z}].$$
\end{appprop}
\noindent
To sum up, when inputs are complements at the extensive margin, equilibrium over-specialization is more likely the more the production process is complex (high $N$) and the more substitutable input quality is (the closer $\sigma$ is to $1$ with CES quality function). 

\subsection{Free Entry} \label{sec:free_entry}
Until now, we have assumed that the mass of intermediate producers $m$ is fixed, that is, a perfectly inelastic entry margin. We now relax this assumption by positing that $m$ is pinned down by a no-arbitrage condition between operating an intermediate or a final firm.
\paragraph{Stationary equilibrium.}
The equilibrium is the same as in the baseline static model, up to the addition of the no-arbitrage condition governing the equilibrium mass of intermediate producers.
The no-arbitrage condition reads:\footnote{In the formulation of equation (\ref{no_arbitrage_condition}) we take a stand on the organization of intermediate production being structured along single firms producing all inputs through distinct product lines.}
\begin{align} \label{no_arbitrage_condition}
\prod_{n=1}^N f_n \sum_{n=1}^N \mathbb{\hat{E}}[x_n(z)] = \theta \prod_{n=1}^N f_n \sum_{n=1}^N \mathbb{\hat{E}}[A_n(s_n(z);z)-x_n(z)]-w \sum_{n=1}^N \bar{q}_n,
\end{align}
where $\bar{q}_n \equiv \int q(s_n(z)) \gamma(z) dz$.
The no-arbitrage condition (\ref{no_arbitrage_condition}) and the labor market clearing condition jointly pin down the equilibrium mass of intermediate producers as the result of a second-order equation:
\begin{align} \label{mass_firms}
 m =& \frac{\left[\prod_{n=1}^N f_n \left(\sum_{n=1}^N \mathbb{\hat{E}}[x_n(z)]+ \sum_{n=1}^N \bar{q}_n \sum_{n=1}^N\mathbb{\hat{E}}[A(s_n(z);z)-x_n(z)]\right)+\psi \sum_{n=1}^N \bar{q}_n \right] \pm \sqrt{\Delta}}{2 \prod_{n=1}^N f_n \sum_{n=1}^N \mathbb{\hat{E}}[x_n(z)] \sum_{n=1}^N \bar{q}_n}
 \\ \nonumber
\Delta =& \left[\prod_{n=1}^N f_n \left(\sum_{n=1}^N \mathbb{\hat{E}}[x_n(z)]+ \sum_{n=1}^N \bar{q}_n \sum_{n=1}^N\mathbb{\hat{E}}[A(s_n(z);z)-x_n(z)]\right)+\psi \sum_{n=1}^N \bar{q}_n \right]^2-4\left(\prod_{n=1}^N f_n\right)^2 \\ \nonumber
& \sum_{n=1}^N \bar{q}_n \sum_{n=1}^N\mathbb{\hat{E}}[x_n(z)] \sum_{n=1}^N\mathbb{\hat{E}}[A(s(z);z)-x_n(z)].
\end{align}
Notice that aggregate output is independent of the mass of intermediate producers.
Since a higher mass of intermediate producers is associated with lower leisure, the two solutions can be Pareto-ranked according to social welfare. Therefore, we assume that the mass of firms equals the lowest solution to (\ref{mass_firms}).

\paragraph{Social planner problem.} We consider the problem of a social planner who is constrained by the free entry condition (\ref{no_arbitrage_condition}). Since the free entry condition pins down $m$ as a function of the surplus sharing rule, we formulate the social planner problem in terms of optimal choice of the surplus offered to final producers, $x(z)$, taking the equilibrium specialization function as given.    
By already assuming symmetry across inputs, the social planner problem reads:
\begin{align} \nonumber
\mathcal{W} = & \max_{ \substack{
x_j(z) \leq A(s(x_j(z));z), \\
j = 1,\dots,N, \ z \in [\underline{z},\bar{z}]
}  }\ \sum_{n=1}^N \prod_{v\neq j} f_v \int \lambda \phi(s(x_n(z))) e^{-\lambda \hat{\phi}(z,\bar{z})} A(s(x_n(z));z) \ \gamma(z)dz + \psi \log(1-\ell) \\[.4cm]
\nonumber
\text{s.t.} \ & \ell = m \sum_{n=1}^N \bar{q}_n, \\[.2cm] 
%
\label{m_constraint} 
& m = \frac{\left[\prod_{n=1}^N f_n\left(\sum_{n=1}^N \mathbb{\hat{E}}[x_n(z)]+ \sum_{n=1}^N \bar{q}_n \sum_{n=1}^N\mathbb{\hat{E}}[A(s(x_n(z));z)-x_n(z)]\right)+\psi \sum_{n=1}^N \bar{q}_n \right] - \sqrt{\Delta}}{2 \prod_{n=1}^N f_n \sum_{n=1}^N \mathbb{\hat{E}}[x_n(z)] \sum_{n=1}^N \bar{q}_n}, \\[.2cm]
\nonumber
  & \frac{1}{m} \lambda \Pc(s(x_j(z)),x_j(z))\bigg[A^\prime(s(x_j(z)))+\frac{\phi^\prime(s(x_j(z)))}{\phi(s(x_j(z)))}
   (A(s(x_j(z));z)-x_j(z))\bigg] = \dots \\
  \label{s_constraint}
   & \frac{\psi}{1-m \sum_{n=1}^N\bar{q}_n} q^\prime(s(x_j(z))) \ \forall j,z. 
\end{align}
The constrained-efficient specialization can be characterized in three steps. First, we compute the full derivative of specialization with respect to the surplus offered to final producers via the implicit function theorem from (\ref{s_constraint}). Second, we compute the derivative of the mass of intermediate producers with respect to $s_j(z)$ and $x_j(z)$ from (\ref{m_constraint}). 
Finally, we maximize social welfare with respect to $x_j(z)$. Importantly, social welfare depends directly on the mass of intermediate producers $m$ only through the hours worked $\ell$. Consequently, the structure of the efficient surplus offer and specialization choice is the same as in the baseline static model (\ref{eff_s}), except for an additional term that internalizes the marginal effect of changes in $m$ on hours worked:
%
%
\begin{align}
    \nonumber
      & \theta \lambda \Pc(z;N) \bigg[A^\prime(\Sc(z))+\frac{\phi^\prime(\Sc(z))}{\phi(\Sc(z))}\bigg(A(\Sc(z);z) - f(z)\mathbb{E}_{\max\{\tilde{z}\}|\tilde{z} \leq z}[A(\Sc(\tilde{z};\tilde{z}))]  \\ \nonumber
    & 
    + (N-1)\left(1-f(z)\right)\mathbb{\hat{E}}[A(\Sc(\tilde{z});\tilde{z})]\bigg)\bigg]
    = \frac{\psi}{1-N m \bar{q}} \  q^\prime(\Sc(z)) \\ \label{eff_x_fec}
    & + \underbrace{\frac{\psi}{1-N m \bar q} \left(\frac{\partial m(\Sc(z),x(z))}{\partial \Sc(z)}+\frac{\partial m(\Sc(z),x(z))/\partial x(z)}{\partial \Sc(z)/\partial x(z)}\right)\frac{N \bar{q}}{m \gamma(z)}}_{\text{entry externality}}.
\end{align}
The first two lines equal the efficient specialization condition of the baseline static model.\footnote{Since $\left(\frac{\partial s(x(z))}{\partial x(z)}\right)^{-1}=\frac{\partial x(s(z))}{\partial s(z)}$ by the inverse-function theorem, maximizing with respect to $x(z)$ by taking $s(x(z))$ as given is indeed isomorphic to maximizing with respect to $s(z)$ by taking $x(z)$ as given. 
} The third line highlights the new source of inefficiency related to the external effect of individual specialization on the equilibrium mass of intermediate producers.
Comparing equations (\ref{eff_x_fec}) and (\ref{eff_s}), we observe that the former nests the latter for fixed $m$. The additional inefficiency term brought about by free entry crucially depends on the sign of the total derivative of the mass of intermediate producers with respect to the surplus offered: if it is positive, the entry externality pushes towards equilibrium over-specialization; if negative, towards under-specialization. Either sign is possible in equilibrium. 
In any case, Proposition \ref{prop:eff_dynam_fec} shows that equilibrium specialization is inefficient whenever it affects entry decisions.
\begin{appprop} [Efficiency of the Static Economy: Free Entry] \label{prop:eff_dynam_fec}
   The static economy with free entry is always inefficient. 
\end{appprop}

\section{Proofs}
\label{app:proofs_resilience}

\begin{proof}[Proof of Lemma \ref{lem:res_surplus}]
Substituting the quality function (\ref{def:quality_fnct}) into the profits of a final producer $i$ (\ref{final_profits}) and using the equilibrium condition $P_i=\mathcal{Q}_i$ yields:
$$\pi_i = \sum_{j=1}^N \left(A_j-p_j\right) \mathbbm{1}\{Y_i=1\} = \sum_{j=1}^N x_j \mathbbm{1}\{Y_i=1\},$$
where the second equality follows from substituting for the identity  $p_j = A_j - x_j$.

 Therefore, the profits of an active final producer equal $\sum_{j=1}^N x_j$. The reservation surplus on input $j$ makes the final producer indifferent between sourcing the intermediate input in the market and not producing at all in expectation. Formally, $x_{0,j} + \sum_{n \neq j} \mathbb{\hat{E}}[x_n] = 0 \implies x_{0,j} = -  \sum_{n \neq j} \mathbb{\hat{E}}[x_n]$. Hence, in a symmetric stationary equilibrium, $x_{0} = -  (N-1) \mathbb{\hat{E}}[x^\star(z)]$.
\end{proof}

\begin{proof}[Proof of Lemma \ref{lem:offered_surplus}]
See the derivation in Appendix \ref{app:x_interpretation}.
\end{proof}

\begin{proof}[Proof of Proposition \ref{prop:externalities}]
The result follows directly from rearranging the marginal welfare effect of equilibrium specialization derived in Theorem \ref{thm:over-specialization}.
\end{proof}

\begin{proof}[Proof of Proposition \ref{prop:offsetting}]
The result follows directly from rearranging the marginal welfare effect of equilibrium specialization derived in Theorem \ref{thm:over-specialization}.
\end{proof}

\begin{proof}[Proof of Theorem \ref{thm:over-specialization}]
To establish the claim, we evaluate the marginal welfare effect of specialization at the equilibrium solution, as implicitly defined by equation (\ref{eq_s}). By definition, efficient specialization, as implicitly defined by equation (\ref{eff_s}), equalizes the marginal welfare effect of specialization to zero, i.e., $\pd{\Wc}{s(z)}\big|_{s(z)=\mathcal{S}(z)}=0$. 
\begin{align*}
   \pd{\Wc}{s(z)}\bigg|_{s(z)=s^\star(z)}= \  & \theta \lambda \Pc(z;N) \bigg[A^\prime(s^\star(z))+\frac{\phi^\prime(s^\star(z))}{\phi(s^\star(z))}\bigg(A(s^\star(z);z)  - f(z)\mathbb{E}_{\max\{\tilde{z}\}|\tilde{z} \leq z}[A(s^\star(\tilde{z});\tilde{z})]  \\ \nonumber
    & 
    + (N-1)\left(1-f(z)\right)\mathbb{\hat{E}}[A(s^\star(\tilde{z});\tilde{z})] \bigg)\bigg] 
    - \frac{\psi}{1-N m \bar{q}} \  q^\prime(s^\star(\tilde{z})) \\[.2cm]
   = \  &  \theta \lambda \Pc(z;N) \bigg[A^\prime(s^\star(z))+\frac{\phi^\prime(s^\star(z))}{\phi(s^\star(z))}\bigg(A(s^\star(z);z)  - f(z)\mathbb{E}_{\max\{\tilde{z}\}|\tilde{z} \leq z}[A(s^\star(\tilde{z});\tilde{z})]  \\ \nonumber
    & 
    + (N-1)\left(1-f(z)\right)\mathbb{\hat{E}}[A(s^\star(\tilde{z});\tilde{z})] \bigg)\bigg] 
    - w  q^\prime(s^\star(z)) \\[.2cm]
    = \  & \theta \lambda \Pc(z;N)\frac{\phi^\prime(s^\star(z))}{\phi(s^\star(z))} \bigg[x^\star(z) - f(z)\mathbb{E}_{\max\{\tilde{z}\}|\tilde{z} \leq z}[A(s^\star(\tilde{z});\tilde{z})] \\ \nonumber
    &  + (N-1)\left(1-f(z)\right)\mathbb{\hat{E}}[A(s^\star(\tilde{z});\tilde{z})] \bigg] \\[.2cm]
    = \  & \theta \lambda \Pc(z;N)\frac{\phi^\prime(s^\star(z))}{\phi(s^\star(z))} (N-1)\left(1-f(z)\right)\left(\mathbb{\hat{E}}[A(s^\star(\tilde{z});\tilde{z})]-\mathbb{\hat{E}}[x^\star(\tilde{z})]\right) \\[.2cm] \nonumber
    < \ & 0, \quad \forall N>1,
\end{align*}
where the second equality follows from the labor market clearing condition, $w=\frac{\psi}{1-Nm\bar q}$, the third equality follows from the equilibrium specialization condition (\ref{eq_s}) and the fourth equality from the equilibrium surplus offered to final producers (\ref{eq_x_solved}).

We establish the first statement by noting that the marginal welfare effect of equilibrium specialization is zero if and only if $N=1$.
Whenever $N>1$, the marginal welfare effect of equilibrium specialization is negative since $\phi^\prime(s)<0$.  
Hence, when the production process is complex, the planner would choose a lower level of specialization: the equilibrium features \textit{over-specialization}.

\end{proof}

\begin{proof}[Proof of Remark \ref{remark:bargaining}]
See Appendix \ref{app:bargaining}.
\end{proof}

\begin{proof}[Proof of Remark \ref{remark:MP_CES}]
The result follows as a corollary of Proposition \ref{prop:offsetting}.
\end{proof}

\begin{proof}[Proof of Remark \ref{rem:dir_search}]
See Appendix \ref{app:directed_search}.
\end{proof}

\begin{proof}[Proof of Remark \ref{remark:noncontingent}]
Following the same steps as in the proof of Theorem \ref{thm:over-specialization}, we obtain:
\begin{align*}
   \pd{\Wc}{s(z)}\bigg|_{s(z)=s^\star_{NC}(z)}= \  & \theta \lambda \Pc(z;N) \bigg[A^\prime(s^\star_{NC}(z))+\frac{\phi^\prime(s^\star_{NC}(z))}{\phi(s^\star_{NC}(z))}\bigg(A(s^\star_{NC}(z);z)  - f(z)\mathbb{E}_{\max\{\tilde{z}\}|\tilde{z} \leq z}[A(s^\star_{NC}(\tilde{z});\tilde{z})]  \\ \nonumber
    & 
    + (N-1)\left(1-f(z)\right)\mathbb{\hat{E}}[A(s^\star_{NC}(\tilde{z});\tilde{z})] \bigg)\bigg] 
    - \frac{\psi}{1-N m \bar{q}} \  q^\prime(s^\star_{NC}(\tilde{z})) \\[.2cm]
   = \  &  \theta \lambda \Pc(z;N) \bigg[A^\prime(s^\star_{NC}(z))+\frac{\phi^\prime(s^\star_{NC}(z))}{\phi(s^\star_{NC}(z))}\bigg(A(s^\star_{NC}(z);z)  - f(z)\mathbb{E}_{\max\{\tilde{z}\}|\tilde{z} \leq z}[A(s^\star_{NC}(\tilde{z});\tilde{z})]  \\ \nonumber
    & 
    + (N-1)\left(1-f(z)\right)\mathbb{\hat{E}}[A(s^\star_{NC}(\tilde{z});\tilde{z})] \bigg)\bigg] 
    - w  q^\prime(s^\star_{NC}(z)) \\[.2cm]
    = \  & \theta \lambda \Pc(z;N)\left(1-f^{-(N-1)}\right)\bigg[A^\prime(s^\star_{NC}(z))+\frac{\phi^\prime(s^\star_{NC}(z))}{\phi(s^\star_{NC}(z))}A(s^\star_{NC}(z);z)\bigg]
     \\
    & 
    + \theta \lambda \Pc(z;N)\frac{\phi^\prime(s^\star_{NC}(z))}{\phi(s^\star_{NC}(z))} \bigg[f^{-(N-1)} x^\star(z) - f(z)\mathbb{E}_{\max\{\tilde{z}\}|\tilde{z} \leq z}[A(s^\star_{NC}(\tilde{z});\tilde{z})]  \\
    & + (N-1)\left(1-f(z)\right)\mathbb{\hat{E}}[A(s^\star_{NC}(\tilde{z});\tilde{z})] \bigg] \\[.2cm]
    = \  & \theta \lambda \Pc(z;N)\left(1-f^{-(N-1)}\right)\bigg[A^\prime(s^\star_{NC}(z))+\frac{\phi^\prime(s^\star_{NC}(z))}{\phi(s^\star_{NC}(z))} \bigg(A(s^\star_{NC}(z);z)-f(z) \\
    & \mathbb{E}_{\max\{\tilde{z}\}|\tilde{z} \leq z}[A(s^\star_{NC}(\tilde{z});\tilde{z})] \bigg)\bigg]
    + \theta \lambda \Pc(z;N)\frac{\phi^\prime(s^\star_{NC}(z))}{\phi(s^\star_{NC}(z))} \left(1-f(z)\right)  \\
    & \bigg( (N-1)\mathbb{\hat{E}}[A(s^\star_{NC}(\tilde{z});\tilde{z})]-f^{-(N-1)}x_0\bigg) \\[.2cm]
    = \ & \theta \lambda \phi(s^\star_{NC}(z)) (1-f(z)) \Bigg[- \bigg[A^\prime(s^\star_{NC}(z))+\frac{\phi^\prime(s^\star_{NC}(z))}{\phi(s^\star_{NC}(z))} A(s^\star_{NC}(z);z) \bigg](1-f^{N-1}) \\
    & + \frac{\phi^\prime(s^\star_{NC}(z))}{\phi(s^\star_{NC}(z))} \bigg(f(z)(1-f^{N-1})\mathbb{E}_{\max\{\tilde{z}\}|\tilde{z} \leq z}[A(s^\star_{NC}(\tilde{z});\tilde{z})]-  \left(1-f(z)\right) \\
    & \bigg( f^{N-1}(N-1)\mathbb{\hat{E}}[A(s^\star_{NC}(\tilde{z});\tilde{z})]-x_0\bigg)\bigg)\Bigg] \\[.2cm] \nonumber
    < \ & 0, \quad \forall N >1.
\end{align*}
The marginal welfare effect of equilibrium specialization can be decomposed into two wedges. The first wedge arises from the non-contingency of contracts (\textit{contracting externality}). The second wedge arises from the gap between network externality and appropriability externality across markets. Both wedges push the equilibrium towards over-specialization.
\end{proof}

\begin{proof}[Proof of Remark \ref{remark:notrade_payments}]
See Appendix \ref{app:non-cont}.
\end{proof}

\begin{proof}[Proof of Remark \ref{remark:vertical_integration}]
The result follows as a corollary of Theorem \ref{thm:over-specialization}.
\end{proof}

\begin{proof}[Proof of Remark \ref{remark:in-house}]
See Appendix \ref{app:in-house}.
\end{proof}

\begin{proof}[Proof of Lemma \ref{lem:res_surplus_dyn}]
See the derivation in Appendix \ref{app:dyn}.
\end{proof}

\begin{proof}[Proof of Proposition \ref{prop:externalities_dyn}]
The result follows directly from rearranging the marginal welfare effect of equilibrium specialization derived in Theorem \ref{thm:eff_dynam}.
\end{proof}

\begin{proof}[Proof of Theorem \ref{thm:eff_dynam}]
To establish the claim, we follow the same strategy as in Theorem \ref{thm:over-specialization}. Evaluating the marginal welfare effect of specialization at the equilibrium solution yields:
\begin{align*}
   \pd{\Wc}{s(z)}\bigg|_{s(z)=s^\star(z)}= \  & \theta \lambda \mathcal{D}(z;N) \bigg[A^\prime(s^\star(z))+\delta[1+\beta(1-\delta)]\frac{\phi^\prime(s^\star(z))}{\phi(s^\star(z))}\bigg(A(s^\star(z);z)    \\ \nonumber
    & 
     - f(z)\mathbb{E}_{\max\{\tilde{z}\}|\tilde{z} \leq z}[A(s^\star(\tilde{z});\tilde{z})] +(N-1) \left(1-f(z)\right)\mathbb{\hat{E}}[A(s^\star(z);z)]  \\ \nonumber
     & 
    -\frac{\beta(1-\delta)}{1+\beta(1-\delta)}f^N N \left(1-f(z)\right)\mathbb{\hat{E}}[A(s^\star(z);z)]\bigg)\bigg] - \frac{\psi}{1-N m \bar{q}} \  q^\prime(s^\star(z)) \\[.2cm]
   = \  &  \theta \lambda \Dc(z;N) \bigg[A^\prime(s^\star(z))+\delta[1+\beta(1-\delta)]\frac{\phi^\prime(s^\star(z))}{\phi(s^\star(z))}\bigg(A(s^\star(z);z)    \\ \nonumber
    & 
     - f(z)\mathbb{E}_{\max\{\tilde{z}\}|\tilde{z} \leq z}[A(s^\star(\tilde{z});\tilde{z})] +(N-1) \left(1-f(z)\right)\mathbb{\hat{E}}[A(s^\star(z);z)]  \\ \nonumber
    & -\frac{\beta(1-\delta)}{1+\beta(1-\delta)}f^N N \left(1-f(z)\right)\mathbb{\hat{E}}[A(s^\star(z);z)] \bigg)\bigg] - w  q^\prime(s^\star(z)) \\[.2cm]
    = \  & \theta \lambda \Dc(z;N)\delta[1+\beta(1-\delta)]\frac{\phi^\prime(s^\star(z))}{\phi(s^\star(z))} \bigg[x^\star(z) - f(z)\mathbb{E}_{\max\{\tilde{z}\}|\tilde{z} \leq z}[A(s^\star(\tilde{z});\tilde{z})]   \\
    &  + (N-1)\left(1-f(z)\right)\mathbb{\hat{E}}[A(s^\star(z);z)] -\frac{\beta(1-\delta)}{1+\beta(1-\delta)}f^N N \left(1-f(z)\right)\mathbb{\hat{E}}[A(s^\star(z);z)]\bigg] \\[.2cm]
    = \  & \theta \lambda \Dc(z;N)\delta[1+\beta(1-\delta)]\frac{\phi^\prime(s^\star(z))}{\phi(s^\star(z))} \left(1-f(z)\right)\left[\left(\mathcal{M}(N)N-1\right)\mathbb{\hat{E}}[A(s^\star(z);z)]+x_0\right] \\[.2cm] \nonumber
    < \ & 0, \quad \forall N >1,
\end{align*}
where $\mathcal{M}(N) \equiv 1-\frac{\beta(1-\delta)}{1+\beta(1-\delta)}f^N$. The equilibrium is constrained-efficient if and only if $(N\mathcal{M}(N)-1)\mathbb{\hat{E}}[A(s^\star(z);z)]+x_0=0$. Substituting for the optimal reservation surplus (\ref{res_surplus_dyn}), the constrained efficiency condition reads:
\begin{align*}
    (N\mathcal{M}(N)-1)\mathbb{\hat{E}}[A(s^\star(z);z)]-\left(\alpha(N)N-1\right)\mathbb{\hat{E}}[x^\star(z)]=0.
\end{align*}
If the left-hand side is positive (negative), the equilibrium exhibits over-(under-)specialization.
Since $\mathbb{\hat{E}}[x^\star(z)] \in (0,\mathbb{\hat{E}}[A(s^\star(z);z)])$, we can characterize the sign of the inefficiency analytically.

First, suppose that $\alpha(N)N-1>0$. Then, the equilibrium exhibits over-specialization if: 
%
%
\begin{align*}
& (N\mathcal{M}(N)-1) - (\alpha(N) N - 1) > 0 \\[2mm]
%
%
\iff & \; \left[1 - \frac{\beta(1-\delta)}{1+\beta(1-\delta)} f^N \right]
- \left[ 1 - \frac{\beta(1-\delta) f^N}{1-\beta(1-\delta)(1-f^N)} \right] > 0 \\[1mm]
\iff & \; \beta (1-\delta) f^N \left[ \frac{1}{1-\beta(1-\delta)(1-f^N)} - \frac{1}{1+\beta(1-\delta)} \right] > 0 \\[1mm]
\iff & \; 1+\beta(1-\delta) > 1-\beta(1-\delta)(1-f^N) \\[1mm]
\iff & \; \beta(1-\delta)[1+(1-f^N)]>0,
\end{align*}
which is always the case for $\beta \in (0,1), \delta \in (0,1), f \in (0,1),$ and $N \geq 1$.

Second, suppose that $\alpha(N)N-1<0$. Then, the equilibrium exhibits over-specialization if: 
\begin{align*}
    N\mathcal{M}(N)-1>0 \iff N>\frac{1}{\mathcal{M}(N)} \impliedby N>1,
\end{align*}
Notice that $\frac{1}{\mathcal{M}(N)}=\frac{1+\beta(1-\delta)}{1+\beta(1-\delta)(1-f^N)}$. Since $\beta \in (0,1), \delta \in (0,1),$ and $f \in (0,1)$, it follows that $\frac{1}{\mathcal{M}(N)} \in (1,1+\beta(1-\delta))$, where $1+\beta(1-\delta)<2$.

Overall, if $N>1$, the dynamic equilibrium always displays over-specialization.

\end{proof}

\begin{proof}[Proof of Lemma \ref{lem:lem:res_surplus_link}]
See the derivation in Appendix \ref{app:links}.
\end{proof}

\begin{proof}[Proof of Theorem \ref{prop:eff_dynam_links}]
    To establish the claim, we follow the same strategy as in Theorem \ref{thm:over-specialization}. Evaluating the marginal welfare effect of specialization at the equilibrium solution yields:
\begin{align*}
   \pd{\Wc}{s(z)}\bigg|_{s(z)=s^\star(z)}= \  & \theta \lambda \Dc(z;N) \bigg[A^\prime(s^\star(z))+[1-(1-\delta)\rho(f;N-1)][1+\beta(1-\delta)\rho(f;N-1)]\frac{\phi^\prime(s^\star(z))}{\phi(s^\star(z))}  \\ \nonumber
    & 
     \bigg(A(s^\star(z);z)- f(z)\mathbb{E}_{\max\{\tilde{z}\}|\tilde{z} \leq z}[A(s^\star(\tilde{z});\tilde{z})]  + \chi_1(f;N) (N-1)\left(1-f(z)\right)\mathbb{\hat{E}}[A(s^\star(z);z)]   \\
    &
    -\chi_2(f;N)\frac{\beta(1-\delta)}{1+\beta(1-\delta)\rho(f;N-1)}f^N N\left(1-f(z)\right)\mathbb{\hat{E}}[A(s^\star(z);z)]\bigg)\bigg]  \\
    & - \frac{\psi}{1-N m \bar{q}} \  q^\prime(s^\star(z)) \\[.2cm]
   = \  &  \theta \lambda \Dc(z;N) \bigg[A^\prime(s^\star(z))+[1-(1-\delta)\rho(f;N-1)][1+\beta(1-\delta)\rho(f;N-1)]\frac{\phi^\prime(s^\star(z))}{\phi(s^\star(z))}  \\ \nonumber
    & 
     \bigg(A(s^\star(z);z)- f(z)\mathbb{E}_{\max\{\tilde{z}\}|\tilde{z} \leq z}[A(s^\star(\tilde{z});\tilde{z})] + \chi_1(f;N)(N-1)\left(1-f(z)\right)\mathbb{\hat{E}}[A(s^\star(z);z)]  \\
    &  -\chi_2(f;N)\frac{\beta(1-\delta)}{1+\beta(1-\delta)\rho(f;N-1)}f^N N  \left(1-f(z)\right)\mathbb{\hat{E}}[A(s^\star(z);z)] \bigg)\bigg] - w  q^\prime(s^\star(z)) \\[.2cm]
    = \  & \theta \lambda \Dc(z;N)[1-(1-\delta)\rho(f;N-1)][1+\beta(1-\delta)\rho(f;N-1)]\frac{\phi^\prime(s^\star(z))}{\phi(s^\star(z))} \bigg[x^\star(z)  \\
    &  - f(z)\mathbb{E}_{\max\{\tilde{z}\}|\tilde{z} \leq z}[A(s^\star(\tilde{z});\tilde{z})] +  \chi_1(f;N)(N-1)\left(1-f(z)\right)\mathbb{\hat{E}}[A(s^\star(z);z)] \\
    & -\chi_2(f;N)\frac{\beta(1-\delta)}{1+\beta(1-\delta)}f^N N  \left(1-f(z)\right)\mathbb{\hat{E}}[A(s^\star(z);z)]\bigg]\\[.2cm]
    = \  & \theta \lambda \Dc(z;N)[1-(1-\delta)\rho(f;N-1)][1+\beta(1-\delta)\rho(f;N-1)]\frac{\phi^\prime(s^\star(z))}{\phi(s^\star(z))} \left(1-f(z)\right) \\
    & \bigg[\chi_1(f;N)\left(\widetilde{\mathcal{M}}(N)N-1 \right) \mathbb{\hat{E}}[A(s^\star(z);z)] +x_0 \bigg] \\[.2cm] \nonumber
    < \  & 0, \quad \forall N > \underline{N},
\end{align*}
where $\widetilde{\mathcal{M}}(N) \equiv 1-\frac{\beta(1-\delta)}{1+\beta(1-\delta)}f^N \frac{\chi_2(f;N)}{\chi_1(f;N)}$. The equilibrium is constrained-efficient if and only if $\chi_1(f;N)(N\widetilde{\mathcal{M}}(N)-1)\mathbb{\hat{E}}[A(s^\star(z);z)]+x_0=0$. Substituting for the equilibrium reservation surplus (\ref{res_surplus_links}), the constrained efficiency condition reads:
\begin{align*}
    \chi_1(f;N)(N\widetilde{\mathcal{M}}(N)-1)\mathbb{\hat{E}}[A(s^\star(z);z)]-\left(\tilde{\alpha}(N)N-1\right)\mathbb{\hat{E}}[x^\star(z)]=0.
\end{align*}
If the left-hand side is positive (negative), the equilibrium exhibits over-(under-)specialization.
Notice that $0<\mathbb{\hat{E}}[x^\star(z)] = f\frac{\mathbb{\hat{E}} \left[f(z)\mathbb{E}_{\max{\tilde{z}}|\tilde{z}<z} [A(s^\star(\tilde{z});\tilde{z})]\right]}{f-(1-f)\ln{1-f}\left(\tilde{\alpha}(N)N-1\right)} < \frac{f^2\mathbb{\hat{E}} \left[A(s^\star(\tilde{z});\tilde{z})\right]}{{f-(1-f)\ln{1-f}\left(\tilde{\alpha}(N)N-1\right)}}.$ Hence, we can characterize the sign of the inefficiency analytically.

First, suppose that $N>\frac{1}{\tilde{\alpha}(N)}$.
Then, the equilibrium exhibits over-specialization if: 
\begin{align*}
    &\chi_1(f;N)(N\widetilde{\mathcal{M}}(N)-1)-f^2\frac{\tilde{\alpha}(N)N-1}{{f-(1-f)\ln{1-f}\left(\tilde{\alpha}(N)N-1\right)}}>0 \\
    \iff & \chi_1(f;N)(N\widetilde{\mathcal{M}}(N)-1)-\frac{f^2}{f \frac{1-\beta(\rho(N)-f^N)}{1-\beta \left[(\rho(N)-f^N)+(1-\beta)\left(f^N-\delta f \rho(N-1)\right) \right]}-(1-f)\ln{1-f}}>0.
\end{align*}
Let $F(f;\delta,N,\beta) \equiv \chi_1(f;N)(N\widetilde{\mathcal{M}}(N)-1)-\frac{f^2}{f \frac{1-\beta(\rho(N)-f^N)}{1-\beta \left[(\rho(N)-f^N)+(1-\beta)\left(f^N-\delta f \rho(N-1)\right) \right]}-(1-f)\ln{1-f}}$. 
By Lemma~\ref{lemma:ismonotonic}, $F(f;\delta,N,\beta)$ is always positive as $N$ grows large. 
Hence, there exists a finite $\underline N > 1$ such that
for all $N > \underline N$, the equilibrium exhibits over-specialization.

Second, suppose $N<\frac{1}{\tilde{\alpha}(N)}$. Since $\chi_1(f;N)>0$, the equilibrium exhibits over-specialization if: 
\begin{align*}
    N\widetilde{\mathcal{M}}(N)-1>0 \quad \forall N>1,
\end{align*}
which is always the case by Lemma \ref{lemma:ntwrk>search}.

 Overall, if $N$ is large enough, the dynamic equilibrium with link destruction displays over-specialization.
\end{proof}

\begin{proof}[Proof of Remark \ref{rem:inv_mult}]
See Appendix \ref{app:inv_mult}
\end{proof}

\begin{proof}[Proof of Remark \ref{rem:weakest}]
From equation (\ref{eq_s_links}), optimal specialization is decreasing in the disruption probability $\delta_j$. It follows that intermediate producers more prone to disruption will specialize less their products. Depending on the sensitivity of the average compatibility probability to specialization, the derivative $\frac{\partial}{\partial \delta_j} \left(\delta_j [1-f(\bar{\phi}_j)]\right)$ may be positive or negative. 
\end{proof}

\begin{proof}[Proof of Proposition \ref{prop:norm_static}]
Suppose the Social Planner pays a lump-sum subsidy $T$ to active final producers. Then, the reservation surplus (\ref{eq:res_surplus}) becomes:
\begin{align*}
 \pi(x_{0,j};\mathbb{\hat{E}}[\boldsymbol{x}_{-j}],T) = 0 
    & \iff x_{0,j} = - \left(\sum_{n \neq j} \mathbb{\hat{E}}[x_n]+T\right) = -\left[(N-1) \mathbb{\hat{E}}[x^\star(z)] +T\right] \quad \forall j.
\end{align*}
Substituting for the reservation surplus in the equilibrium surplus offered function (\ref{eq_x}) yields:
$$x^\star(z) = -\left(1-f(z)\right) \left[(N-1) \mathbb{\hat{E}}[x^\star(\tilde{z})]+T\right]+ f(z)\mathbb{E}_{\max\{\tilde{z}\}|\tilde{z}\leq z}[A\left(s^\star(\tilde{z});\tilde{z}\right)].$$
Comparing this expression with the constrained-efficient surplus offered (\ref{eff_surplus_static}), we conclude that the two coincide if and only if $T=T^\star \equiv (N-1) \mathbb{\hat{E}}[A(s^\star(\tilde{z});\tilde{z})-x^\star(\tilde{z})]$.
\end{proof}

\begin{proof}[Proof of Proposition \ref{prop:norm_dynamic}]
    Following the same steps of the proof of Proposition \ref{prop:norm_static}, the reservation surplus (\ref{res_surplus_dyn}) becomes:
    $$V(x_{0,j};\mathbb{\hat{E}}[\boldsymbol{x}_{-j}],T)=S \iff x_0 = -\left[(\alpha(N)N-1)\mathbb{\hat{E}}[x^\star(z)]+\alpha(N)T\right] \quad \forall j.$$
    Substituting for the reservation surplus in the equilibrium surplus offered function (\ref{eq_x}) yields:
$$x^\star(z) = -\left(1-f(z)\right)\left[(\alpha(N)N-1)\mathbb{\hat{E}}[x^\star(z)]+\alpha(N)T\right]+ f(z)\mathbb{E}_{\max\{\tilde{z}\}|\tilde{z}\leq z}[A\left(s^\star(\tilde{z});\tilde{z}\right)].$$
 Comparing this expression with the constrained-efficient surplus offered (\ref{eff_surplus_dynamic}), we conclude that the two coincide if and only if $T=T^\star_d \equiv \frac{(\mathcal{M}(N)N-1) \mathbb{\hat{E}}[A(s^\star(\tilde{z});\tilde{z})]-(\alpha(N)N-1)\mathbb{\hat{E}}[x^\star(\tilde{z})]}{\alpha(N)}$.
\end{proof}

\begin{proof}[Proof of Proposition \ref{prop:stds}]
From the proof of Theorem \ref{thm:eff_dynam}, we observe that the marginal welfare effect of specialization at the equilibrium solution is negative for any productivity level if the production process is complex. Formally, $N>1 \implies \frac{\partial \mathcal{W}}{\partial s(z)} \big|_{s(z)=s^\star(z)} <0 \, \forall z$. It follows that constraining the specialization of the highest-specialization firms is welfare-improving.
\end{proof}

\begin{proof}[Proof of Proposition \ref{prop:bargaining}]
Since the efficient specialization is independent of the surplus-sharing rule,
the result follows directly from applying Theorem \ref{thm:over-specialization} to the equilibrium specialization function of the bargaining model (\ref{eq_s_barg2}).
\end{proof}

\begin{proof}[Proof of Lemma \ref{lemma:ntwrk>search}]

    Let $N^\star(f,\delta,N,\beta) \equiv \frac{1}{\widetilde{\mathcal{M}}(N)}= \left(1-f^N\frac{\beta(1-\delta)}{1+\beta(1-\delta)}\frac{\chi_2(f;N)}{\chi_1(f;N)}\right)^{-1}$. If $N-N^\star(f,\delta,N,\beta)>0$, then $N\widetilde{\mathcal{M}}(N)-1>0, \quad \forall N>1$. We aim to bound $N^\star(f,\delta,N,\beta)$ from above for any combination of the structural parameters \textit{and} the endogenous input finding probability in its domain, i.e., $f \in (0,1)$. Spanning the entire domain of $f$ allows us to make a global statement -- independent of specific functional form assumptions. First, we reduce the parameter space by noting that $\frac{\beta(1-\delta)}{1+\beta(1-\delta)} \in \left(0,\frac{1}{2}\right)$. Then, we compute numerically $N-\bar{N}^\star(f,\delta,N) \equiv N-\left(1-f^N\frac{1}{2}\frac{\chi_2(f;N)}{\chi_1(f;N)}\right)^{-1}$ for every $(\delta,f,N)$ combination. Results are displayed in Figure (\ref{fig:over-spec_links2}). Since the minimum value of $N-\bar{N}^\star(f,\delta,N)$ detected is strictly positive, we conclude that $\widetilde{\mathcal{M}}(N) \in \left(\frac{1}{2},1\right]$.
\begin{figure}[ht!]
\begin{center}
\caption{Numerical check Lemma \ref{lemma:ntwrk>search}}
\label{fig:over-spec_links2}
\includegraphics[width=0.6\textwidth,keepaspectratio]{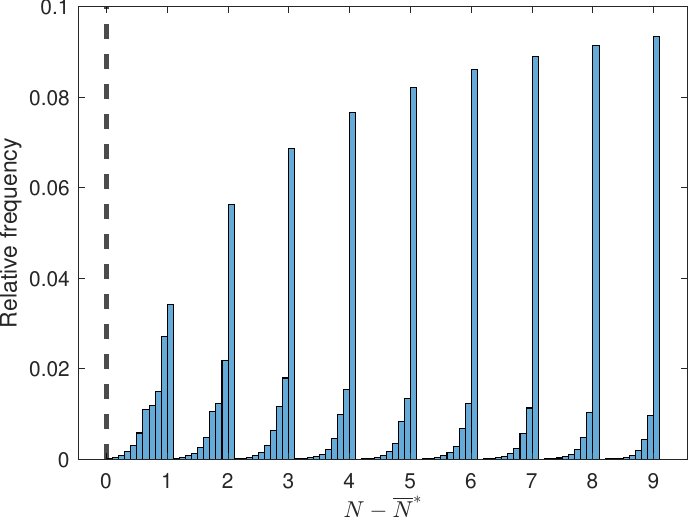}
\begin{minipage}{0.7\textwidth} \scriptsize{} \textit{Note}:  The histogram report the relative frequency of the values of $N-\bar{N}^\star(f,\delta,N)$ for any combination of $\delta, f \in \{0.001,0.002,0.003,\dots,0.999\}$, and $N \in \{2,3,4,\dots,10\}$.
\end{minipage} 
\end{center}
\end{figure}
\end{proof}

\begin{proof}[Proof of Lemma \ref{lemma:ismonotonic}]
We would like to characterize the asymptotic behavior of 
$F(f;\delta,N,\beta)$ as $N$ approaches infinity. 

First, we compute the limits of the multipliers $\chi_1(N), \chi_2(N),$ and $\tilde{\alpha}(N)$:
\begin{align*}
    \lim_{N \rightarrow +\infty} \chi_1(N) &= 1, \\
    \lim_{N \rightarrow +\infty} \chi_2(N) &= \frac{1+\beta(1-\delta)}{1-\delta}, \\
    \lim_{N \rightarrow +\infty} \tilde{\alpha}(N) &= 1.
\end{align*}
Then, we use them to compute the limit of the over-specialization function:
\begin{align*}
    & \lim_{N \rightarrow +\infty}  F(f;\delta,N,\beta) \\
    = & \lim_{N \rightarrow +\infty}   \underbrace{\chi_1(N) \left[\left(1-\frac{\beta(1-\delta)}{1+\beta(1-\delta)}f^N \frac{\chi_2(N)}{\chi_1(N)}\right)N-1 \right]}_{\rightarrow + \infty}-\underbrace{\frac{f^2}{ \frac{f}{\tilde{\alpha}(N)N-1}-(1-f)\ln{1-f}}}_{\rightarrow \frac{f^2}{-(1-f)\ln{1-f}}}  \\
    = & +\infty.
\end{align*}
Hence, $\exists \underline N>1$
such that $F(f;\delta,N,\beta) > 0 \quad \forall N > \underline N$.

\end{proof}

\begin{proof}[Proof of Proposition \ref{prop:eff_dynam_ss}]
Following the same steps as in the proof of Theorem \ref{thm:eff_dynam} yields:
\begin{align*}
   \pd{\Wc}{s(z)}\bigg|_{s(z)=s^\star(z)}= \  & \theta \lambda \Dc(z;N) \bigg[A^\prime(s^\star(z))+\frac{\phi^\prime(s^\star(z))}{\phi(s^\star(z))}\bigg(A(s^\star(z);z)  - f(z)\mathbb{E}_{\max\{\tilde{z}\}|\tilde{z} \leq z}[A(s^\star(\tilde{z});\tilde{z})]  \\ \nonumber
    & 
    + (N-1)\left(1-f(z)\right)\mathbb{\hat{E}}[A(s^\star(z);z)] -\frac{1-\delta}{\delta+(1-\delta)f^N}f^N N \left(1-f(z)\right)\mathbb{\hat{E}}[A(s^\star(z);z)]\bigg)\bigg] 
    \\ \nonumber
    & - \frac{\psi \  q^\prime(s^\star(z))}{1-N m \bar{q}}\\[.2cm]
    = \  & \theta \lambda \Dc(z;N)\frac{\phi^\prime(s^\star(z))}{\phi(s^\star(z))} \left(1-f(z)\right) \left(\left[N\left(1-\frac{1-\delta}{\delta+(1-\delta)f^N}f^N\right)-1\right]\mathbb{\hat{E}}[A(s^\star(z);z)]+x_0\right) \\[.2cm] \nonumber
    \lesseqgtr \ & 0.
\end{align*}
Notice that the multiplier $1-\frac{1-\delta}{\delta+(1-\delta)f^N}f^N$ equals the share of searching final producers $\mu(f;N)$. The equilibrium is constrained-efficient if and only if $(N\mu(f;N)-1)\mathbb{\hat{E}}[A(s^\star(z);z)]+x_0$. If $N>\frac{1}{\mu(f;N)}$, the equilibrium exhibits over-specialization.

Substituting for the equilibrium reservation surplus, the constrained efficiency condition reads:
\begin{align*}
    (N \mu(f;N)-1)\mathbb{\hat{E}}[A(s^\star(z);z)]-\left(N-1\right)\mathbb{\hat{E}}[x^\star(z)]=0.
\end{align*}
If the left-hand side is positive (negative), the equilibrium exhibits over-(under-)specialization.
Notice that $0<\mathbb{\hat{E}}[x^\star(z)] = \frac{\mathbb{\hat{E}} \left[f(z)\mathbb{E}_{\max{\tilde{z}}|\tilde{z}<z} [A(s^\star(\tilde{z});\tilde{z})]\right]}{\Xi(N)} < \frac{\mathbb{\hat{E}} \left[A(s^\star(\tilde{z});\tilde{z})\right]}{\Xi(N)}$, where $\Xi(N) \equiv 1-\frac{1-f}{f}\ln{1-f}\left(N-1\right)$ follows from the equilibrium surplus offered in symmetric equilibrium.

Suppose that $\Xi(N)\mu(N)>1$. Then, the equilibrium exhibits over-specialization if $N>\frac{\Xi(N)-1}{\Xi(N)\mu(N)-1}$, where $\frac{\Xi(N)-1}{\Xi(N)\mu(N)-1}>1$.
\end{proof}

\begin{proof}[Proof of Proposition \ref{prop:efficiency_endog_N}]
    Rearranging equations (\ref{eq_complex}) and (\ref{eff_complex}) yield $f(\boldsymbol{s}^\star)^{N^\star}=\exp\{-1\}$ and $f(\boldsymbol{\Sc})^{\Nc}=\exp\{-(1-\frac{\tilde{w} \ell}{y})\}$. 
    Since product design costs are strictly positive, the equilibrium sourcing capacity is lower than in the constrained-efficient allocation, i.e., $f(\boldsymbol{s}^\star)^{N^\star}<f(\boldsymbol{\Sc})^{\Nc}$. 
    
    Suppose the specialization function is given. Then, the input finding probability is fixed. Hence, the same equations imply that the equilibrium complexity is higher than the constrained-efficient complexity. Suppose that complexity is given. From Theorem \ref{thm:over-specialization}, equilibrium specialization is higher than the constrained-efficient specialization. 
\end{proof}

    \begin{proof}[Proof of Proposition \ref{prop:dyn_endog_N}]
    The marginal welfare effect of equilibrium complexity is:
\begin{align*}
   \pd{\Wc}{N}\bigg|_{N=N^\star}= \  & -\frac{\delta}{[\delta+(1-\delta)f^N]^2}(1-\delta)f^N\ln{f} \left(\frac{f^N}{\delta}N \mathbb{\hat{E}}[A(s(z);z)]\right)+\mu\frac{f^N}{\delta}N \mathbb{\hat{E}}[A(s(z);z)] \\ \nonumber
   & (N\ln{f}+1)-\frac{\psi}{1-Nm\bar{q}}m\bar{q} \\[.2cm]
    = \  &  \mu\frac{f^N}{\delta} \mathbb{\hat{E}}[A(s(z);z)] \left(-\frac{(1-\delta)f^N N\ln{f}}{\delta+(1-\delta)f^N}+N\ln{f}+1\right)-\frac{\psi}{1-Nm\bar{q}}m\bar{q}\\[.2cm] \nonumber
     = \  &  \mu\frac{f^N}{\delta} \mathbb{\hat{E}}[A(s(z);z)] \left(\mu N\ln{f}+1-\frac{\tilde{w}\ell}{y}\right)\\[.2cm]
     \nonumber
     = \  &  \mu(f;N^\star)\frac{f^N}{\delta} \mathbb{\hat{E}}[A(s^\star(z);z)] \left(1-\mu(f;N^\star)-LS^\star\right)\\[.2cm]
    \lesseqgtr \ & 0,
\end{align*}
%
where the last equality follows from substituting the equilibrium complexity condition (\ref{eq_complex}).
Hence, complexity is efficient for given specialization if and only if $\mu(f;N^\star)=1-LS^\star$. 
    From Proposition \ref{prop:eff_dynam_ss}, specialization is efficient for given complexity if and only if $(N\mu(f;N^\star)-1)\mathbb{\hat{E}}[A(s^\star(z))]+x_0$. If the two efficiency conditions hold simultaneously, the endogenous complexity equilibrium is therefore efficient. \\
    Since $f(\boldsymbol{s}^\star)^{N^\star}=\exp\{-1\}$ and $f(\boldsymbol{\Sc})^{\Nc}=\exp\left\{-\frac{1}{\mu}(1-{LS})\right\}$, the equilibrium features under-resilience if and only if ${LS}^\star>1-\mu^\star$.
\end{proof}

\begin{proof}[Proof of Proposition \ref{prop:robustness_efficiency}]
  The marginal welfare effect of equilibrium robustness is:
\begin{align*}
   \pd{\Wc}{r}\bigg|_{r=r^\star}= \  & -\frac{f^N}{[\delta+(1-\delta)f^N]^2}\frac{f^N}{\delta}N\left( \mathbb{\hat{E}}[A(s(z);z)]-\kappa(r)\right)\delta^\prime(r)-\mu\frac{f^N}{\delta^2}N \left( \mathbb{\hat{E}}[A(s(z);z)]-\kappa(r)\right)\delta^\prime(r) \\ \nonumber
   & -\mu\frac{f^N}{\delta}N \kappa^\prime(r) \\[.2cm]
    = \  &  \mu\frac{f^N}{\delta^2}N \left( \mathbb{\hat{E}}[A(s(z);z)]-\kappa(r)\right)\left(\frac{f^N}{\delta+(1-\delta)f^N}-1\right) \delta^\prime(r)-\mu\frac{f^N}{\delta}N \kappa^\prime(r)\\[.2cm] \nonumber
     = \  & -\mu\frac{f^N}{\delta}N \left[\mu(1-f^N)\frac{\delta^\prime(r)}{\delta(r)}\left( \mathbb{\hat{E}}[A(s(z);z)]-\kappa(r)\right)+\kappa^\prime(r)\right] \\[.2cm]
     \nonumber
     = \  & -\mu(f,r^\star;N)\frac{f^N}{\delta(r^\star)}N \frac{\delta^\prime(r^\star)}{\delta(r^\star)}\left[\mu(f,r^\star;N)(1-f^N)-\tilde{\zeta}\right] \\[.2cm]
    \lesseqgtr \ & 0,
\end{align*}
where $\tilde{\zeta} \equiv \frac{\mathbb{\hat{E}}[x^\star(z)-\kappa(r^\star)]}{\mathbb{\hat{E}}[A(s^\star(z);z)-\kappa(r^\star)]}$ is the aggregate surplus share accruing to final producers, the second equality follows from substituting for $\mu\frac{f^N}{\delta}=\frac{f^N}{\delta+(1-\delta) f^{N}}$, and the last equality follows from substituting the equilibrium robustness condition (\ref{eq_r}).

For given aggregate specialization, constrained-efficient robustness obtains if and only if: $\tilde{\zeta} = (1-f^N)\mu(f,r^\star;N)$. 
Equilibrium specialization is constrained-efficient if and only if: $(N\mu(f,r^\star;N)-1)\mathbb{\hat{E}}[A(s^\star(z);z)-\kappa(r^\star)]+[x_0-\kappa(r^\star)]=0$. The derivation follows from the proof of Proposition \ref{prop:eff_dynam_ss} upon defining match surplus and surplus offered as net of robustness costs.
\end{proof}

\begin{proof}[Proof of Proposition \ref{prop:eff_dynam_nas}]
The marginal welfare effect of equilibrium specialization reads:
\begin{align*}
   \pd{\Wc}{s(z)}\bigg|_{s(z)=s^\star(z)}= \  & \theta \lambda \Dc(z;N) \bigg[A^\prime(s^\star(z))+\frac{\phi^\prime(s^\star(z))}{\phi(s^\star(z))}\bigg(A(s^\star(z);z)  - f(z)\mathbb{E}_{\max\{\tilde{z}\}|\tilde{z} \leq z}[A(s^\star(\tilde{z});\tilde{z})] \\ \nonumber
    & 
      + (N-1)\left(1-f(z)\right)\mathbb{\hat{E}}[A(s^\star(z);z)]  -\frac{1-\delta}{\delta+(1-\delta)f^N}f^N N 
    \\ \nonumber
    & \left(1-f(z)\right)\mathbb{\hat{E}}[A(s^\star(z);z)]\bigg)\bigg]- \frac{\psi \  q^\prime(s^\star(z))}{1-N m \bar{q}}+(N-1)\frac{\theta}{\delta}f^{N-1}\mathbb{E}_{\max\{\tilde{z}\}}\left[\frac{\partial \tilde{\boldsymbol{A}}(\boldsymbol{s}^\star_{-j};\tilde{z})}{\partial  s_j(\tilde{z})}\right]\\[.2cm]
    = \  & \theta \lambda \Dc(z;N)\frac{\phi^\prime(s^\star(z))}{\phi(s^\star(z))} \left(1-f(z)\right) \bigg(\left(\mu(f;N) N-1\right)\mathbb{\hat{E}}[A(s^\star(z);z)]+x_0 \\
    & +\frac{(N-1)}{\lambda \phi^\prime(s^\star(z))(1-f)}\mathbb{E}_{\max\{\tilde{z}\}}\left[\frac{\partial \tilde{\boldsymbol{A}}(\boldsymbol{s}^\star_{-j};\tilde{z})}{\partial  s_j(\tilde{z})}\right]\bigg) \\[.2cm] \nonumber
    \lesseqgtr \ & 0,
\end{align*}
where the multiplier on the last term (production externality) follows from $e^{-\lambda\hat{\phi}(z,\bar{z})}(1-f(z))=1-f$.
\end{proof}

\begin{proof}[Proof of Proposition \ref{prop:eff_dynam_fec}]
By rearranging (\ref{eff_x_fec}), it follows that the equilibrium is constrained-efficient if and only if:
\begin{align*}
       \theta \lambda f^{N-1}\left(N-1\right)\mathbb{\hat{E}}[A(s(x^\star(z));z)-x^\star(z)]= & \ \frac{\psi}{1-N m \bar q}\frac{N \bar{q}}{f^\prime(s^\star(z))} 
       \frac{1}{m}\bigg(\frac{\partial m(s^\star(z),x^\star(z))}{\partial s(z)}+ \\[.1cm]
       & \ \frac{\partial m(s^\star(z),x^\star(z))/\partial x(z)}{\partial s^\star(z)/\partial x(z)}\bigg) \quad \forall z \in [\underline{z},\bar{z}].
   \end{align*}
Since the left-hand side is constant across firms and the right-hand side is firm-specific, the economy with free entry is always inefficient.
\end{proof}

\end{document}